\theoremstyle{plain}
\newtheorem{theorem}{Theorem}
\newtheorem{lemma}[theorem]{Lemma}
\newtheorem{claim}[theorem]{Claim}
\newtheorem{corollary}[theorem]{Corollary}
\newtheorem{proposition}[theorem]{Proposition}
\theoremstyle{definition}
\newtheorem{definition}[theorem]{Definition}
\theoremstyle{remark}
\newtheorem{observation}[theorem]{Observation}
\newtheorem*{claim*}{Claim}
\newtheorem*{note*}{Note}
\newenvironment{claimproof}{%
  \begin{proof}[of the Claim]%
}{%
  \end{proof}%
}
\newenvironment{romanenumerate}{\begin{enumerate}[(i)]}{\end{enumerate}}
\let\proofsubparagraph\paragraph
\title{Parameterized Spanning Tree Congestion\thanks{An extended abstract of this work was presented at the
50th International Symposium on Mathematical Foundations of Computer Science (MFCS 2025)~\citep{mfcs/LampisMNOV025}.}}
\author{
Michael Lampis\affiliationmark{1}\thanks{Supported by ANR project ANR-21-CE48-0022 (S-EX-AP-PE-AL).}
\and Valia Mitsou\affiliationmark{2}
\and Edouard Nemery\affiliationmark{1}\\
\and Yota Otachi\affiliationmark{3}\thanks{Supported by JSPS KAKENHI Grant Numbers JP21K11752, JP22H00513, and JP24H00697.}
\and Manolis Vasilakis\affiliationmark{1}
\and Daniel Vaz\affiliationmark{4}}
\affiliation{
Universit\'{e} Paris-Dauphine, PSL University, CNRS, LAMSADE, France\\
Universit\'{e} Paris Cit\'{e}, CNRS, IRIF, France\\
Nagoya University, Japan\\
LIGM, Universit\'{e} Gustave Eiffel, CNRS, ESIEE Paris, France}
\keywords{Parameterized complexity, Treewidth, Graph width parameters}
\begin{document}


\maketitle

\begin{abstract}
In this paper we study the \textsc{Spanning Tree Congestion} problem, where we are given an undirected graph $G=(V,E)$
and are asked to find a spanning tree $T$ of minimum \emph{congestion}.
Here, the congestion of a tree $T$ is the maximum of the edge congestion over $e\in T$, where the edge congestion of $e$ is the number of edges $uv\in E$ such
that the (unique) path from $u$ to $v$ in $T$ traverses $e$. We consider this
well-studied NP-hard problem from the point of view of (structural)
parameterized complexity and obtain the following results:

\begin{itemize}

\item We resolve a natural open problem by showing that \textsc{Spanning Tree Congestion} is not FPT
parameterized by treewidth (under standard assumptions). More strongly, we
present a generic reduction which applies to (almost) any parameter of the form
``vertex-deletion distance to class $\mathcal{C}$'', thus obtaining
W[1]-hardness for more restricted parameters, including
tree-depth plus feedback vertex number, or incomparable to treewidth, such as twin
cover.  Via a slight tweak of the same reduction we also show that the problem
is NP-complete on interval graphs of modular-width $4$.

\item Even though it is known that \textsc{Spanning Tree Congestion} remains NP-hard on instances with only
one vertex of unbounded degree, it is currently open whether the problem
remains hard on bounded-degree graphs. We resolve this question by showing
NP-hardness on graphs of maximum degree $8$.

\item Complementing the problem's W[1]-hardness for treewidth, we formulate an
algorithm that runs in time roughly ${(k+w)}^{\mathcal{O}(w)}$, where $k$ is the desired
congestion and $w$ the treewidth, improving a previous argument for parameter
$k+w$ that was based on Courcelle's theorem.  This explicit algorithm pays off
in two ways: it allows us to obtain an FPT approximation scheme for parameter
treewidth, that is, a $(1+\varepsilon)$-approximation running in time roughly
${(w/\varepsilon)}^{\mathcal{O}(w)}$; and it leads to an exact FPT algorithm for parameter
clique-width$+k$ via a Win/Win argument.

\item Finally, motivated by the problem's hardness for most standard structural
parameters, we present FPT algorithms for several more restricted cases,
namely, for the parameters vertex-deletion distance to clique; vertex
integrity; and feedback edge number, in the latter case also achieving a
single-exponential running time dependence on the parameter.

\end{itemize}

\end{abstract}

\clearpage

\section{Introduction}\label{sec:introduction}

One of the most well-studied types of problems in network optimization involves
finding, for a given graph $G$, a spanning tree of $G$ that optimizes a certain
objective. In this paper we focus on a well-known problem of this type called
\STC.  The motivation of this problem can be summarized as follows: Every edge
$e$ of a spanning tree $T$ is selected with the goal of maintaining
connectivity between the two parts of the graph given by the two components of
$T-e$.  We can then think of every other edge $e'$ with endpoints in both
components of $T-e$ as being ``simulated'' by a path in $T$ that traverses $e$;
hence, the more such edges exist, the more $e$ is used and ``congested''.  Our
optimization goal, then, is to find a tree where all edges have congestion as
low as possible, because in such a tree each selected edge is responsible for
simulating only a small number of non-selected edges and therefore the tree can
be thought of as a sparse approximate representation of the original graph.
Equivalently, for a spanning tree $T$ of $G$, we say that the \emph{detour} of
an edge $\{u, v\} \in E(G)$ in $T$ is the unique $u$--$v$ path in~$T$.  The
number of detours that traverse an edge in $T$ constitutes its congestion,
while the congestion of $T$ is defined as the maximum over the congestion of
all of its edges.%
\footnote{This has also been referred to as the \emph{edge remember number} of $G$
relative to $T$ in the literature~\citep[Section~11]{tcs/Bodlaender98}.}
The \emph{spanning tree congestion} of $G$, denoted by $\stc(G)$, is the minimum congestion over all of its spanning trees.
In this paper we focus on the decision version of the problem, which asks, given $G$ and an integer $k$, whether $\stc(G) \le k$;
we refer to this problem as \STC, as we do not consider the optimization variant in our work.

Spanning trees of low congestion are a natural notion that is well-studied both
from the combinatorial and the algorithmic point of view. Unsurprisingly, {\STC}
is NP-complete~\citep[Section~5.6]{phd_thesis}. It therefore makes sense to
study the parameterized complexity of this problem, as parameterized complexity
is one of the main tools for dealing with computational intractability.\footnote{Throughout the paper we assume that the reader is familiar with the
basics of parameterized complexity, as given in standard
textbooks~\citep{books/CyganFKLMPPS15}.} The most natural parameter one could
consider is perhaps the objective value $k$, but unfortunately the problem is
known to be NP-hard for all fixed $k\ge
5$~\citep{algorithmica/BodlaenderFGOL12,algorithmica/LuuC24}. This motivates us to
focus on \emph{structural} graph parameters, where much less is currently
known. Indeed, it is so far open whether {\STC} is fixed-parameter tractable for
treewidth, which is the most widely studied parameter of this type (this is
mentioned as an open problem by~\cite{birthday/Otachi20}).  What is known,
however, is that the problem is FPT when parameterized by both treewidth and
$k$~\citep{algorithmica/BodlaenderFGOL12} and that the problem is NP-hard on
graphs of clique-width at most $3$ (implied by the NP-hardness on chain
graphs~\citep{jgaa/OkamotoOUU11}).

\begin{figure}[ht]
  \centering
      \begin{tikzpicture}[every node/.style={thick, align=center}, scale=0.9]
        \small

        \node[fpt] (vc) at (0,0) {$\vc$};
        \node[fpt,above = 1 of vc] (vi) {$\vi$};
        \node[w1,above = 2 of vc] (td) {$\td$};
        \node[w1,above = 2 of vi] (tw) {$\tw$};
        \node[nph,above = 1 of tw] (cw) {$\cw$};
        \draw[thick,<-] (cw) -- (tw);
        \draw[thick,<-] (tw) -- (td);
        \draw[thick,<-] (td) -- (vi);
        \draw[thick,<-] (vi) -- (vc);

        \node[fpt,left = 1 of vc] (fes) {$\fes$};
        \node[w1,above = 1 of fes] (fvs) {$\fvs$};
        \draw[thick,<-] (fvs) -- (fes);
        \draw[thick,<-] (tw) edge [bend right] (fvs);
        \draw[thick,<-] (fvs)--(vc);

        \node[open,right = 0.5 of vi] (nd) {$\nd$};
        \node[nph,above = 2 of nd] (mw) {$\mw$};
        \draw[thick,<-] (nd) edge [bend left] (vc);
        \draw[thick,<-] (mw) -- (nd);
        \draw[thick,<-] (cw) edge [bend left] (mw);

        \node[w1,right = 0.5 of nd] (tc) {$\tc$};
        \node[nph,above = 2 of tc] (srb) {$\srb$};
        \draw[thick,<-] (tc) edge [bend left] (vc);
        \draw[thick,<-] (mw) -- (tc);
        \draw[thick,<-] (cw) edge [bend left] (srb);
        \draw[thick,<-] (srb) edge [bend left] (td);
        \draw[thick,<-] (srb) edge [bend left] (nd);

        \node[fpt,right = 2.5 of vc] (dtc) {$\dtc$};
        \node[w1,above = 2 of dtc] (cvd) {$\cvd$};
        \draw[thick,<-] (srb) -- (cvd);
        \draw[thick,<-] (nd) -- (dtc);
        \draw[thick,<-] (cvd) -- (dtc);
        \draw[thick,<-] (cvd) -- (tc);
      \end{tikzpicture}
      \caption{
          Our results and hierarchy of the related graph parameters
          (see in \cref{sec:preliminaries} for their definitions).
          For any graph, if the parameter at the tail of an arrow is a constant,
          that is also the case for the one at its head.
          Green indicates that the STC problem is FPT (\cref{thm:distance_to_clique,thm:vi,thm:fes}),
          orange W[1]-hardness (\cref{thm:disjoint-union}),
          and red para-NP-hardness (\cref{thm:hardness:modularwidth}).
      }
      \label{fig:parameters}
  \end{figure}

\paragraph*{Our Contribution.} Our aim in this paper is to present a
clarified and much more detailed picture of how the complexity of {\STC} depends
on treewidth and other notions of graph structure (see \cref{fig:parameters}
for a synopsis of our results).

We begin our work by considering the natural open problem we mentioned above,
namely whether {\STC} is FPT parameterized by treewidth. We answer this
question in the negative and indeed prove something much stronger: Let
$\mathcal{C}$ be any class of graphs that satisfies the (very mild) requirement
that for each integer $i$ there exists a connected graph in $\mathcal{C}$ that
has $i$ vertices. Then, for any such class $\mathcal{C}$, {\STC} is W[1]-hard
parameterized by the vertex deletion distance to a disjoint union of graphs
belonging to $\mathcal{C}$. As a corollary, if we set $\mathcal{C}$ to be the
class of all stars, {\STC} is shown to be W[1]-hard for parameter
vertex-deletion distance to star-forest, hence also for parameter tree-depth
plus feedback vertex number (and consequently also for treewidth). Alternatively,
by setting $\mathcal{C}$ to be the class of all cliques, our proof establishes
W[1]-hardness parameterized by the cluster vertex deletion number, and more
strongly by the twin-cover of the input graph~\citep{dmtcs/Ganian15}. With a
couple of modifications, we then show in \cref{thm:hardness:modularwidth} that
{\STC} remains NP-complete even on graphs of modular-width at most $4$,
linear clique-width $3$, and shrub-depth $2$, improving over the previously
mentioned hardness result by~\cite{jgaa/OkamotoOUU11} for clique-width $3$.
As a matter of fact, the graphs resulting from our
reduction are interval graphs, thus settling the open question posed by~\cite{birthday/Otachi20} on the polynomial-time solvability of {\STC} on
interval graphs in the negative.%
\footnote{This NP-hardness result is in conflict with the polynomial-time
algorithm for interval graphs claimed by~\cite{dam/LinL25a};
a subsequent private communication with the authors indicated that,
in light of \cref{thm:hardness:modularwidth}, they think that their algorithm is incorrect.}

Moving on, we consider the tractability of the problem in graphs of constant
degree.  All previous NP-hardness
results~\citep{algorithmica/BodlaenderFGOL12,algorithmica/LuuC24} require at least one
vertex of unbounded degree. However, assuming that the graph has bounded degree
seems potentially algorithmically useful, as recent work by~\cite{iwoca/Kolman24,stacs/Kolman25}
shows that instances of maximum degree $\Delta$
are amenable to a polynomial-time approximation algorithm of ratio $\bO(\Delta \cdot \log^{3/2} n)$
(this is non-trivial, as the best known ratio on general graphs is $n/2$,
trivially achieved by any spanning tree~\citep{birthday/Otachi20}).
Our next result is to answer an open question posed by~\cite{iwoca/Kolman24} and
show that the problem in fact remains NP-hard even on graphs of degree at most
$8$ (\cref{thm:maxdeg}).  To this end, we make use of a novel gadget based on
grids, simulating the \emph{double-weighted edges} introduced by~\cite{algorithmica/LuuC24}.

Coming back to treewidth, we recall that~\cite{algorithmica/BodlaenderFGOL12}
showed that, when $k$ is
part of the parameter, {\STC} is expressible in MSO$_2$ logic, thus due to
Courcelle's theorem~\citep{iandc/Courcelle90} fixed-parameter tractable by
$\tw+k$. We improve upon this by providing an explicit FPT algorithm of running
time ${(\tw+k)}^{\bO(\tw)} n^{\bO(1)}$. In addition to providing a concrete
reasonable upper-bound on the running time (which cannot be done with
Courcelle's theorem), this explicit algorithm allows us to obtain two further
interesting extensions. First, using a technique introduced by~\cite{icalp/Lampis14},
we develop an efficient FPT approximation scheme
(FPT-AS) when parameterized solely by $\tw$, that is, a $(1 +
\varepsilon)$-approximate algorithm running in time
${(\tw/\varepsilon)}^{\bO(\tw)} n^{\bO(1)}$;
notice that an efficient FPT-AS is the best we can
hope for in this setting, given the W[1]-hardness following
from \cref{thm:disjoint-union}. Second, using a Win/Win argument based on a
result of~\cite{wg/GurskiW00}, we lift our algorithm to also
show an explicit FPT algorithm for the more general parameter $\cw + k$, where
$\cw$ denotes the clique-width of the input graph.

Finally, given all the previously mentioned hardness results, we aim next to
determine which structural parameters \emph{do} render the problem
fixed-parameter tractable.  As a consequence of \cref{thm:disjoint-union}, the
problem remains intractable even on very restricted (dense \emph{and} sparse)
graph classes; we must therefore focus on parameters that evade this hardness
result. We consider three cases: First, the parameter ``distance to clique'' is
not covered by \cref{thm:disjoint-union} because the graph obtained after
removing the deletion set has one component; we show in
\cref{thm:distance_to_clique} that {\STC} is FPT in this case. Second, the
parameter vertex integrity is not covered by \cref{thm:disjoint-union}, as all
components of the graph obtained after removing the deletion set have bounded
size; we show in \cref{thm:vi} that {\STC} is FPT in this case as well, via a
reduction to an ILP with an FPT number of variables. Third, we consider the
parameter feedback edge number, which also falls outside the scope of
\cref{thm:disjoint-union}, and obtain a linear kernel, which leads to
an FPT algorithm with single-exponential parameter dependence for this case.

\paragraph*{Related Work.}
{\STC} was formally introduced by~\cite{dm/Ostrovskii04},
though it had also been previously studied under a different name~\citep{mst/Simonson87}.
There is a plethora of graph-theoretical results in the literature~\citep{dmgt/CastejonO09,dm/Hruska08,dmgt/KozawaO11,dm/KozawaOY09,dm/Law09,dm/LowensteinRR09,dm/Ostrovskii10},
as well as some algorithmic ones~\citep{algorithmica/BodlaenderFGOL12,dm/BodlaenderKMO11,icalp/ChandranCI18,jgaa/OkamotoOUU11}.
See also the survey by~\cite{birthday/Otachi20}.
{\STC} is known to be polynomial-time solvable if $k \le 3$~\citep{algorithmica/BodlaenderFGOL12},
and NP-hard for all fixed $k \ge 5$~\citep{algorithmica/LuuC24}; the case $k=4$ remains open.
\cite{jgaa/OkamotoOUU11} have presented an algorithm running in time $2^n n^{\bO(1)}$,
improving over the brute-force one.
Regarding specific graph classes,
it is known to be polynomial-time solvable for outerplanar graphs~\citep{dm/BodlaenderKMO11},
two-dimensional Hamming graphs~\citep{dmgt/KozawaO11},
complete $k$-partite graphs, and two-dimensional tori~\citep{dm/KozawaOY09}.
On the other hand, it is NP-hard for planar,
split, and chain graphs~\citep{algorithmica/BodlaenderFGOL12,jgaa/OkamotoOUU11}, with the latter result implying NP-hardness for graphs of
clique-width at most $3$.
For fixed $k$, the problem is expressible in MSO$_2$ logic~\citep{algorithmica/BodlaenderFGOL12},
thus due to standard metatheorems~\citep{iandc/Courcelle90} it is FPT by
$\tw+k$. \cite{dm/KozawaOY09} showed a
combinatorial bound (then improved in~\citep{algorithmica/BodlaenderFGOL12})
which proves that for all graphs $G$, $\tw(G)=\bO(\stc(G)\Delta(G))$, where
$\Delta$ denotes the maximum degree of the input graph.  Combining these,
\cite{algorithmica/BodlaenderFGOL12} show that {\STC} is FPT
by $\Delta + k$, and that it is solvable in polynomial time for fixed
$k$ on apex-minor-free graphs.  There are also some results regarding the
problem's
approximability~\citep{algorithmica/BodlaenderFGOL12,iwoca/Kolman24,stacs/Kolman25,algorithmica/LuuC24}.
After the publication of the conference version of this paper~\citep{mfcs/LampisMNOV025},
subsequent works further strengthened some of its hardness results:
\cite{arxiv/Otachi26} showed NP-hardness on proper interval graphs, while
\cite{arxiv/AtaligCDKLSZ26} showed NP-hardness on subcubic graphs.

Finding a spanning tree $T$ of a connected graph
such that $T$ adheres to some constraint,
i.e., $T \in \mathcal{T}$ for some family of trees $\mathcal{T}$,
is an interesting combinatorial question in its own right,
that oftentimes finds applications to other algorithmic problems.
Examples of studied properties include trees of maximum number of
branch or leaf vertices~\citep{siamdm/BonsmaZ11,siamdm/DeBiasioL19,mst/FellowsLMMRS09,icalp/GarganoHSV02,mfcs/GarganoR23,siamdm/KleitmanW91,algorithmica/Lampis12},
of minimum maximum degree~\citep{arxiv/BojikianFGHS25,jacm/SinghL15},
and others~\citep{siamcomp/Agarwal92,dam/BercziKKYY25,jcss/BergougnouxBFGRS25,algorithmica/HalldorssonKMT21,NageleZ19}.
One such important variant of {\STC} is the \textsc{Tree Spanner} problem~\citep{siamdm/CaiC95},
where one asks for a spanning tree of minimum stretch.
The latter has been extensively studied~\citep{siamcomp/AbrahamN19,swat/BorradaileCEMN20,jcss/DraganFG11,siamcomp/ElkinEST08,siamcomp/EmekP08,dam/FeketeK01,ipl/FominGL11},
and the two problems are known to be tightly connected, especially on planar graphs~\citep{algorithmica/LuuC24,birthday/Otachi20}.

Lastly, a closely-related structural graph parameter is the so-called
\emph{edge-cut width}~\citep{wg/BrandCGHK22} or \emph{local feedback edge number}~\citep{nips/GanianK21}.
This is, roughly speaking, the vertex variant of spanning tree congestion,
where one asks to minimize the maximum congestion over the \emph{vertices} of the spanning tree,
where the congestion of a vertex is defined as the number of detours containing it.%
\footnote{Analogously, this has been referred to as the
\emph{vertex remember number} of $G$ relative to $T$ in the literature~\citep[Section~11]{tcs/Bodlaender98}.}
Those parameters have been recently used in the setting of parameterized complexity to show various tractability and incompressibility results~\citep{wg/BrandCGHK22,nips/GanianK21},
and we believe that our work might provide insights into the parameterized (in)tractability of their computation.

\paragraph*{Organization.}
In \cref{sec:preliminaries} we discuss the general preliminaries.
Subsequently, in \cref{sec:hardness} we present the various hardness results,
followed by \cref{sec:treewidth} where we present the explicit FPT algorithm when parameterized by $\tw + k$,
as well as the two results that make use of this.
Moving on, in \cref{sec:fpt_algorithms} we present various fixed-parameter tractability results.
Lastly, in \cref{sec:conclusion} we present the conclusion as well as some directions for future research.


\section{Preliminaries}\label{sec:preliminaries}
Throughout the paper we use standard graph notation~\citep{Diestel17},
and we assume familiarity with the basic notions of parameterized complexity~\citep{books/CyganFKLMPPS15}.
All graphs considered are undirected without loops.
For a graph $G = (V,E)$ and $S \subseteq V$, we denote the \emph{open neighborhood of $S$} by
$N_G(S) = (\bigcup_{s \in S} N_{G}(s)) \setminus S$, while for $s\in V$, we denote the \emph{open neighborhood of $s$} by $N_G(s)=N_G(\{s\})$.
Two vertices $u, v \in V(G)$ are \emph{twins} if $N_G(u) \setminus \{v\} = N_G(v) \setminus \{u\}$;
if additionally $u$ and $v$ are adjacent, we call them \emph{true twins}, otherwise \emph{false twins}.
For $x, y \in \Z$, let $[x, y] = \setdef{z \in \Z}{x \leq z \leq y}$, while $[x] = [1,x]$.

Let $G = (V,E)$ be a connected graph and $T$ a spanning tree of $G$.
The \emph{detour} for $\{u,v\} \in E$ in $T$ is the unique $u$--$v$ path in $T$. Note that the detour of $e \in E(T)$ is $e$ itself.
The \emph{congestion} of $e \in E(T)$, denoted $\cng_{G,T}(e)$, is the number of edges in $G$ whose detours contain~$e$.
In other words, $\cng_{G,T}(e)$ is the size of the fundamental cutset of $T$ defined by~$e$,
that is, $\cng_{G,T}(e) = |E(V(T_{e,1}), V(T_{e,2}))|$,
where $E(X,Y) = \setdef{\{x,y\} \in E}{x \in X, \, y \in Y}$ for disjoint vertex sets $X,Y \subseteq V$
and $T_{e,1}$ and $T_{e,2}$ are the two subtrees of $T$ obtained by cutting~$e$.
The \emph{congestion} of $T$, denoted $\cng_{G}(T)$,
is defined as the maximum over the congestion of all edges in $T$,
i.e., $\cng_{G}(T) = \max_{e \in E(T)} \cng_{G,T}(e)$.
The \emph{spanning tree congestion} of $G$, denoted $\stc(G)$,
is the minimum congestion over all spanning trees of $G$.
Given a connected graph $G$ and an integer $k \in \mathbb{Z}^+$,
{\STC} asks to determine whether $\stc(G) \le k$.

\paragraph*{Graph Parameters.}
We use several standard graph parameters, so we recall here their definitions
and known relations between them.
The \emph{vertex cover} of a graph $G$, denoted by $\vc(G)$, is the size of the smallest
vertex set whose removal destroys all edges.
The \emph{vertex integrity} of a graph $G$, denoted by $\vi(G)$, is the minimum integer $k$
such that there is a vertex set $S \subseteq V(G)$ with $|S| + \max_{C \in \cc(G-S)} |V(C)| \le k$,
where $\cc(G-S)$ denotes the set of connected components in $G-S$.
The \emph{tree-depth} of a graph $G$ can be defined recursively as follows: $\td(K_1)=1$;
if $G$ is disconnected $\td(G)$ is equal to the maximum of the tree-depth of its
connected components; otherwise $\td(G)=\min_{v\in V(G)} \td(G-v)+1$.
A graph $G$ has \emph{feedback vertex} (respectively
\emph{edge}) \emph{number} $k$ if $k$ is the smallest integer so that there exists a set of $k$ vertices (respectively edges) such
that removing them from $G$ destroys all cycles;
we use $\fvs(G)$ and $\fes(G)$ to denote those integers.

Let $G = (V,E)$ be a graph.
The \emph{vertex integrity} of $G$, denoted by $\vi(G)$, is the minimum integer $k$
such that there is a vertex set $S \subseteq V$ with $|S| + \max_{C \in \cc(G-S)} |V(C)| \le k$,
where $\cc(G-S)$ denotes the set of connected components in $G-S$.
The \emph{twin-cover number} of $G$, denoted by $\tc(G)$,
is the size of the smallest vertex set (called \emph{twin-cover}) whose removal results in a
cluster graph (a disjoint union of cliques), with the constraint that each clique is composed of true twins in $G$~\citep{dmtcs/Ganian15}.
If we drop the constraint, this is the \emph{cluster vertex deletion number}~\citep{mfcs/DouchaK12},
denoted by $\cvd(G)$.
The \emph{modular-width} of $G$~\citep{GajarskyLMMO15,GajarskyLO13}
is the smallest integer $k$ such that,
either $|V| \le k$, or $V$ can be partitioned into at most $k' \leq k$ sets $V_1,\ldots,
V_{k'}$, with the following two properties: (i) for all
$i \in [k']$, $V_i$ is a module of $G$,
(ii) for all $i \in [k']$, $G[V_i]$ has modular
width at most $k$. The \emph{distance to clique} of a graph $G$,
denoted by $\dtc(G)$,
is the size of the smallest vertex set whose removal results in a
clique.

The \emph{neighborhood diversity} of a graph $G = (V,E)$,
denoted by $\nd(G)$, is the smallest integer $k$ such that there exists a partition of $V$
into $k$ sets, with each set composed of either only true or only false twins~\citep{algorithmica/Lampis12}.
A module of a graph $G=(V,E)$ is a set of vertices $M \subseteq V$ such that for
all $x \in V \setminus M$ we have that $x$ is either adjacent to all vertices of
$M$ or to none. The \emph{modular-width} of a graph $G=(V,E)$~\citep{GajarskyLMMO15,GajarskyLO13}
is the smallest integer $k$ such that,
either $|V| \le k$, or $V$ can be partitioned into at most $k' \leq k$ sets $V_1,\ldots,
V_{k'}$, with the following two properties: (i) for all
$i \in [k']$, $V_i$ is a module of $G$,
(ii) for all $i \in [k']$, $G[V_i]$ has modular
width at most $k$.

For positive integers $m$ and $d$, we say that a graph $G$ admits a tree-model of $m$ colors and height $d$
if there exists a triple $(T,S,c)$, called \emph{tree-model}, such that $T$ is a rooted tree of height $d$ with
its leaves being the vertices $v$ of $G$ having color $c(v)$ from $[m]$ and each being at distance exactly $d$ from the root,
and $S \subseteq [m] \times [m] \times [d]$, called the \emph{signature} of the tree-model,
such that (i) $(i,j,\ell) \in S$ if and only if $(j,i,\ell) \in S$, and (ii) for any two vertices $u, v \in V(G)$
at distance exactly $2 \ell$ in $T$,
$(c(u),c(v),\ell) \in S$ if and only if $\{ u,v \} \in E(G)$.
A family of graphs $\mathcal{G}$ has \emph{shrub-depth}~\citep{lmcs/GanianHNOM19,mfcs/GanianHNOMR12}
at most $d$ if there exists a positive integer $m$ such that every graph of $\mathcal{G}$ admits a tree-model
with $m$ colors and height $d$.


Let us briefly explain the relations depicted in \cref{fig:parameters}, which
will clarify the results for each parameter.  In most cases, an arrow from
parameter $A$ to parameter $B$ means that for all graphs $G$ we have $B(G) \leq
A(G) +O(1)$. For instance, $\tw(G)\le \fvs(G)+1\le \fes(G)+1$ and $\tw(G)\le
\td(G)\le \vi(G)\le \vc(G)+1$. However, the relation is exponential in a few
cases, namely:  $\nd(G) \le 2^{\vc(G)} + \vc(G)$, $\nd(G) \le 2^{\dtc(G)} +
\dtc(G)$, $\mw(G) \le 2^{\tc(G)} + \tc(G)$, $\cw(G) \le 2^{\tw(G)+1} + 1$.  As
for shrub-depth, its relation with clique-width and tree-depth is
known~\citep[Proposition~3.4]{lmcs/GanianHNOM19}, while it is easy to see that
any class of neighborhood diversity $k$ admits a tree-model of $k$ colors and
height $1$; and that any class of graphs of cluster vertex deletion number $k$
admits a tree-model of $2^k + k$ colors and height $2$.

\section{Hardness Results}\label{sec:hardness}

In this section we present various hardness results for \STC.
We start with showing in \cref{ssec:disjoint-union} that the problem is W[1]-hard
parameterized by the distance to the disjoint union of graphs in $\mathcal{C}$,
for any family of graphs $\mathcal{C}$ that contains connected graphs of any order.
Moving on, in \cref{ssec:modular_width} we adapt our proof and prove NP-hardness for graphs
of modular-width at most $4$.
Finally, in \cref{ssec:maxdeg} we introduce a novel gadget simulating the double-weighted edges
previously used by~\cite{algorithmica/LuuC24}, of which we make use of in order to show
NP-hardness for graphs of constant maximum degree.


\subsection{Distance to Disjoint Union}\label{ssec:disjoint-union}

We start by stating the main theorem of this subsection.

\begin{theorem}\label{thm:disjoint-union}
{\STC} is W[1]-hard parameterized by vertex-deletion distance to disjoint union of graphs in $\mathcal{C}$,
where $\mathcal{C}$ is any graph class such that, for all $p \in \mathbb{Z}^{+}$,
$\mathcal{C}$ contains a connected graph with $p$ vertices which can be generated in time $p^{\bO(1)}$.
\end{theorem}

By taking the set of stars as $\mathcal{C}$,
\cref{thm:disjoint-union} implies the W[1]-hardness parameterized by distance to star forest.

\begin{corollary}\label{cor:star-forest}
{\STC} is W[1]-hard parameterized by distance to star forest (and thus by tree-depth $+$ feedback vertex number).
\end{corollary}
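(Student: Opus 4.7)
The plan is a parameterized reduction from Multicolored Clique (MCC), which is W[1]-hard parameterized by the number of colors $k$. Given an MCC instance $(G, V_1, \ldots, V_k)$ with $|V_i|=n$, I would construct a graph $G'$ together with a target congestion $k'$ so that $\stc(G') \le k'$ iff the instance has a clique, and such that $G'$ admits a modulator $S$ with $|S|=\mathcal{O}(k^2)$ whose removal leaves a disjoint union of graphs in $\mathcal{C}$. The modulator $S$ consists of a selector $s_i$ per color, a verifier $t_{i,j}$ per color pair, and a few auxiliary edges inside $S$ that keep it internally connected in the spanning tree. For each color $i$ and each vertex $v\in V_i$, I would introduce a vertex-gadget component $H_i^v\in\mathcal{C}$ of size $N=\mathrm{poly}(n)$ into $G'-S$ (using the hypothesis on $\mathcal{C}$), and distinguish a representative $r_i^v\in H_i^v$. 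The edges between $G'-S$ and $S$ then encode the instance: $\{r_i^v, s_i\}$ for all $i,v$, and $\{r_i^v, t_{i,j}\}$ for every $j\ne i$ such that $v$ has a neighbor in $V_j$ in $G$. For the twin-cover corollary (with $\mathcal{C}$ the class of cliques), these edges are instead attached uniformly to every vertex of $H_i^v$, so that each clique-component of $G'-S$ consists of true twins in $G'$.

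For the forward direction, given a clique $\{v_i^{\sigma(i)}\}_{i\in[k]}$ of $G$, I would exhibit a canonical spanning tree $T^\ast$: take any spanning tree of each $H_i^v$; include all edges $\{r_i^v, s_i\}$ as tree edges (which attaches every gadget to its selector); span $S$ internally through the auxiliary edges; and attach each $t_{i,j}$ via $\{r_i^{\sigma(i)}, t_{i,j}\}$, which exists in $G'$ by the clique property and the edge-verifier encoding. A direct counting of the edges crossing each fundamental cut of $T^\ast$ then yields $\cng_{G'}(T^\ast)\le k'$ for a suitable $k'=\mathrm{poly}(n,k)$.

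The crux is the reverse direction. Given $T$ with $\cng_{G'}(T)\le k'$, I would extract $\sigma\colon[k]\to[n]$ by identifying, for each color $i$, the representative of the gadget through which each verifier $t_{i,j}$ attaches to $T$. The key structural claim is that the $k'$-budget forces, for each color $i$, all verifiers $t_{i,j}$ ($j\ne i$) to attach through a common representative $r_i^{\sigma(i)}$, and that if $v_i^{\sigma(i)}$ and $v_j^{\sigma(j)}$ are not adjacent in $G$ then no tree edge incident to $t_{i,j}$ can stay within the budget, producing a contradiction. The main obstacle is that the internal structure of every $H_i^v$ is prescribed by $\mathcal{C}$ and outside our control, so the shape of $T$ inside each gadget cannot be directly restricted; I would cope with this by choosing $N$ (and calibrating $k'$) large enough that the fundamental cuts induced by tree edges internal to a single $H_i^v$ are dominated by those induced by external tree edges, so that the congestion bound $k'$ effectively constrains only the attachment pattern between the $H_i^v$'s and $S$, which is exactly what encodes the clique.
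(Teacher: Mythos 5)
Your proposal departs from the paper's route: the paper derives this corollary directly from its general \cref{thm:disjoint-union}, which is a reduction from \UBP{} parameterized by the number of bins $t$, whereas you propose a fresh reduction from \textsc{Multicolored Clique} with selector/verifier vertices. Unfortunately, as sketched, the reduction has a genuine gap in its verification mechanism. First, the encoding you describe --- $t_{i,j}$ adjacent to $r_i^v$ exactly when $v$ has \emph{some} neighbor in $V_j$ --- records only one bit per vertex and color pair, so the neighborhood of $t_{i,j}$ cannot distinguish which vertex of $V_j$ was selected; no choice of $k'$ can then make ``$v_i^{\sigma(i)}$ and $v_j^{\sigma(j)}$ are non-adjacent'' detectable. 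Second, and more fundamentally, you never supply a mechanism by which the congestion of any tree edge depends on whether the selected pair is an edge of $G$. Congestion is the size of a fundamental cut; if $t_{i,j}$ is a leaf of $T$, the congestion of its tree edge is just $\deg_{G'}(t_{i,j})$, which is fixed by the construction, and the contribution of $t_{i,j}$ to cuts higher up is likewise degree-determined. The ``key structural claim'' of your reverse direction (that the budget forces all $t_{i,j}$ to attach through a common representative, and that a non-edge blows the budget) is asserted but not realized by anything in the construction. Making such a claim true would require a numerical encoding of vertex identities (e.g., weights from a Sidon-type set) so that cut sizes identify the selected pair --- a substantial missing ingredient, not a routine calibration of $N$ and $k'$.

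This is precisely why the paper starts from a numeric problem instead. In its reduction, each item $a_i$ becomes a connected $a_i$-vertex component, all components are completely joined to $t$ bin-vertices $Q$, and a root $r$ is attached to $Q$ by edges of weight $3(t-1)B$. The heavy weights force the star on $\{r\}\cup Q$ into every cheap tree; a global argument summing the congestions of the $t$ root edges forces each component to hang entirely off a single bin-vertex; and then the congestion of the edge $\{r,v_j\}$ evaluates to $3(t-1)B + tB + (t-2)\sum_{i\in A_j} a_i$, so the budget $k=5(t-1)B$ is met exactly when every bin load equals $B$. The cut-counting nature of congestion is thus matched to a counting problem, rather than asked to simulate adjacency tests. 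If you wish to salvage an MCC-based proof, you would need to route vertex and edge identities through component sizes or edge multiplicities in a similar arithmetic fashion; as written, the proposal does not establish the reverse direction.
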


If $\mathcal{C}$ is the set of complete graphs, \cref{thm:disjoint-union} implies W[1]-hardness
parameterized by cluster vertex deletion number~\citep{mfcs/DouchaK12}.
In fact, as we will later see, the proof of \cref{thm:disjoint-union} more strongly implies W[1]-hardness parameterized
by twin-cover number~\citep{dmtcs/Ganian15}.

\begin{corollary}\label{cor:twin-cover}
{\STC} is W[1]-hard parameterized by twin-cover number.
\end{corollary}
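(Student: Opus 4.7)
The plan is to observe that the reduction used to prove \cref{thm:disjoint-union}, when instantiated with $\mathcal{C}$ taken as the class of all complete graphs, already yields the stronger statement for twin-cover, provided that we verify one mild symmetry property of the construction. Recall that the twin-cover number strengthens cluster vertex deletion by demanding that within each clique of $G - S$, the vertices be \emph{true twins} in $G$, i.e., share the same closed neighborhood. Thus, to upgrade \cref{cor:twin-cover} from cluster vertex deletion to twin-cover, it suffices to check that, for every clique component of the reduced graph, all of its vertices are attached to the deletion set $S$ in exactly the same way.

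Concretely, I would first re-examine the reduction underlying \cref{thm:disjoint-union} with $\mathcal{C}$ fixed to the class of complete graphs. Each such clique component is used as a gadget simulating a combinatorial choice (e.g., encoding a value, a color, or a selection from some small set), and by design these gadgets interact with the deletion set $S$ through their collective role, not through distinctions between individual vertices of the clique. In particular, one expects each clique component to be glued to $S$ through a fixed pattern of neighbors that is the same for every vertex of the component; internally, since the component is already a clique, any two of its vertices are adjacent in $G$. Combined with identical neighborhoods in $S$, they are therefore true twins in $G$, and $S$ is a valid twin-cover of the constructed graph.

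This directly promotes the W[1]-hardness reduction from cluster vertex deletion to twin-cover number, since the parameter value is unchanged: $|S|$ is simultaneously the cluster vertex deletion number, the twin-cover number, and the distance to a disjoint union of cliques. Hence \STC{} is W[1]-hard parameterized by $\tc$, as required.

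The main obstacle is verifying the symmetry property. If the reduction of \cref{thm:disjoint-union} happens to attach vertices of a single clique component to different neighbors in $S$ (for instance, to encode edge-like information inside the clique itself), then the twin-cover bound as stated would not immediately apply. In that case, a small adjustment suffices: for each clique gadget, one can either (i) replicate its interface so that all vertices sharing a role are genuinely indistinguishable and aggregate their connections to $S$, or (ii) introduce a constant number of auxiliary vertices placed in $S$ that equalize the neighborhoods of vertices in the same clique, with only an additive increase in the parameter that preserves W[1]-hardness. The congestion analysis of the original construction must then be re-checked to confirm that these modifications neither create nor destroy solutions of the target congestion value $k$, but since the modifications are purely cosmetic with respect to the role of each clique, I expect the argument to go through essentially verbatim.
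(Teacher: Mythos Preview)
Your approach is correct and matches the paper's own argument: instantiate \cref{thm:disjoint-union} with $\mathcal{C}$ the class of complete graphs and observe that the deletion set $\{r\}\cup Q$ is already a twin-cover of the constructed unweighted graph $G'$. In the actual construction this verification is immediate---every vertex of each clique $G_i$ is made adjacent to \emph{all} of $Q$ (and to nothing else outside $V_i$), so any two vertices of the same $V_i$ are true twins, while the middle vertices arising from \cref{prop:weighted-edge} are singleton components---hence your contingency plan about adjusting the gadgets is unnecessary.
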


For an edge-weighted graph $G = (V,E;\wt)$ with $\wt \colon E \to \Z^{+}$, we define its spanning tree congestion
by setting the congestion of an edge $e$ in a spanning tree $T$ of $G$ as
$\cng_{G,T}(e) = \wt(E(V(T_{e,1}), V(T_{e,2})))$, where $\wt(F) = \sum_{e \in F} \wt(e)$ for $F \subseteq E$.
The following proposition provides a connection between the weighted and the unweighted case.

\begin{proposition}[\cite{algorithmica/BodlaenderFGOL12}]\label{prop:weighted-edge}
Let $G = (V,E;\wt)$ be an edge-weighted graph and $G' = (V',E')$
be an unweighted graph obtained from $G$ by replacing each weighted edge $\{u,v\}$
with $\wt(\{u,v\})$ internally vertex-disjoint $u$--$v$ paths of any lengths,
where one of them may be $\{u,v\}$ itself.
Then, $\stc(G) = \stc(G')$.
\end{proposition}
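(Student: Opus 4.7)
My plan is to prove both $\stc(G') \le \stc(G)$ and $\stc(G) \le \stc(G')$ by explicitly transforming an optimal spanning tree of one graph into a spanning tree of the other of no larger congestion. For the direction $\stc(G') \le \stc(G)$, I would start from an optimal spanning tree $T$ of $G$ with congestion $k$ and build $T'$ as follows: for each weighted edge $\{u,v\}$ with its $w = \wt(\{u,v\})$ internally vertex-disjoint paths $P_1^{uv}, \dots, P_w^{uv}$ in $G'$, if $\{u,v\} \in T$ I include one of these paths (the \emph{main path}) entirely in $T'$ and, for every other path, include all edges but the one incident to $v$, so its internal vertices form a \emph{hanging piece} attached to $u$; if $\{u,v\} \notin T$, I treat every $P_i^{uv}$ as such a hanging piece. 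A direct edge count and a connectivity check verify that $T'$ is a spanning tree of $G'$. The congestion of any $e' \in T'$ then splits into two cases: if $e'$ lies inside a hanging piece, the cut it defines isolates part of a single path and exactly two $G'$-edges cross it (namely $e'$ and the omitted edge of that path), so the congestion is $2$; if $e'$ lies on the main path of some $\{u,v\} \in T$, the cut it induces on $V'$ extends the fundamental $T$-cut of $\{u,v\}$, and summing contributions over weighted edges shows that exactly $\cng_{G,T}(\{u,v\}) \le k$ $G'$-edges cross. A brief observation that $k \ge 2$ whenever any hanging piece occurs then yields $\cng_{G'}(T') \le k$.

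For the direction $\stc(G) \le \stc(G')$, I would take an optimal $T'$ with congestion $k'$ and, for each weighted edge $\{u,v\}$, study the subgraph $H_{uv}$ of $G'$ formed by its paths. Since every internal vertex of $H_{uv}$ has all its $G'$-neighbors inside $H_{uv}$ and $T'$ is connected, each component of $T' \cap H_{uv}$ must contain $u$ or $v$; combined with an edge count, this forces each path to lose at most one edge in $T'$ and yields the dichotomy that \emph{either} exactly one path is fully in $T'$ and the remaining $w-1$ each lose one edge, \emph{or} no path is fully in $T'$ and every path loses one edge. I would then define $T \subseteq E$ to consist of those $\{u,v\}$ falling in the first alternative. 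An edge count gives $|T| = |V|-1$, while decomposing any $T'$-path between two vertices of $V$ into its maximal subpaths lying inside a single $H_{uv}$ shows that each such subpath necessarily traverses from $u$ to $v$ of the corresponding weighted edge (via the fully contained $P_i^{uv}$), hence corresponds to an edge of $T$; so $T$ is connected and therefore a spanning tree of $G$.

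To bound $\cng_{G,T}(\{u,v\})$ for each $\{u,v\} \in T$, I would pick any edge $e'$ on the full path of $\{u,v\}$ in $T'$ and let $(A',B')$ be the $T'$-cut it induces, with $u \in A'$, $v \in B'$. For any vertex $x$ on the $u$-side of the fundamental $T$-cut of $\{u,v\}$, concatenating the full paths corresponding to the tree edges on the $T$-path from $x$ to $u$ yields a $T'$-path from $x$ to $u$ that never uses $e'$ (since distinct weighted edges have disjoint $H$-interiors and $\{u,v\}$ itself is the removed tree edge), so $x \in A'$; symmetrically for the $v$-side. Then every weighted edge $\{x,y\}$ crossing the $T$-cut has each of its $\wt(\{x,y\})$ $G'$-paths with one endpoint in $A'$ and the other in $B'$, contributing at least one $G'$-edge to $(A',B')$; summing gives $\cng_{G',T'}(e') \ge \cng_{G,T}(\{u,v\})$, hence $\cng_{G,T}(\{u,v\}) \le k'$. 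The main obstacle, in my view, will be the second direction's structural analysis of $T' \cap H_{uv}$, since the at-most-one-missing-edge dichotomy is what makes the cut-propagation argument go through; once it is in hand, the rest reduces to careful bookkeeping on the induced cuts.
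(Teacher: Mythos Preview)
The paper does not give its own proof of this proposition; it is stated with a citation to \cite{algorithmica/BodlaenderFGOL12} and used as a black box. So there is no in-paper argument to compare your proposal against.

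That said, your proof is correct and essentially reconstructs the standard argument. Both directions go through as you describe: in the forward direction the ``main path plus hanging pieces'' construction yields a spanning tree of $G'$ whose edge congestions are either $2$ (on hanging pieces) or exactly $\cng_{G,T}(\{u,v\})$ (on the main path for $\{u,v\}\in T$), and your observation that $k\ge 2$ whenever any hanging piece is present is valid. In the backward direction, the key structural step---that for each weighted edge $\{u,v\}$ every path $P_i^{uv}$ loses at most one edge in $T'$, and that at most one such path is fully contained in $T'$---is exactly right, since two full paths would close a cycle and two missing edges on a single path would isolate an internal vertex. The edge count $|T|=|V|-1$, the connectivity argument via decomposing $T'$-paths at vertices of $V$, and the cut-propagation bound $\cng_{G',T'}(e')\ge \cng_{G,T}(\{u,v\})$ are all correct (in fact the last inequality is an equality, as you can check by noting that edges $\{x,y\}$ with both endpoints on the same side of the $T$-cut contribute nothing to the $T'$-cut). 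One minor point worth making explicit is that when you concatenate full paths to get a $T'$-walk from $x$ to $u$, you should note that distinct $H_{ab}$'s are edge-disjoint, so none of these full paths contains $e'$; you mention this, but it is the crux of why $A\subseteq A'$.
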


The rest of the section is dedicated to proving \Cref{thm:disjoint-union}.
The proof follows by a reduction from {\UBP}.
Given unary encoded integers $t, a_{1}, \dots, a_{n} \in \mathbb{Z}^{+}$ with $\sum_{i \in [n]} a_{i} = t B$,
{\UBP} asks whether there is a partition $(A_{1},\dots,A_{t})$ of $[n]$
such that $\sum_{i \in A_{j}} a_{i} = B$ for each $j \in [t]$.
It is known that {\UBP} is W[1]-hard parameterized by $t$~\citep{jcss/JansenKMS13}.
We assume that $t \ge 3$ since otherwise the problem can be solved in polynomial time.
We now proceed to presenting the reduction.

\paragraph*{Construction.}
Let $\mathcal{I} = \langle t; a_{1}, \dots, a_{n} \rangle$ be an instance of {\UBP} with $\sum_{i \in [n]} a_{i} = t B$.
For each $i \in [n]$, let $G_{i} = (V_{i}, E_{i})$ be a connected $a_{i}$-vertex graph belonging to $\mathcal{C}$.
We set $k = 5(t-1)B$ and construct an edge-weighted graph $G = (V,E; \wt)$ as follows.
\begin{enumerate}
  \item Take the disjoint union of all $G_{i}$ for $i \in [n]$.
  \item Add a set of vertices $Q = \setdef{v_{j}}{j \in [t]}$ and all possible edges between $Q$ and $\bigcup_{i \in [n]} V_{i}$.
  \item Add a vertex $r$ and all possible edges between $r$ and $Q$.
  \item Set $\wt(\{r, v_{j}\}) = 3(t-1)B$ for $j \in [t]$, and $\wt(e) = 1$ for all other edges.
\end{enumerate}

\cref{prop:weighted-edge} implies that we can construct, in time polynomial in $n$ and $B$,
an unweighted graph $G'$ from $G$ such that $\stc(G) = \stc(G')$, where
each weighted edge $e$ of weight $\wt(e) \ge 2$ in $G$ is replaced by $\wt(e)$ internally vertex-disjoint paths of length $2$ between the endpoints of~$e$.
Observe that $G' - (\{r\} \cup Q)$ is the disjoint union of all $G_{i}$ and the singleton components
that correspond to the middle vertices of the paths replacing weighted edges.
Since the single-vertex graph belongs to $\mathcal{C}$, $G'$ has distance $|\{r\} \cup Q| = t+1$ to disjoint union of graphs in $\mathcal{C}$.
We can also see that if $\mathcal{C}$ is the set of complete graphs, then $\{r\} \cup Q$ is a twin-cover.
Thus, to prove \cref{thm:disjoint-union} (together with \cref{cor:twin-cover}), it suffices to show that
$\stc(G) \le k$ if and only if $\mathcal{I}$ is a yes-instance of {\UBP}.

\begin{lemma}
If $\mathcal{I}$ is a yes-instance of {\UBP}, then $\stc(G) \le k$.
\end{lemma}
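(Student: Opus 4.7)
The plan is to exhibit a concrete spanning tree of $G$ built directly from the partition. Given a balanced partition $(A_1,\dots,A_t)$ of $[n]$, I would let $T$ consist of (i) the $t$ heavy edges $\{r, v_j\}$, and (ii) for every $j\in[t]$ and every $i\in A_j$, every edge of the form $\{v_j, u\}$ with $u\in V_i$. A straightforward count gives $|E(T)| = t + \sum_i a_i = t + tB = |V(G)| - 1$, every vertex is reachable from $r$ through its assigned $v_j$, and no cycles are created, so $T$ is a spanning tree.

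The next step is to bound $\cng_{G,T}(e)$ for each $e\in E(T)$. For a heavy edge $e = \{r,v_j\}$, removing it separates $S_j := \{v_j\} \cup \bigcup_{i\in A_j} V_i$ from its complement. The edges of $G$ crossing this cut are: (a) $e$ itself, contributing weight $3(t-1)B$; (b) the light edges from $v_j$ to $\bigcup_{i\notin A_j} V_i$, contributing $\sum_{i\notin A_j} a_i = (t-1)B$; and (c) the light edges from $Q\setminus\{v_j\}$ to $\bigcup_{i\in A_j} V_i$, contributing $(t-1)\sum_{i\in A_j} a_i = (t-1)B$. Summing gives exactly $5(t-1)B = k$, and this is precisely where the balance identity $\sum_{i\in A_j} a_i = B$ is used. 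For a leaf edge $e = \{v_j, u\}$ with $u \in V_i$, $i \in A_j$, removing $e$ isolates $u$ in $T$, so the crossing edges are the $t$ edges $\{v_{j'}, u\}$ for $j'\in[t]$ together with the $\deg_{G_i}(u) \le a_i - 1 \le B - 1$ edges internal to $G_i$ incident to $u$, totalling at most $t + B - 1$, which is well below $5(t-1)B$ for $t\ge 3$.

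The step to be careful about is the design choice in (ii): attaching each $V_i$ as a star centered at $v_j$ rather than via an internal spanning tree of $G_i$. This is the main obstacle, because $\mathcal{C}$ is only assumed to contain, for each $p$, some connected $p$-vertex graph and may well contain dense graphs (cliques, relevant for \cref{cor:twin-cover}); any internal tree edge of $G_i$ could then be crossed by up to $\Omega(a_i^2)$ edges of $G_i$, which can exceed $k$ when $B$ is large. The star layout forces every ``internal'' cut to isolate a single vertex, which bounds its congestion uniformly by $t + B - 1$ regardless of the density of $G_i$, and this is what allows the same construction to work for every admissible class $\mathcal{C}$.
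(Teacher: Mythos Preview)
Your proof is correct and essentially identical to the paper's: the same spanning tree (heavy edges $\{r,v_j\}$ plus the star $\{v_j,u\}$ for $u\in V_i$, $i\in A_j$) and the same two-case congestion analysis yielding $k$ on the heavy edges and $t+a_i-1<k$ on the leaf edges. Your extra paragraph explaining why one attaches each $V_i$ as a star rather than via an internal spanning tree of $G_i$ is a nice bit of motivation that the paper omits, but it does not change the argument.
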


\begin{proof}
Let $(A_{1},\dots,A_{t})$ be a partition of $[n]$
such that $\sum_{i \in A_{j}} a_{i} = B$ for each $j \in [t]$.
We construct a spanning tree $T$ of $G$ by setting
\[
  E(T) = \setdef{\{r, v_{j}\}}{j \in [t]} \cup \bigcup_{j \in [t]} \setdef{\{u, v_{j}\}}{u \in V_{i}, \, i \in A_{j}}.
\]

For each edge $\{u, v_{j}\} \in E(T)$ with $u \in V_{i}$,
since $u$ is a leaf of $T$,
its congestion is $\deg_{G}(u) = t + \deg_{G_{i}}(u) \le t + a_{i} - 1 < k$.

For each $v_{j}$, let $S_{j}$ be the set of vertices of the component of $T - \{r, v_{j}\}$ containing $v_{j}$.
By the construction, we can see that $S_{j} = \{v_{j}\} \cup \bigcup_{i \in A_{j}} V_{i}$.
Thus,
\begin{align*}
  \cng_{G,T}(\{r, v_{j}\})
  &=
  \wt(E(S_{j}, V \setminus S_{j}))
  \\
  &=
  \wt(\{r, v_{j}\})
  + |E(\{v_{j}\}, \; \textstyle\bigcup_{i \in [n] \setminus A_{j}} V_{i})|
  + |E(\textstyle\bigcup_{i \in A_{j}} V_{i}, \; Q \setminus \{v_{j}\})|
  \\
  &=
  3(t-1)B
  + (t-1)B
  + (t-1)B = k.
\end{align*}
This completes the proof.
\end{proof}

\begin{lemma}
If $\stc(G) \le k$,
then $\mathcal{I}$ is a yes-instance of {\UBP}.
\end{lemma}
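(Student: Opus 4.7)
The plan is to take an arbitrary spanning tree $T$ of $G$ with $\cng_G(T) \le k = 5(t-1)B$ and show that the subtrees of $T$ hanging from the distinguished vertex $r$ directly encode a balanced partition of $[n]$. The whole argument is driven by the observation that each of the $t$ heavy edges $\{r,v_j\}$ carries weight $3(t-1)B$, and since $2 \cdot 3(t-1)B > k$, at most one heavy edge's detour can contribute to the congestion of any given tree edge.

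First, I would show that every edge $\{r, v_j\}$ belongs to $E(T)$. Since $r$'s only neighbors in $G$ lie in $Q$, the detour in $T$ of any non-tree heavy edge $\{r,v_j\}$ must begin at $r$ by traversing some tree edge of the form $\{r, v_{j'}\}$ with $j' \neq j$. This adds $3(t-1)B$ to the congestion of $\{r, v_{j'}\}$, which already carries its own weight of $3(t-1)B$, yielding congestion at least $6(t-1)B > k$, a contradiction. Hence all $t$ heavy edges are in $T$. Rooting $T$ at $r$, the vertices of $Q$ are exactly the children of $r$, so the subtree $S_j$ rooted at $v_j$ contains $v_j$, no other vertex of $Q$, and some $n_j := |S_j|-1$ vertices of $\bigcup_i V_i$.

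Next I would analyze the cut induced by removing $\{r, v_j\}$ from $T$. Since the $S_j$'s partition $V(G) \setminus \{r\}$, we have $\sum_j n_j = tB$. Evaluating $\cng_{G,T}(\{r, v_j\})$ edge by edge yields four contributions: the heavy edge itself contributes $3(t-1)B$; the unit edges from $v_j$ to $\bigcup_i V_i \setminus S_j$ contribute $tB - n_j$; the unit edges from $S_j \cap \bigcup_i V_i$ to $Q \setminus \{v_j\}$ contribute $(t-1)n_j$; and the internal $G_i$-edges crossing the cut contribute some $m_j \ge 0$. Imposing $\cng_{G,T}(\{r, v_j\}) \le 5(t-1)B$ and simplifying reduces to $(t-2)n_j + m_j \le (t-2)B$, which (using $t \ge 3$) yields $n_j \le B$. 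Summing this over $j$ and using $\sum_j n_j = tB$ then forces $n_j = B$ and $m_j = 0$ for every $j$.

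Finally, since $m_j = 0$ and each $G_i$ is connected, no $V_i$ can be split across distinct subtrees, so each $V_i$ lies entirely within a single $S_j$; setting $A_j := \{i : V_i \subseteq S_j\}$ yields the desired partition of $[n]$ with $\sum_{i \in A_j} a_i = n_j = B$. The main obstacle is the third step: the reduction is calibrated so that the congestion inequality at every heavy edge is saturated exactly when $n_j = B$ and $m_j = 0$, which requires a careful accounting of all three types of non-heavy cut edges and then using the global summation to propagate the tight bound uniformly over all $j$.
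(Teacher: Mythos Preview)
Your proof is correct and takes a genuinely different route from the paper's. The paper first establishes a separate claim (via a summation argument on $R=\sum_j \cng_{G,T}(\{r,v_j\})$) that each $V_i$ sends tree edges to exactly one $v_j$, and only afterwards computes the congestion of $\{r,v_j\}$ to bound $\sum_{i\in A_j} a_i$. You instead fold the ``splitting penalty'' directly into the congestion inequality as the term $m_j$, obtain $(t-2)n_j + m_j \le (t-2)B$ for every $j$, and then use the global summation $\sum_j n_j = tB$ to force $n_j = B$ and $m_j = 0$ simultaneously; the partition property follows a posteriori from $m_j=0$ and the connectedness of each $G_i$.

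Your approach is more streamlined: it avoids stating and proving an intermediate structural claim, and the single inequality $(t-2)n_j + m_j \le (t-2)B$ carries both the size constraint and the non-splitting constraint at once. The paper's approach has the minor advantage of being more modular (the partition claim is reused almost verbatim in the modular-width hardness proof), but for this lemma alone your argument is cleaner.
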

\begin{proof}
Let $T$ be a spanning tree of $G$ with congestion at most $k$.

We first show that $\{r, v_{j}\} \in E(T)$ for every $j \in [t]$.
Suppose to the contrary that $\{r, v_{j}\} \notin E(T)$ for some $j \in [t]$.
In this case, the $r$--$v_{j}$ path $P$ in $T$ first visits some $v_{j'}$ with $j' \ne j$; i.e., $P = (r, v_{j'}, \dots, v_{j})$.
This implies that the congestion of the edge $\{r, v_{j'}\} \in E(T)$ is at least
$\wt(\{r,v_{j}\}) + \wt(\{r,v_{j'}\}) = 6(t-1)B > k$, a contradiction.

Next we show that for each $i \in [n]$,
there exists exactly one index $j \in [t]$ such that
at least one vertex in $V_{i}$ is adjacent to $v_{j}$ in $T$.

\begin{claim}\label{clm:disjoint-union_partition}
For all $i \in [n]$, there exists $j \in [t]$
such that $N_{T}(V_{i}) \cap Q = \{v_{j}\}$.
\end{claim}
\begin{claimproof}
There is at least one such $j \in [t]$ since $T$ is a spanning tree, i.e., $N_{T}(V_{i}) \cap Q \neq \varnothing$.
Suppose to the contrary that for some $h \in [n]$,
there are two or more vertices in $Q$ that have neighbors in $V_{h}$.
Since $V_{h}$ induces a connected subgraph of $G$ (i.e., $G_{h} = (V_{h}, E_{h})$)
and each edge $\{r, v_{j}\}$ is included in $T$,
there is at least one edge $e_{h} \in E_{h}$ such that the detour for $e_{h}$ in $T$ contains $r$.
Let $R = \sum_{j \in [t]} \cng_{G,T}(\{r, v_{j}\})$.
The edge $e_{h}$ contributes~$2$ to $R$ as its detour passes through two edges incident to $r$.
Each edge $\{r,v_{j}\}$ contributes $\wt(\{r,v_{j}\})$ to $R$.
Now, for $u \in \bigcup_{i \in [n]} V_{i}$,
let $j_{u} \in [t]$ be the index such that $v_{j_{u}}$ appears in the $u$--$r$ path $P_{u,r}$ in $T$.
Since $\{r,v_{j}\} \in E(T)$ for each $j \in [t]$, such $j_{u}$ is unique and $v_{j_{u}}$ appears right before $r$ in $P_{u,r}$.
Observe that for each $j \in [t] \setminus \{j_{u}\}$,
the detour for $\{u, v_{j}\} \in E$ in $T$ consists of $P_{u,r}$ and $v_{j}$,
where $v_{j}$ appears right after $r$.
This detour contributes~$1$ to the congestion of each of the edges $\{r, v_{j_{u}}\}$ and $\{r, v_{j}\}$.
The discussion so far implies that $R > tk$ as follows:
\begin{align*}
  R
  &\ge 2 + \sum_{j \in [t]} \wt(\{r,v_{j}\}) + \sum_{u \in \bigcup_{i \in [n]} V_{i}} 2(t-1) 
  = 2 + 3(t-1)B t + 2(t-1) t B = 2 + t k.
\end{align*}
This contradicts the assumption that each edge of $T$ has congestion at most~$k$ and thus $R \le kt$.
\end{claimproof}

For $j \in [t]$, let $A_{j} = \setdef{i}{\exists u \in V_{i}, \, \{u, v_{j}\} \in E(T)}$.
\cref{clm:disjoint-union_partition} implies that $(A_{1}, \dots, A_{t})$ is a partition of $[n]$.
In particular, the set of vertices of the component of $T - \{r, v_{j}\}$ containing $v_{j}$
is $\{v_{j}\} \cup \bigcup_{i \in A_{j}} V_{i}$.
This implies that
\begin{align*}
  \cng_{G,T}(\{r, v_{j}\})
  &=
  \wt(\{r, v_{j}\})
  + |E(\{v_{j}\}, \; \textstyle\bigcup_{i \in [n] \setminus A_{j}} V_{i})|
  + |E(\textstyle\bigcup_{i \in A_{j}} V_{i}, \; Q \setminus \{v_{j}\})|
  \\
  &=
  3(t-1)B
  + \sum_{i \in [n] \setminus A_{j}} a_{i}
  + (t-1)\sum_{i \in A_{j}} a_{i}
  \\
  &=
  3(t-1)B + tB + (t-2)\sum_{i \in A_{j}} a_{i}.
\end{align*}
Combining this with the assumption $\cng_{G,T}(\{r, v_{j}\}) \le k = 5(t-1)B$,
we obtain that $\sum_{i \in A_{j}} a_{i} \le B$. (Recall that $t \ge 3$.)
Since $\sum_{i \in [n]} a_{i} = tB$, we have $\sum_{i \in A_{j}} a_{i} = B$ for each $j \in [t]$.
\end{proof}

\subsection{Modular-width}\label{ssec:modular_width}

In this subsection we consider {\STC} parameterized by the modular-width of the input
graph~\citep{GajarskyLO13}.
We prove that the problem remains NP-complete even on graphs of modular-width at most $4$;
there is no graph of modular-width $3$ since there is no prime graph with three vertices,
therefore our result leaves open the case of graphs of modular-width at most $2$.
Our result, along with the fact that graphs of modular-width at most $4$ have clique-width at most $3$ (\cref{thm:mw_and_cw}),
slightly improves upon previous work by~\cite{jgaa/OkamotoOUU11},
which showed that the problem is NP-hard for graphs of clique-width at most $3$.
As a matter of fact, the graph we construct has linear clique-width~\citep{siamdm/FellowsRRS09}
at most $3$, thus directly improving over said result.
Nevertheless, \cref{thm:mw_and_cw} is an interesting result in its own right
which we were not able to find in the literature.
Furthermore, the family of graphs constructed by our reduction has shrub-depth at most $2$~\citep{lmcs/GanianHNOM19,mfcs/GanianHNOMR12},
therefore yielding para-NP-hardness for this parameterization as well.
Importantly, the graphs resulting from our reduction are interval graphs,
thus answering a previously posed open question on the polynomial-time solvability of {\STC}
on interval graphs in the negative~\citep{birthday/Otachi20}.
On that note, we also mention a recent follow-up work by~\cite{arxiv/Otachi26},
which shows that {\STC} remains NP-hard even on \emph{proper} interval graphs of clique-width at most $4$.

\begin{theorem}\label{thm:mw_and_cw}
    Any graph of modular-width at most $4$ has clique-width at most $3$.
\end{theorem}

\begin{proof}
    Observe that a graph $G$ with at most four vertices has clique-width at most $3$:
    if $G \ne P_{4}$, then it is a cograph and thus $\cw(G) \le 2$;
    if $G = P_{4}$, then $\cw(G) = 3$ as it is a tree but not a cograph
    (see e.g.,~\citep{dam/CourcelleO00}).

    Let $G$ be a graph of modular-width at most $4$,
    consisting of modules $M_{1}, \ldots, M_{p}$ with $p \le 4$,
    where each module $M_{i}$ induces a graph of clique-width at most $3$.
    Let $H$ be the quotient graph of $G$; that is, $H$ is the graph obtained from $G$ by replacing each module with a single vertex.
    For $i \in [p]$, let $v_{i} \in V(H)$ be the vertex corresponding to $M_{i}$.
    Since $H$ has at most four vertices, it admits a clique-width expression with at most three labels.
    From the expression of $H$, we construct $G$ by replacing the introduction of $v_{i}$ with label $\ell \in [3]$
    with the introduction of $M_{i}$ with label $\ell$.
    (By ``introducing $M_{i}$ with label $\ell$'', we mean the following steps:
    (1) construct $M_{i}$ using at most three labels;
    (2) relabel all vertices in $M_{i}$ by the label $\ell$.)
\end{proof}

In the following, we mostly follow the proof of \cref{thm:disjoint-union} by setting $\mathcal{C}$ to be the class of all complete graphs,
albeit with a few adaptations.
First, the starting point of our reduction is the strongly NP-complete {\ThreeP} problem.
Second, we notice that even though the edge-weighted graph $G$ produced in the proof of \cref{thm:disjoint-union} has
modular-width $2$ when $\mathcal{C}$ is the class of complete graphs (let one module contain the vertices of $Q$ and the other the rest),
that is not the case for the unweighted graph $G'$ produced by \cref{prop:weighted-edge}.
In order to overcome this bottleneck, we emulate the weights of the edges by introducing a sufficiently large clique in our construction.

Given $3n$ unary encoded (not necessarily distinct) integers $a_i \in \mathbb{Z}^{+}$ for $i \in [3n]$,
where $\sum_{i \in [3n]} a_{i} = n B$ and $B/4 < a_i < B/2$ for all $i \in [3n]$,
{\ThreeP} asks whether there is a partition $(A_{1},\ldots,A_{n})$ of $[3n]$
such that $\sum_{i \in A_{j}} a_{i} = B$ for all $j \in [n]$;
notice that the bounds on the values of $a_i$ imply that if $\sum_{i \in A_{j}} a_{i} = B$,
then $|A_j| = 3$.
It is well-known that {\ThreeP} is strongly NP-complete~\citep[SP15]{books/fm/GareyJ79}.

\begin{theorem}\label{thm:hardness:modularwidth}
    {\STC} is NP-complete on interval graphs of modular-width at most $4$,
    linear clique-width at most $3$,
    and shrub-depth at most $2$.
\end{theorem}

\begin{proof}
Let $\mathcal{I} = \langle a_{1}, \dots, a_{3n} \rangle$ be an instance of {\ThreeP} with $\sum_{i \in [3n]} a_i = nB$ and
$B/4 < a_i < B/2$ for all $i \in [3n]$.
Assume without loss of generality that $n \ge 3$.
For each $i \in [3n]$, let $G_{i} = (V_{i}, E_{i})$ be a clique on $a_{i}$ vertices, i.e., $G_i$ is a $K_{a_i}$.
We construct the graph $G$ as follows:
\begin{enumerate}
  \item Take the disjoint union of all $G_{i}$ for $i \in [3n]$.
  \item Add a set of vertices $Q = \setdef{v_{j}}{j \in [n]}$ that form a clique and all possible edges between $Q$ and $\bigcup_{i \in [3n]} V_{i}$.
  \item Add a set of vertices $C = \setdef{c_{z}}{z \in [M]}$ that form a clique and all possible edges between $C$ and $Q$,
        where $M = 4(n-1)B + 2(n-1)$.
  \item Add a vertex $w$, and all possible edges between $w$ and $C$.
\end{enumerate}
This concludes the construction of $G$; we refer to \cref{fig:interval} for an overview of the construction.
We set $k = M + 2(n-1)B + (n-1) = 3M / 2$ and we claim that $\mathcal{I}$ is a yes-instance of {\ThreeP}
if and only if $\stc(G) \le k$.
Before proving this, we provide upper bounds on $G$'s modular-width and linear clique-width.
Furthermore, we prove that the class of graphs produced by our reduction has shrub-depth at most $2$.

\begin{figure}[tbh]
\centering
\includegraphics[scale=0.8]{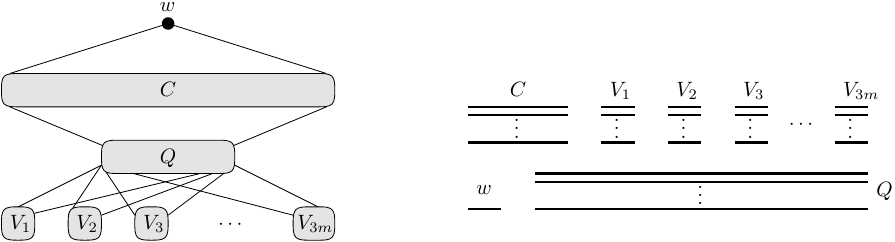}
\caption{The graph $G$ along with an interval representation of it.}
\label{fig:interval}
\end{figure}

\begin{lemma}
    $G$ has modular-width at most $4$ and linear clique-width at most $3$.
\end{lemma}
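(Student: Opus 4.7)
The plan is to prove the two inequalities separately.

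For modular-width, my approach is to show that the partition $V(G)=A\cup Q\cup C\cup\{w\}$, where $A=\bigcup_{i\in[3n]}V_i$, is a module partition of~$G$. This follows directly from the construction: every vertex outside each of these four sets is either fully adjacent or fully non-adjacent to it (for instance, every $v_j\in Q$ is adjacent to all of $A$ and all of $C$, while $w$ is non-adjacent to all of $A$). The top-level partition thus has four parts, and it remains to check that each of $G[A]$, $G[Q]$, $G[C]$, $G[\{w\}]$ has modular-width at most~$4$. Since $G[\{w\}]$ is trivial, $G[Q]$ is an independent set, $G[C]$ is a clique, and $G[A]$ is a disjoint union of cliques, in every case any partition of the ``atoms'' into subgroups is itself a valid module partition, so an easy recursion gives modular-width at most~$2$ for each.

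For linear clique-width, I would exhibit a concrete three-label linear expression constructing~$G$ in four phases. In Phase~1, build $A$ as a disjoint union of cliques residing at label~$3$: for each $V_i$ in turn, introduce its first vertex at label~$1$; then for every subsequent vertex, introduce it at label~$2$, perform the join $\eta_{1,2}$, and rename $2$ to $1$; once $V_i$ is complete, rename $1$ to~$3$ to park it. In Phase~2, build $C$ as a clique at label~$1$ using labels $1$ and $2$ in the same way, while $A$ sits undisturbed at label~$3$. In Phase~3, introduce the $n$ vertices of~$Q$ at label~$2$ with no intervening joins, then perform $\eta_{1,2}$ (creating all $C$-$Q$ edges) followed by $\eta_{2,3}$ (creating all $Q$-$A$ edges). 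Finally, in Phase~4, rename~$2$ to~$3$ so that $Q$ merges with $A$ at label~$3$ and label~$2$ becomes free; introduce $w$ at label~$2$ and apply $\eta_{1,2}$ to add the $w$-$C$ edges.

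The main obstacle is that the graph has four naturally distinct groups ($A$, $Q$, $C$, $\{w\}$) but only three labels are available. The key observation is that $w$ is adjacent only to $C$, so it can be introduced last, once $Q$ has been retired into $A$'s label; since no further edge incident to $Q$ or $A$ is needed after Phase~3, this merging produces no spurious edges. A direct comparison of the created edges against the target adjacencies then shows that the expression builds precisely~$G$ while using at most three labels at any time, proving both inequalities.
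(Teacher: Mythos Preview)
Your proposal is correct and follows essentially the same approach as the paper. For modular-width the paper names the same four modules $\{w\}$, $C$, $Q$, $\bigcup_i V_i$ (without spelling out the easy recursion you include), and for linear clique-width the paper gives an expression that is the same as yours up to a relabeling of the three labels (it parks $A$ at label~$1$ and $C$ at label~$2$, introduces $Q$ at label~$3$, then after the two joins merges $Q$ into $A$'s label and brings in $w$ last to join with~$C$).
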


\begin{proof}
    For the modular-width,
    consider a modular decomposition of $G$ with modules containing vertex sets
    $\{w\}$, $C$, $Q$, and $\bigcup_{i \in [3n]} V_i$ respectively.
    As for the linear clique-width,
    we introduce the vertices of $G$ as follows:
    \begin{itemize}
        \item first introduce%
        \footnote{When we say that we introduce a clique using labels $i$ and $j$,
        we first introduce one of its vertices with label $i$,
        and then as long as there is an unintroduced vertex of the clique,
        we introduce it with label $j$, join $i-j$,
        and relabel $j \to i$.}
        $G_1$ using labels $1$ and $2$,
        \item for $i \in [2,3n]$, (i) introduce $G_i$ using labels $2$ and $3$;
        (ii) relabel $2 \to 1$,
        \item introduce $C$ using labels $2$ and $3$,
        \item introduce vertex $w$ with label $3$,
        and join $2$--$3$,
        \item relabel $2 \to 1$,
        \item for each vertex $v \in Q$,
        (i) introduce $v$ using label $2$;
        (ii) join $1$--$2$; 
        (iii) relabel $2 \to 1$.
    \end{itemize}
    This completes the proof.
\end{proof}

\begin{lemma}
    $G$ admits a tree-model of four colors and depth $2$.
\end{lemma}

\begin{proof}
    We will present a tree-model $(T,S)$ of four colors and depth $2$ for $G$.
    Let $T$ be the rooted tree whose root is connected to the centers of $n+1$ stars, such that
    \begin{itemize}
        \item there is a star per graph $G_i$, whose leaves are $G_i$'s vertices, colored with color $1$,
        \item there is an additional star, whose leaves are the rest of the vertices of the graph,
        namely the vertices belonging to $Q$ and $C$ as well as vertex $w$.
        Color the vertices belonging to $Q$ with color $2$, those belonging to $C$ with color $3$,
        and $w$ with color $4$.
    \end{itemize}
    Finally, let $S = \{ (1,1,1), \, (1,2,2), \, (2,1,2),\, (2,2,1), \, (3,3,1), \, (2,3,1), \, (3,2,1), \, (3,4,1), \, (4,3,1) \}$.
    This completes the proof.
\end{proof}

\begin{lemma}\label{lem:hardness:modular_width:threep->stc}
    If $\mathcal{I}$ is a yes-instance of {\ThreeP}, then $\stc(G) \le k$.
\end{lemma}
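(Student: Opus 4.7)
The plan is to mimic the spanning tree construction from the proof of \cref{thm:disjoint-union}, augmented by a star structure inside the large clique $C$. Concretely, given a balanced partition $(A_{1},\dots,A_{n})$ of $[3n]$ with $\sum_{i \in A_{j}} a_{i} = B$, I would build a spanning tree $T$ whose edge set consists of:
\begin{itemize}
    \item the edge $\{w, c_{1}\}$,
    \item all edges $\{c_{1}, c_{z}\}$ for $z \in [2, M]$ (a star inside $C$ rooted at $c_{1}$),
    \item all edges $\{c_{1}, v_{j}\}$ for $j \in [n]$ (attaching $Q$ to the center $c_{1}$),
    \item all edges $\{u, v_{j}\}$ with $u \in V_{i}$ and $i \in A_{j}$ (attaching each cluster $V_{i}$ as a leaf-star to the unique $v_{j}$ with $i \in A_{j}$).
\end{itemize}
This is clearly a spanning tree since each $V_{i}$ hangs from exactly one $v_{j}$, all of $Q$ hangs from $c_{1}$, all of $C$ hangs from $c_{1}$, and $w$ is attached via $c_{1}$.

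Next I would bound the congestion of every edge of $T$. The edges $\{u, v_{j}\}$ with $u \in V_{i}$, the edges $\{c_{1}, c_{z}\}$ for $z \ge 2$, and the edge $\{w, c_{1}\}$ all have a leaf endpoint, so their congestion equals $\deg_{G}$ of that leaf. A short calculation gives at most $(a_{i}-1) + n < B + n$, $M + n$, and $M$ respectively, each comfortably below $k = M + 2(n-1)B$ (using $n \ge 3$ and $B \ge 1$). The only remaining edges are $\{c_{1}, v_{j}\}$, which are the critical ones; cutting such an edge separates the component
\[
    S_{j} = \{v_{j}\} \cup \bigcup_{i \in A_{j}} V_{i}
\]
from the rest of $T$. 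Counting edges of $G$ crossing this cut gives three contributions: the $M$ edges from $v_{j}$ to $C$, the $(n-1)B$ edges from $v_{j}$ to $\bigcup_{i \notin A_{j}} V_{i}$, and the $(n-1) \cdot B$ edges from $\bigcup_{i \in A_{j}} V_{i}$ to $Q \setminus \{v_{j}\}$ (using $\sum_{i \in A_{j}} a_{i} = B$). All other potential crossing edges either lie strictly inside $S_{j}$ (inside $V_{i}$'s, or from $V_{i}$ to $v_{j}$) or strictly outside (between $C$ and $Q \setminus \{v_{j}\}$, or inside $C$, or involving $w$). Summing yields exactly $M + 2(n-1)B = k$.

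I do not anticipate a real obstacle here; the delicate point is merely the careful bookkeeping of the cut at $\{c_{1}, v_{j}\}$, making sure that the three types of crossing edges are counted without double-counting, and that the balanced-partition hypothesis $\sum_{i \in A_{j}} a_{i} = B$ is used exactly once to bound the $V_{i}$-to-$Q$ contribution. Once this is written out, $\cng_{G}(T) \le k$ follows immediately.
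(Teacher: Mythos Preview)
Your proposal is correct and essentially identical to the paper's own proof: the paper constructs the very same spanning tree (a star centered at $c_{1}$ over $(Q \cup C \cup \{w\}) \setminus \{c_{1}\}$, with each $V_{i}$ attached as leaves to its assigned $v_{j}$) and performs the same three congestion computations, arriving at $M + 2(n-1)B = k$ on the critical edges $\{c_{1}, v_{j}\}$.
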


\begin{proof}
    Let $(A_{1},\dots,A_{n})$ be a partition of $[3n]$
    such that $|A_j| = 3$ and $\sum_{i \in A_{j}} a_{i} = B$ for all $j \in [n]$.
    We construct a spanning tree $T$ of $G$ by setting
    \[
        E(T) = \setdef{\{c_1, u\}}{u \in (Q \cup C \cup \{w\}) \setminus \{c_1\}}
                \cup \bigcup_{j \in [n]} \setdef{\{u, v_{j}\}}{u \in V_{i}, \, i \in A_{j}}.
    \]

    Each edge $\{u, v_{j}\} \in E(T)$ with $u \in V_{i}$ has congestion
    $\cng_{G,T}(\{u, v_{j}\}) = \deg_{G}(u) = n + a_{i} - 1 < k$ since $u$ is a leaf of $T$.
    Furthermore, all vertices $u \in (C \cup \{w\}) \setminus \{c_1\}$ are leaves in $T$,
    thus $\cng_{G,T}(\{c_1,u\}) = \deg_{G}(u) \le M + n < k$.

    For each $v_{j}$, let $S_{j}$ be the set of vertices of the component of
    $T - \{c_1, v_{j}\}$ containing $v_{j}$.
    By the construction, we can see that $S_{j} = \{v_{j}\} \cup \bigcup_{i \in A_{j}} V_{i}$.
    Thus,
    \begin{align*}
      \cng_{G,T}(\{c_1, v_{j}\})
      &= |E(S_{j}, V(G) \setminus S_{j})|\\
      &= |E(\{v_{j}\}, C)| + |E(\{v_{j}\}, \; \textstyle\bigcup_{i \in [3n] \setminus A_{j}} V_{i})|
      + |E(\textstyle\{v_{j}\} \cup \bigcup_{i \in A_{j}} V_{i}, \; Q \setminus \{v_j\} \})|\\
      &= M + (n-1)B + (n-1) + (n-1)B
      = k.
    \end{align*}
    This completes the proof.
\end{proof}

\begin{lemma}\label{lem:hardness:modular_width:stc->threep}
    If $\stc(G) \le k$, then $\mathcal{I}$ is a yes-instance of {\ThreeP}.
\end{lemma}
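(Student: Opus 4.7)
The plan is to mirror the argument used for the analogous direction in the disjoint-union reduction of \cref{ssec:disjoint-union} (whose core is \cref{clm:disjoint-union_partition}), with an extra preliminary step to extract the needed structural information from the clique $C$, which here plays the role of the heavy weighted edges previously incident to $r$.

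\textbf{Step 1 (Star structure on $C$).} For any edge $e \in E(T)$ splitting $V(G)$ into $(S, \bar S)$, let $m_e = \min(|C \cap S|, |C \cap \bar S|)$. The $C$-$C$ edges of $G$ alone contribute $m_e(M - m_e)$ to $\cng_{G,T}(e)$; combined with $\cng_{G,T}(e) \le k = 3M/2$ and $M \ge 24$ (implied by $n \ge 3$ and $B \ge 3$), this forces $m_e \in \{0,1\}$ for every $e \in E(T)$. Consequently, every edge of the Steiner subtree of $T$ spanning $C$ separates exactly one $C$-vertex from the rest. Orienting each such edge towards its heavy side yields a unique sink $c^* \in V(T)$. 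A case analysis rules out $c^* \in V_i$ (the $G$-degree $n + a_i - 1$ of a $V_i$-vertex is too small to support the required $T$-degree $\ge M - 1$), $c^* = w$ (the congestion of the $\{w, c_z\}$ edge leading to any $v_j$'s branch would exceed $k$), and $c^* \in Q$ (leads to an analogous overload via edges $\{v_{j^*}, c_z\}$), forcing $c^* \in C$. A companion bound on each $\{c^*, c_z\} \in E(T)$ further shows this branch is exactly $\{c_z\}$, since any $Q$- or $V_i$-vertex in it would push the congestion beyond $k$.

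\textbf{Step 2 (Analogue of \cref{clm:disjoint-union_partition}).} For each remaining $T$-branch $U$ at $c^*$ (different from $\{w\}$ and $\{c_z\}$), $U \cap C = \varnothing$; the $Q$-$C$ crossings contribute $|Q \cap U| \cdot M$ to the congestion of the connecting edge, forcing $|Q \cap U| \le 1$. Since $V_i$-vertices can reach $c^*$ only via a $Q$-vertex, each $V_i$-containing branch contains exactly one $v_j$, giving a bijection between the $n$ non-trivial branches and $Q$. Next, following the pattern of \cref{clm:disjoint-union_partition}, we show $|N_T(V_i) \cap Q| = 1$ for every $i \in [3n]$. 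If not, then since $G_i = K_{a_i}$ is connected and distinct branches at $c^*$ are joined only through $c^*$, at least one $G_i$-edge has its $T$-detour traversing $c^*$. Summing $R = \sum_{j \in [n]} \cng_{G,T}(e_j)$, where $e_j$ is the edge from $c^*$ into $v_j$'s branch, the baseline contributions (the $M$ edges from $v_j$ to $C$ per $e_j$, plus the $2(n-1)$ pairwise double counting per vertex in $\bigcup_i V_i$) already sum to $nk$; the extra $+2$ from the offending $G_i$-edge yields $R > nk$, contradicting $\cng_{G,T}(e_j) \le k$ for all $j$.

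\textbf{Step 3 (Extracting the 3-Partition).} Define $A_j = \{i : V_i \cap N_T(v_j) \ne \varnothing\}$; by Step~2, $(A_1, \dots, A_n)$ partitions $[3n]$ and the $v_j$-branch equals $\{v_j\} \cup \bigcup_{i \in A_j} V_i$. Writing $S_j = \sum_{i \in A_j} a_i$, a direct computation of the $e_j$-cut gives
\[
  \cng_{G,T}(e_j) = M + (nB - S_j) + (n-1) S_j = M + nB + (n-2) S_j.
\]
Combined with $\cng_{G,T}(e_j) \le k = M + 2(n-1)B$, this simplifies to $S_j \le B$; summing $\sum_j S_j = nB$ forces $S_j = B$ for every $j$, and the constraints $a_i \in (B/4, B/2)$ then ensure $|A_j| = 3$, yielding the desired 3-Partition.

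\textbf{Main obstacle.} Step~1 is the principal additional difficulty compared to the disjoint-union lemma. There the heavy weights $\wt(\{r, v_j\})$ forced $\{r, v_j\} \in E(T)$ almost immediately; here, we must derive the analogous star structure on $C$ indirectly from the quadratic congestion contribution $m_e(M - m_e)$ of $C$-$C$ edges, and rule out aberrant sinks via a careful case analysis. Once Step~1 is in place, Steps~2 and~3 closely follow the original template with arithmetic adjusted to match the new construction.
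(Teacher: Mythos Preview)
Your proposal is correct. Steps~2 and~3 match the paper's argument essentially verbatim, but Step~1 takes a different route to the same structural claim.

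The paper establishes the star structure by a \emph{centroid} argument: it picks a vertex $u$ such that every component of $T-u$ has at most $|V(G)|/2$ vertices, then shows directly (using $|C_1|\cdot|C_2|\le k$ on a single well-chosen edge) that every such component contains at most one vertex of $Q\cup C\cup\{w\}$; a short connectivity argument then rules out $u=w$ and identifies the center as some $c_\ell\in C$ adjacent in $T$ to every other important vertex.

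Your approach instead applies the quadratic bound $m_e(M-m_e)\le k$ to \emph{every} tree edge to force $m_e\le 1$, orients each edge towards its $C$-heavy side, extracts a unique sink $c^*$, and then eliminates $c^*\in\bigcup_i V_i$, $c^*=w$, and $c^*\in Q$ by separate degree and congestion arguments. This works, but several details are left implicit and do need checking (uniqueness of the sink; that $\{c^*,c_z\}\in E(T)$ for each $c_z$, which requires ruling out that $c_z$'s component is attached to $c^*$ via a $Q$-vertex or via $w$; that $w$ is forced into its own singleton branch). All of these go through via the same $2M-2>k$ type bounds you sketch, though the paper's centroid route dispatches them in one stroke by bounding the number of important vertices per component globally. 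In exchange, your framing makes the role of the clique's quadratic congestion contribution more transparent and avoids the slightly delicate size estimate $|C_1|\le |V(G)|/2 < M-2$ the paper uses.
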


\begin{proof}
    Let $T$ be a spanning tree of $G$ with $\cng_{G}(T) \le k$.
    We first prove that $T - (\bigcup_{i \in [3n]} V_i)$ is a star centered on a vertex of $C$.

    \begin{claim}\label{claim:hardness:modular_width:star}
        There exists $\ell \in [M]$ such that $\{c_{\ell}, u\} \in E(T)$,
        for all $u \in (Q \cup C \cup \{w\}) \setminus \{c_{\ell}\}$.
    \end{claim}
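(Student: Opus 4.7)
The plan is to prove the claim through three congestion-based analyses that progressively constrain the structure of $T$ and expose the hub vertex $c_\ell$.

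First, I would show that $w$ is a leaf of $T$. Suppose otherwise, so $w$ has tree-degree $d \ge 2$ with tree-neighbors $c_{z_1}, \ldots, c_{z_d} \in C$, and let $T_i$ be the subtree on the $c_{z_i}$-side of $\{w, c_{z_i}\}$. Setting $t_i = |T_i \cap C|$ and $q_i = |T_i \cap Q|$, counting the $C$-$w$, $C$-$C$, and $C$-$Q$ edges of $G$ crossing the cut yields
\[ \cng_{G,T}(\{w, c_{z_i}\}) \ge t_i + t_i(M - t_i) + t_i(n - q_i) + q_i(M - t_i). \]
Since $M = 4(n-1)B$, $k = 3M/2$, and $n \ge 3$, a routine case analysis shows this exceeds $k$ whenever $(t_i, q_i) \ne (1, 0)$; and $(t_i, q_i) = (1, 0)$ forces $T_i = \{c_{z_i}\}$, because any $\bigcup_j V_j$-vertex can reach $c_{z_i}$ in $T$ only via a $Q$-vertex. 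Summing over $i$ gives $|V(T)| \le 1 + d \le 1 + M < |V(G)|$, a contradiction, so $d = 1$. Denote by $c_\ell$ the unique tree-neighbor of $w$.

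Next, I would show that every $C$-$C$ tree-edge has a leaf endpoint in $T$. For $\{c_z, c_{z'}\} \in E(T)$, split $T$ into $(T_a \ni c_z, T_b \ni c_{z'})$ and assume $w \in T_b$ without loss of generality. An analogous count gives
\[ \cng_{G,T}(\{c_z, c_{z'}\}) \ge s_a(M - s_a) + s_a(n - q_a) + (M - s_a) q_a + s_a, \]
and the same type of case analysis forces $s_a = 1$ and $q_a = 0$, hence $T_a = \{c_z\}$. Therefore $c_z$ is a leaf of $T$ whose only tree-neighbor is $c_{z'}$, so $T[C]$ is a forest of stars; moreover, since $c_\ell$ is not a leaf of $T$ (it is tree-adjacent to $w$), $c_\ell$ must be the center of its star-component.

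Finally, I would conclude by rooting $T$ at $w$, so that $c_\ell$ is the unique child of $w$ and every other vertex is a descendant of $c_\ell$. For any $Q$-child $v_{j'}$ of $c_\ell$, let $s_a, q_a, v_a$ count the $C$-, $Q$-, and $\bigcup_j V_j$-vertices of the subtree at $v_{j'}$. Including the $Q$-$\bigcup_j V_j$ crossings in the bound, together with $k = M + 2(n-1)B$ and $n \ge 3$, the corresponding inequality forces $s_a = 0$, $q_a = 1$, and $v_a \le B$, so the subtree of $v_{j'}$ contains no further $C$- or $Q$-vertex. Combined with Step~2 (each $C$-child of $c_\ell$ is a singleton subtree), this traps every $v \in Q$ and every $c \in C \setminus \{c_\ell\}$ to be a direct child of $c_\ell$, which is exactly the claim. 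The main obstacle is the computational bookkeeping, since the three bounds intertwine $M$, $n$, $B$, $s_a$, $q_a$, and $v_a$; the tuning $M = 4(n-1)B$, $k = 3M/2$ is precisely what makes only the intended configuration satisfy $\cng_{G,T} \le k$ at each step.
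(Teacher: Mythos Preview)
Your argument is correct and follows a genuinely different route from the paper. The paper exploits a centroid: it takes a vertex $u$ such that every component of $T-u$ has at most $|V(G)|/2$ vertices, and then a single clique-cut count (namely $|C_1|\cdot|C_2|\ge 2(M-2)>3M/2$ whenever $2\le |C_1|\le M-2$) shows each component of $T-u$ contains at most one important vertex; the rest is a short connectivity argument identifying $u$ with some $c_\ell$. Your proof instead proceeds bottom-up, successively pinning down the structure by bounding congestion at three edge types (edges at $w$, then $C$--$C$ edges, then $c_\ell$--$Q$ edges), each time forcing the ``far'' side to be essentially trivial.

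What each approach buys: the centroid trick compresses everything into one inequality and avoids the repeated case analyses you perform in Steps~1--3; on the other hand, your argument is entirely self-contained and never invokes the (easy but external) centroid fact, and it makes the role of each term in the cut count completely explicit. Your Step~3 also extracts the extra information $v_a\le B$ for free, which the paper only obtains later when analyzing $\cng_{G,T}(\{c_\ell,v_j\})$. One small remark: in Step~1 you should note explicitly that $1\le t_i\le M-1$ (the upper bound because $d\ge 2$ forces some $c_{z_j}\notin T_i$), since the case analysis for the quadratic $t_i(M-t_i)$ term needs both ends; you implicitly use this but do not state it.
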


    \begin{claimproof}
        Let $Q \cup C \cup \{w\}$ denote the set of \emph{important} vertices of $G$.
        Since $T$ is a tree, there exists a vertex $u \in V(G)$ such that each connected component of $T - u$
        has size at most $|V(G)| / 2$.
        We will first prove that every connected component of $T - u$ contains at most one important vertex.

        Assume otherwise, and let $T_1$ denote a component of $T-u$ containing more than one important vertex,
        with $e$ being the single edge connecting $u$ and $T_1$.
        Let $T_2$ denote the component of $T - e$ containing $u$.
        Notice that $T_1$ and $T_2$ define a partition of the vertices of $C$,
        where $C_i = C \cap V(T_i)$ for $i \in [2]$.

        We show that $2 \leq |C_1| < |C| - 2 = M-2$.
        If $C_1 = \varnothing$, then $\cng_{G,T}(e) \ge 2M > k$, due to the edges between the (at least $2$)
        important vertices of $T_1$ and all vertices of $C_2 = C$.
        If $|C_1| = 1$, then $\cng_{G,T}(e) \ge 2(M-1) > k$ in an analogous way.
        Lastly, $|C_1| \le |V(T_1)| \leq |V(G)| / 2 = (nB + n + M + 1)/2 < M - 2$.

        In that case, it holds that
        \begin{align*}
            \cng_{G,T}(e) &\ge |E(C_1, C_2)|
            =
            |C_1| \cdot (M - |C_1|) \\
            &\ge 2 \cdot (M-2) \\
            &> \frac{3M}{2}
            = k,
        \end{align*}
        a contradiction.

        Consequently, every connected component of $T - u$ contains at most one important vertex.
        In that case, it follows that
        \begin{itemize}
            \item either $u = w$, that is, every connected component of $T-w$ contains at most one important vertex,

            \item or $N_T(w) = \{u\}$, as every neighbor of $w$ is an important vertex.
        \end{itemize}
        Assume the first case, and notice that then, every vertex in the connected component of $T-w$ that contains $v_j \in Q$ belongs to $Q \cup \bigcup_{i \in [3n]} V_i$.
        Since none of these vertices has an edge with $w$, $T$ is disconnected, a contradiction.
        The discussion so far implies that there exists $\ell \in [M]$ such that
        $c_\ell \in C \cap N_T(w)$ denotes the single neighbor of $w$ in $T$,
        with every connected component of $T - c_\ell$ having at most one important vertex.

        To finish our proof,
        assume there exists $v \in (Q \cup C) \setminus \{c_{\ell}\}$ such that $\{c_\ell, v \} \notin E(T)$.
        Consider the connected component of $T-c_\ell$ containing $v$, and notice that apart from $v$,
        it contains only vertices belonging to $\bigcup_{i \in [3n]} V_i$, none of which has an edge with $c_\ell$.
        In that case, $T$ is disconnected, which is a contradiction.
    \end{claimproof}

    Next we show that for each $i \in [3n]$, there exists exactly one index $j \in [n]$ such that
    at least one vertex in $V_{i}$ is adjacent to $v_{j}$ in $T$.
    The proof is analogous to that of \cref{clm:disjoint-union_partition}.

    \begin{claim}\label{claim:hardness:modular_width:partition}
        For all $i \in [3n]$, there exists $j \in [n]$
        such that $N_{T}(V_i) \cap Q = \{v_j\}$.
    \end{claim}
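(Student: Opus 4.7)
The plan is to mirror the argument used in \cref{clm:disjoint-union_partition}, adapting the counting to the modified construction in which the heavy weighted edges incident to $r$ have been replaced by the clique $C$ attached to the star center $c_\ell$ (guaranteed by \cref{claim:hardness:modular_width:star}). For the existence of at least one such $j$, observe that in $G$ every edge leaving $V_i$ goes to $Q$, so the connectivity of $T$ forces $N_T(V_i) \cap Q \ne \varnothing$.

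For the uniqueness part, I would proceed by contradiction and assume that for some $h \in [3n]$ there exist two distinct $v_j, v_{j'} \in N_T(V_h) \cap Q$. Since $V_h$ induces a clique and every edge $\{c_\ell, v_{j''}\}$ belongs to $T$ by \cref{claim:hardness:modular_width:star}, the vertices of $V_h$ are split across at least two of the components of $T - c_\ell$ that are rooted at distinct vertices of $Q$; the clique structure on $V_h$ then supplies at least one edge $e_h \in E_h$ whose detour in $T$ passes through $c_\ell$, and hence through two distinct edges of the form $\{c_\ell, v_{j''}\}$.

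I would then bound $R = \sum_{j \in [n]} \cng_{G,T}(\{c_\ell, v_j\})$ from below by enumerating three disjoint groups of contributing edges. For each $u \in \bigcup_{i \in [3n]} V_i$, let $j_u \in [n]$ denote the unique index such that $v_{j_u}$ lies on the $u$--$c_\ell$ path in $T$. The contributions are: (i) each of the $nM$ edges in $E(Q,C)$ contributes exactly $1$ to $R$, since its detour uses exactly one edge of the form $\{c_\ell, v_j\}$; (ii) for each $u \in \bigcup_i V_i$, the $n-1$ edges $\{u, v_j\} \in E(G)$ with $j \ne j_u$ each contribute $2$, as their detours traverse both $\{c_\ell, v_{j_u}\}$ and $\{c_\ell, v_j\}$, yielding a total of $2(n-1) \cdot nB$; and (iii) the edge $e_h$ contributes an additional $2$ by the argument above. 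Summing these gives
\[
R \;\ge\; nM + 2(n-1)nB + 2 \;>\; n \bigl( M + 2(n-1)B \bigr) \;=\; nk,
\]
which contradicts $\cng_G(T) \le k$ and establishes the claim.

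The main obstacle will be the bookkeeping: one must carefully verify that the three groups of edges counted above are pairwise disjoint (which holds because $e_h \in E_h$ lies outside $E(Q,C) \cup E(Q, \bigcup_i V_i)$) and that the remaining edges of $G$, namely those internal to $C$, incident to $w$, or lying in some $E_i$ with both endpoints in the same component of $T - c_\ell$, contribute $0$ to $R$, so omitting them preserves the validity of the lower bound.
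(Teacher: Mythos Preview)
Your proposal is correct and follows essentially the same approach as the paper's proof: both argue via the sum $R = \sum_{j\in[n]} \cng_{G,T}(\{c_\ell, v_j\})$, accounting for the contributions from $E(Q,C)$, from the edges $\{u,v_j\}$ with $j \ne j_u$, and from the single crossing clique edge $e_h$, to derive the same contradiction $R \ge nM + 2(n-1)nB + 2 > nk$.
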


    \begin{claimproof}
        Clearly, $N_{T}(V_i) \cap Q \neq \varnothing$ since $T$ is a spanning tree.
        Assume that there exists $h \in [3n]$ such that $|N_{T}(V_h) \cap Q| \ge 2$,
        with $u^h_1, u^h_2 \in V_h$ denoting the vertices of $V_h$ incident to edges with distinct endpoints in $Q$.
        We remark that due to \cref{claim:hardness:modular_width:star}, $u^h_1$ and $u^h_2$ are distinct vertices of $V_h$.
        Since $V_{h}$ induces a clique and $\{c_{\ell}, v_{j}\} \in E(T)$ for all $j \in [n]$,
        it holds that $e_h = \{u^h_1, u^h_2\} \notin E(T)$,
        meaning that the detour for $e_h$ in $T$ contains $c_{\ell}$.
        Let $R = \sum_{j \in [n]} \cng_{G,T}(\{c_{\ell}, v_{j}\})$.
        Notice that $R \le nk$, since $\cng_{G,T}(e) \le k$ for all $e \in E(T)$.
        The edge $e_{h}$ contributes~$2$ to $R$ as its detour passes through two edges incident to $c_{\ell}$.
        Each edge $\{c_{\ell}, v_{j}\}$ contributes at least $M + n-1$ to $R$,
        because $v_j$ is adjacent to the vertices of $C$ and $Q \setminus \{v_{j}\}$.
        Now, for $u \in \bigcup_{i \in [3n]} V_{i}$,
        let $j_{u} \in [n]$ be the index such that $v_{j_{u}}$ appears in the $u$--$c_{\ell}$ path $P_{u,c_{\ell}}$ in $T$.
        Since $\{c_{\ell},v_{j}\} \in E(T)$ for each $j \in [n]$, such $j_{u}$ is unique and
        $v_{j_{u}}$ appears right before $c_{\ell}$ in $P_{u,c_{\ell}}$.
        Observe that for each $j \in [n] \setminus \{j_{u}\}$,
        the detour for $\{u, v_{j}\} \in E(G)$ in $T$ consists of $P_{u,c_{\ell}}$ and $v_{j}$,
        where $v_{j}$ appears right after $c_{\ell}$.
        This detour contributes~$1$ to the congestion of each of the edges $\{c_{\ell}, v_{j_{u}}\}$
        and $\{c_{\ell},v_{j}\}$.
        The discussion so far implies that
        \[
          R
          \ge 2 + \sum_{j \in [n]} (M+n-1) + \sum_{u \in \bigcup_{i \in [3n]} V_{i}} 2(n-1)
	  = 2 + n (M + n-1) + 2(n-1) n B = 2 + nk
          > nk,
        \]
        which is a contradiction.
    \end{claimproof}

    For $j \in [n]$, let $A_{j} = \setdef{i \in [3n]}{\exists u \in V_{i}, \, \{u, v_{j}\} \in E(T)}$.
    Due to \cref{claim:hardness:modular_width:partition} it follows that
    $(A_{1}, \dots, A_{n})$ is a partition of $[3n]$.
    In particular, the set of vertices of the component of $T - \{c_{\ell}, v_{j}\}$ containing $v_{j}$
    is $\{v_{j}\} \cup \bigcup_{i \in A_{j}} V_{i}$.
    This implies that
    \begin{align*}
      \cng_{G,T}(\{c_{\ell}, v_{j}\})
      &=
      |E(\{v_j\}, C)|
      + |E(\{v_{j}\}, \; \textstyle\bigcup_{i \in [3n] \setminus A_{j}} V_{i})|
      + |E(\textstyle\{v_{j}\} \cup \bigcup_{i \in A_{j}} V_{i}, Q \setminus \{v_j\})|
      \\
      &=
      M + \sum_{i \in [3n] \setminus A_{j}} a_{i}
      + (n-1) + (n-1)\sum_{i \in A_{j}} a_{i}
      \\
      &= M + (n-1) + nB + (n-2)\sum_{i \in A_{j}} a_{i}.
    \end{align*}
    Combining this with the assumption $\cng_{G,T}(c_{\ell}, v_{j}) \le k = M + (n-1) + 2(n-1)B$,
    we obtain that $\sum_{i \in A_{j}} a_{i} \le B$. (Recall that $n \ge 3$.)
    Since $\sum_{i \in [3n]} a_{i} = nB$,
    we have $\sum_{i \in A_{j}} a_{i} = B$ for all $j \in [n]$.
\end{proof}
%
By \cref{lem:hardness:modular_width:threep->stc,lem:hardness:modular_width:stc->threep},
\cref{thm:hardness:modularwidth} follows.
\end{proof}

\subsection{Maximum Degree}\label{ssec:maxdeg}

The last result of \cref{sec:hardness} is the NP-hardness on graphs of constant maximum degree (\cref{thm:maxdeg}).
We note that subsequent work by~\cite{arxiv/AtaligCDKLSZ26}
improves over \cref{thm:maxdeg} by showing that {\STC} is NP-hard even on subcubic graphs.
We first present an alternative gadget for the double-weighted edges
which we subsequently use in our proof.

\paragraph*{Double-weighted edges.}
For the sake of simplicity in our reductions, we will use the concept of double-weighted edges introduced by~\cite{algorithmica/LuuC24}.
A \emph{double edge-weighted graph} $G = (V, E; \wt_{1}, \wt_{2})$ is a graph with two edge-weight functions $\wt_{1}, \wt_{2} \colon E \to \Z^{+}$.
For simplicity, let $\wt(e)$ denote the pair $(\wt_{1}(e), \wt_{2}(e))$ for $e \in E$.
Let $T$ be a spanning tree of $G$.
When considering the congestion of $T$,
the double weights of the edges work slightly differently from the ordinary (single) edge weight considered in \cref{ssec:disjoint-union}.
If $e \notin E(T)$, then it contributes $\wt_{1}(e)$ to the congestions of the edges in the detour for $e$ in $T$;
if $e \in E(T)$, it contributes $\wt_{2}(e)$ to the congestion of itself.
That is, for $e \in E(T)$, it holds that
\[
  \cng_{G,T}(e) = \wt_{1}(E(V(T_{e,1}), V(T_{e,2})) \setminus \{e\}) + \wt_{2}(e).
\]

\cite{algorithmica/LuuC24} showed that for every positive integer $k$,
a double-weighted edge $e$ with $\wt_{1}(e) \le \wt_{2}(e) < k$
can be replaced with a gadget consisting of unweighted edges (i.e., edges $e'$ with $\wt_{1}(e') = \wt_{2}(e') = 1$)
without changing the property of having spanning tree congestion at most~$k$.
Their gadget increases the degree of one endpoint of $e$ by $\wt_{1}(e) - 1$
and the other by $\wt_{1}(e) \cdot (\wt_{2}(e) - \wt_{1}(e) + 1) - 1$.
Because of this increase of the degree, and as in the following reduction proving \cref{thm:maxdeg}, $w_2$ is unbounded, we cannot use their gadget in our proof.

In the following, we present an alternative gadget for double-weighted edges that does not increase the degree too much.
For an integer $n \ge 2$, the $n \times n$ grid is the Cartesian product of two $n$-vertex paths.
We call the degree-$2$ vertices in a grid its \emph{corners}.
It is known that the spanning tree congestion of the $n \times n$ grid is $n$~\citep{dmgt/CastejonO09,dm/Hruska08}.

\begin{lemma}\label{lem:double-weighted_edge}
Let $k$ be a positive integer
and $G = (V,E; \wt_{1}, \wt_{2})$ be a double edge-weighted graph
with an edge $e = \{u,v\} \in E$ satisfying $\wt_{1}(e) < \wt_{2}(e) < k$.
Let $G'$ be the graph obtained from $G$ by the following modification (see \cref{fig:double-weight}~(left)):
\begin{enumerate}
  \item remove $e$;
  \item add $\wt_{1}(e)$ copies of the $(\wt_{2}(e) - \wt_{1}(e) + 1) \times (\wt_{2}(e) - \wt_{1}(e) + 1)$ grid;
  \item for each grid added in the previous step, add edges $\{u,c\}$ and $\{v, c'\}$,
  where $c$ is an arbitrary corner of the grid and $c'$ is the opposite corner (i.e., the corner furthest from $c$);
  \item for each new edge $f \in E(G') \setminus E(G)$, set $\wt_{1}(f) = \wt_{2}(f) = 1$.
\end{enumerate}
Then, $\stc(G) \le k$ if and only if $\stc(G') \le k$.
The degrees of $u$ and $v$ increase by $\wt_{1}(e)-1$ and the maximum degree among newly added vertices is at most~$4$.
\end{lemma}

\begin{figure}[tbh]
\centering
\includegraphics[scale=1]{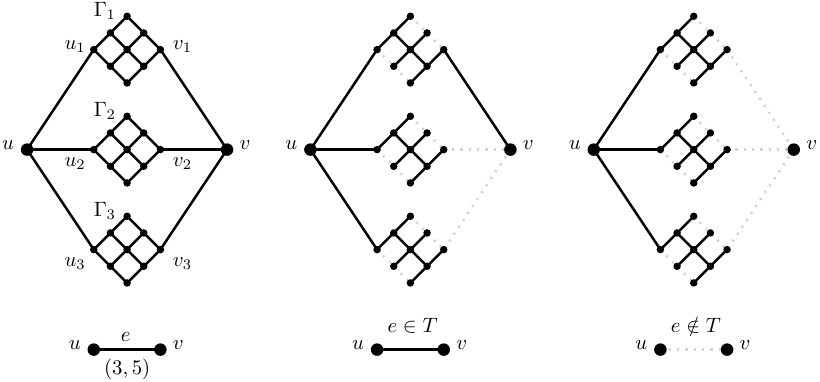}
\caption{(Left) The gadget for a double-weighted edge $e = \{u,v\}$ with $(\wt_{1}(e), \wt_{2}(e)) = (3,5)$.
There are $\wt_{1}(e)$ ($ = 3$) grids connected to $\{u,v\}$ and each grid is of size $(\wt_{2}(e) - \wt_{1}(e) + 1) \times (\wt_{2}(e) - \wt_{1}(e) + 1)$ ($= 3 \times 3$).
(Center \& Right) The intersection of the gadget and the spanning tree $T'$ of $G'$ obtained from a spanning tree $T$ of $G$ for the cases $e \in E(T)$ and $e \notin E(T)$: in the former case, the edges $\{v_1,v\},\{v_2,v\},\{v_3,v\}$ and some edges in $\Gamma_1$ contribute to the congestion of the $u$--$v$ path in $T'$; in the latter, only the edges $\{v_1,v\},\{v_2,v\},\{v_3,v\}$ do.}
\label{fig:double-weight}
\end{figure}

\begin{proof}
The claims about the vertex degrees hold clearly from the construction.
In the following, we show that $\stc(G) \le k$ if and only if $\stc(G') \le k$.

For simplicity, let $\wt(e) = (a,b)$.
We call the $a$ copies of the $(b-a+1) \times (b-a+1)$ grid $\Gamma_{1}, \dots, \Gamma_{a}$.
For $i \in [a]$, let $u_{i}$ and $v_{i}$ denote the neighbors of $u$ and $v$ in $\Gamma_{i}$, respectively.

\proofsubparagraph*{The only-if direction.}
Assume that $\stc(G) \le k$ and $T$ is a spanning tree of $G$ with congestion at most~$k$.
We handle the cases $e \in E(T)$ and $e \notin E(T)$ separately.
In both cases, let $T_{i}$ be a spanning tree of $\Gamma_{i}$ with congestion $b-a+1$.
Such a spanning tree exists as the spanning tree congestion of the $n \times n$ grid is $n$~\citep{dmgt/CastejonO09,dm/Hruska08}.

We first consider the case where $e \in E(T)$.
From $T$, we construct a spanning tree $T'$ of $G'$
by removing $e$,
adding the edges in $T_{i}$ and the edge $\{u, u_{i}\}$ for $i \in [a]$,
and adding the edge $\{v, v_{1}\}$ (see \cref{fig:double-weight}~(center));
that is,
\[
  E(T') = (E(T) \setminus \{e\}) \cup \bigcup_{i \in [a]} (E(T_{i}) \cup \{\{u, u_{i}\}\}) \cup \{\{v, v_{1}\}\}.
\]
We can see that $T'$ is a spanning tree of $G'$.
Observe that the edges in $E(G') \setminus E(G)$ do not use the edges in $E(T) \setminus E(T')$ in their detours in $T'$.
Observe also that for each edge $f \in E(G) \setminus \{e\}$, if the detour $P$ for $f$ in $T$ includes $e$,
then the detour for $f$ in $T'$ is obtained from $P$ by replacing $e$ with the $u$--$v$ path in $T'$.

Let $f \in E(T')$.
If $f \in E(T) \setminus \{e\}$, then $\cng_{G',T'}(f) = \cng_{G,T}(f)$.
In the following, assume that $f \notin E(T)$.
If $f = \{u, u_{i}\}$ for some $i \in [2,a]$, then $\cng_{G',T'}(f) = |\{\{u, u_{i}\}, \{v, v_{i}\}\}| = 2 \le k$.
If $f \in E(\Gamma_{i})$ for some $i \in [a]$ and $f$ does not belong to the $u$--$v$ path in $T'$,
then $\cng_{G',T'}(f) \le \cng_{\Gamma_{i},T_{i}}(f) + 1 \le (b-a+1) + 1 \le k$, where the term $+1$ may appear by the detour for $\{v, v_{i}\}$ when $i \ne 1$.
In the remaining cases, $f$ belongs to the $u$--$v$ path in $T'$.
If $f$ is $\{u, u_{1}\}$ or $\{v, v_{1}\}$, then
\[
  \cng_{G',T'}(f) =
  \wt_{1}(E(V(T_{e,1}), V(T_{e,2})) \setminus \{e\}) + |\{f\} \cup \setdef{\{v, v_{i}\}}{i \in [2,a]}|.
\]
Similarly, if $f \in E(\Gamma_{1})$, then
\[
  \cng_{G',T'}(f) \le
  \wt_{1}(E(V(T_{e,1}), V(T_{e,2})) \setminus \{e\}) + |\setdef{\{v, v_{i}\}}{i \in [2,a]}| + \cng_{\Gamma_{1},T_{1}}(f).
\]
In both cases, $\cng_{G',T'}(f) \le \cng_{G,T}(e) \le k$ holds
since $|\setdef{\{v, v_{i}\}}{i \in [2,a]}| + \cng_{\Gamma_{1},T_{1}}(f) \le (a-1) + (b-a+1) = b = \wt_{2}(e)$.

Next, consider the case where $e \notin E(T)$.
From $T$, we construct a spanning tree $T'$ of $G'$
by adding the edges in $T_{i}$ and the edge $\{u, u_{i}\}$ for $i \in [a]$ (see \cref{fig:double-weight}~(right));
that is,
\[
 E(T') = E(T) \cup \bigcup_{i \in [a]} (E(T_{i}) \cup \{\{u, u_{i}\}\}).
\]
Clearly, $T'$ is a spanning tree of $G'$.
The detours for the $a$ edges $\{v, v_{i}\}$ ($i \in [a]$) contain the $u$--$v$ path in $T$.
No other edge in $E(T') \setminus E(T)$ contributes to the congestions of edges in $E(T)$.
Thus, if $f \in E(T)$ and $f$ does not belong to the $u$--$v$ path in $T$, then
$\cng_{G',T'}(f) = \cng_{G,T}(f) \le k$.
If $f \in E(T)$ and $f$ belongs to the $u$--$v$ path in $T$, then
$\cng_{G',T'}(f) = \cng_{G,T}(f) - \wt_{1}(e) + |\setdef{\{v, v_{i}\}}{i \in [a]}| = \cng_{G,T}(f) \le k$.
Finally, we consider the case where $f \in E(T') \setminus E(T)$.
If $f$ is $\{u,u_{i}\}$ for some $i \in [a]$, then $\cng_{G',T'}(f) = |\{\{u,u_{i}\}, \{v,v_{i}\}\}| = 2 \le k$.
Otherwise, $f \in E(\Gamma_{i})$ for some $i \in [a]$ and $\cng_{G',T'}(f) \le \cng_{\Gamma_{i},T_{i}}(f) + 1 \le (b-a+1) + 1 \le b+1 \le k$,
where the term $+1$ is necessary if $f$ is on the $u_{i}$--$v_{i}$ path.

\proofsubparagraph*{The if direction.}
Assume that $\stc(G') \le k$ and $T'$ is a spanning tree of $G'$ with congestion at most~$k$.
Since the vertex set $\{u,v\}$ is a vertex-cut that separates the gadget replacing $e$ and the rest,
the $u$--$v$ path $P$ in $T'$ is either completely contained in the gadget or not intersecting it at all.

Let us first consider the case where the $u$--$v$ path $P$ is contained in the gadget.
From $T'$, we construct a spanning tree $T$ of $G$ by removing the gadget and then adding $e$;
that is, $E(T) = (E(T') \cap E(G)) \cup \{e\}$.

For each $f \in E(G)$, if the detour for $f$ in $T'$ contains $P$,
then its detour in $T$ is obtained from the one in $T'$ by replacing $P$ with $e$;
otherwise, the detour for $f$ in $T$ is the same as the one in $T'$.
This implies that $\cng_{G,T}(f) = \cng_{G',T'}(f)$ for $f \in E(T) \setminus \{e\}$.

Now we show that $\cng_{G,T}(e) \le k$.
There is exactly one index $i \in [a]$ such that $P$ contains both $\{u, u_{i}\}, \{v, v_{i}\}$ and a $u_{i}$--$v_{i}$ path.
It is known that in every spanning tree of the $n \times n$ grid,
the path from a corner to the opposite corner contains an edge with congestion at least~$n$~\citep{dm/Hruska08}.
This implies that there is an edge $e_{i}$ in $P$
such that at least $b-a+1$ edges in $\Gamma_{i}$ use $e_{i}$ in their detours in $T'$.
For each $j \in [a] \setminus \{i\}$, there is a $u$--$v$ path whose internal vertices belong to $\Gamma_j$,
and at least one edge in this path uses $P$ in its detour.
In total, at least $b$ edges in the gadget use $e_{i}$ in their detours.
Furthermore, if an edge $f \in E(G) \setminus \{e\}$ uses $e$ in its detour in $T$,
then $f$ uses $P$ (and thus $e_{i}$) in its detour in $T'$. Thus we have
\begin{align*}
  \cng_{G,T}(e)
  &=
  \wt_{1}(E(V(T_{e,1}), V(T_{e,2})) \setminus \{e\}) + \wt_{2}(e)
  \\
  &=
  \wt_{1}(E(V(T_{e,1}), V(T_{e,2})) \setminus \{e\}) + b
  \le
  \cng_{G',T'}(e_{i}) \le k.
\end{align*}

Next we consider the case where the $u$--$v$ path $P$ does not intersect the gadget.
From $T'$, we construct a spanning tree $T$ of $G$ by removing the gadget;
that is, $E(T) = E(T') \cap E(G)$.
If $f \in E(T) \setminus E(P)$,
the edges using $f$ for their detours are the same in $T$ and $T'$,
and thus $\cng_{G,T}(f) = \cng_{G',T'}(f) \le k$.

Now assume that $f \in E(P)$.
If an edge in $G$ uses $f$ in its detour in $T$, then it uses $f$ in its detour in $T'$ as well.
For each $i \in [a]$, one of $\{u, u_{j}\}$ and $\{v, v_{j}\}$ uses $P$ (and thus $f$) in its detour in $T'$.
On the other hand, $e$ uses $P$ in its detour in $T$.
Thus we have
\begin{align*}
  \cng_{G,T}(f)
  &=
  \wt_{1}(E(V(T_{f,1}), V(T_{f,2})) \setminus \{e,f\}) + \wt_{1}(e) + \wt_{2}(f)
  \\
  &=
  \wt_{1}(E(V(T_{f,1}), V(T_{f,2})) \setminus \{e,f\}) + a + \wt_{2}(f)
  \le
  \cng_{G',T'}(f) \le k.
\end{align*}
This completes the proof.
\end{proof}

The problem {\BSAT} (also appearing in the literature as \textsc{2P2N-3SAT}) is a restricted version of 3-SAT\@:
an instance of {\BSAT} consists of a set $X$ of $n$ variables and a set $C$ of $m$ clauses such that
each clause has exactly three literals corresponding to three different variables
and each variable appears exactly twice positively and exactly twice negatively.
It is known that {\BSAT} is NP-complete~\citep{eccc/ECCC-TR03-049},
even if the formula contains only monotone clauses~\citep{dam/DarmannD21}.

\begin{theorem}\label{thm:maxdeg}
  {\STC} is NP-complete on graphs of maximum degree at most~$8$.
\end{theorem}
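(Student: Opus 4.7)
The plan is to reduce from \BSAT{} by first constructing an intermediate double edge-weighted graph $H$ and then applying \cref{lem:double-weighted_edge} to each of its double-weighted edges, producing an unweighted graph $G$ with $\stc(G) = \stc(H)$ and $\Delta(G) \le 8$. The target congestion $k$ will be a constant depending on the edge weights used, and the construction will guarantee that $\stc(H) \le k$ iff the input \BSAT{} formula is satisfiable. Membership in NP is standard, so only hardness needs to be argued.

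First, I would design variable and clause gadgets in the double edge-weighted setting. For each variable $x$ with its four literal occurrences (two positive, two negative), a variable gadget would admit exactly two ``low-congestion'' spanning tree configurations, one per truth value, exposing four port vertices corresponding to the literal occurrences. For each clause, a small gadget with three ports would be connected via double-weighted edges to the corresponding variable ports, with $\wt_2$ tuned so that the threshold $k$ is respected iff at least one literal port is in a ``satisfied'' configuration. All double-weighted edges would use $\wt_1(e) = 2$, so that \cref{lem:double-weighted_edge} adds only $1$ to each endpoint's degree and introduces intermediate vertices of degree at most $4$.

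Next I would verify correctness and the degree bound. For correctness, the forward direction would construct a spanning tree of $H$ from a satisfying assignment gadget-by-gadget and check congestions edge-by-edge. The backward direction would argue that any spanning tree of congestion at most $k$ must place each variable gadget in one of its two canonical configurations (otherwise some edge in the gadget is overloaded, analogously to the cut-based arguments in \cref{ssec:disjoint-union,ssec:modular_width}), so that the induced truth assignment satisfies every clause. For the degree bound, I would engineer $H$ so that every vertex has degree at most $6$ in $H$; then adding $\wt_1(e) - 1 = 1$ per incident double-weighted edge yields $\Delta(G) \le 8$, while the grid intermediate vertices already have degree at most $4$ by \cref{lem:double-weighted_edge}.

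The main obstacle is the tight degree budget. Keeping the port vertices at degree at most $6$ in $H$ while still encoding two valid configurations per variable and enforcing clause satisfaction requires careful gadget engineering, since each port must mediate both the internal structure of its gadget and its double-weighted edge to a clause; the restriction of \BSAT{} to exactly two positive and two negative occurrences per variable is essential here to keep the number of external edges per variable gadget constant. A secondary technicality is arranging $\wt_1(e) < \wt_2(e) < k$ for every double-weighted edge so that \cref{lem:double-weighted_edge} is applicable, which amounts to choosing $k$ large enough relative to the largest $\wt_2$ value used.
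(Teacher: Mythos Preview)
Your high-level plan matches the paper's: reduce from \BSAT, build an intermediate double edge-weighted graph, then apply \cref{lem:double-weighted_edge} to obtain an unweighted bounded-degree instance. However, two concrete details in your proposal do not line up with what the construction actually requires.

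First, the threshold $k$ is \emph{not} a constant. In the paper's reduction $k = 2m+3$, because the variable gadgets are strung together along a path $(z_1,\dots,z_n)$ and the detours of many literal--clause edges traverse this backbone; the congestion on backbone edges therefore scales with the number $m$ of clauses. Any construction that connects $n$ variable gadgets and $m$ clause gadgets through a bounded-degree structure will accumulate congestion somewhere, so you cannot simply declare $k$ fixed. Your secondary remark that we need ``$\wt_2(e) < k$ for every double-weighted edge'' is exactly how the paper handles this: it sets $\wt_2$ to values like $k-2$ and $k-3$, with $k$ growing in $m$.

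Second, the uniform choice $\wt_1(e)=2$ and the accounting ``degree $\le 6$ in $H$, plus $1$ per double-weighted edge, gives $\Delta\le 8$'' are both too coarse. The increment is $\wt_1(e)-1$ \emph{per incident} double-weighted edge, so a degree-$6$ vertex with more than two such edges would exceed $8$; you never bound the number of double-weighted edges per vertex. More importantly, the paper does not use a single $\wt_1$: it mixes $\wt_1\in\{1,3,4\}$ because the value of $\wt_1$ is what forces specific edges (e.g., both edges at the degree-$2$ vertex $y_i$) into the tree, and the degree budget at $8$ is saturated precisely at vertices like $y_i$ (degree $2$ plus two edges with $\wt_1=4$, hence $2+2\cdot 3=8$) and $z_i$ (degree $4$ plus two weight-$(3,3)$ edges handled via \cref{prop:weighted-edge}). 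A uniform $\wt_1=2$ would not by itself give enough leverage in the correctness argument to pin down the two ``truth-value'' configurations of each variable gadget.
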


\paragraph*{Construction.}
Let $(X,C)$ be an instance of {\BSAT} with $X = \{x_{1}, \dots, x_{n}\}$ and $C = \{c_{1}, \dots, c_{m}\}$.
We assume that $m \ge 3$ (otherwise the problem becomes trivial).
Set $k = 2m+3$ ($\ge 9$).
From $(X,C)$, we construct a double edge-weighted graph $G = (V,E; \wt_{1}, \wt_{2})$ as follows (see \cref{fig:bsat}).
\begin{itemize}
  \item For $i \in [n]$, take a cycle $(x_{i}, y_{i}, \bar{x}_{i}, z_{i})$ of four new vertices.
  \begin{itemize}
    \item Set $\wt(\{x_{i}, y_{i}\}) = \wt(\{\bar{x}_{i}, y_{i}\}) =  (4, k-3)$ and $\wt(\{x_{i}, z_{i}\}) = \wt(\{\bar{x}_{i}, z_{i}\}) = (1, 1)$.
  \end{itemize}
  \item For $i \in [n-1]$, add the edge $\{z_{i}, z_{i+1}\}$, thus forming the path $(z_{1}, \dots, z_{n})$.
  \begin{itemize}
    \item Set $\wt(\{z_{i}, z_{i+1}\}) = (3,3)$.
  \end{itemize}
  \item For $j \in [m]$, take a new vertex $c_{j}$.
  \item For $i \in [n]$ and $j \in [m]$, add the edge $\{x_{i}, c_{j}\}$ (resp.\ $\{\bar{x}_{i}, c_{j}\}$) if $x_{i} \in c_{j}$ (resp.\ $\bar{x}_{i} \in c_{j}$).
  \begin{itemize}
    \item Set $\wt(\{x_{i}, c_{j}\}) = (1,k-2)$ (resp.\ $\wt(\{\bar{x}_{i}, c_{j}\}) = (1,k-2)$) if the edge exists.
  \end{itemize}
\end{itemize}

\begin{figure}[tbh]
\centering
\includegraphics[scale=1]{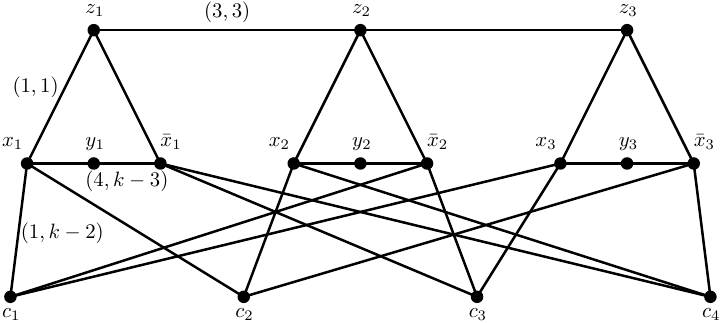}
\caption{The construction of $G$, where
$c_{1} = \{x_{1}, \bar{x}_{2}, x_{3}\}$,
$c_{2} = \{x_{1}, x_{2}, \bar{x}_{3}\}$,
$c_{3} = \{\bar{x}_{1}, \bar{x}_{2}, x_{3}\}$, and
$c_{4} = \{\bar{x}_{1}, x_{2}, \bar{x}_{3}\}$.}
\label{fig:bsat}
\end{figure}

To prove \cref{thm:maxdeg}, it suffices to prove the following two claims.
\begin{enumerate}
  \item  In polynomial time, we can construct an unweighted graph $G'$ such that
  \begin{itemize}
    \item $\stc(G) \le k$ if and only if $\stc(G') \le k$;
    \item the maximum degree of $G'$ is at most~$8$.
  \end{itemize}
  \item $\stc(G) \le k$ if and only if $(X,C)$ is a yes-instance of {\BSAT}.
\end{enumerate}
\cref{lem:maxdeg_unweighting} shows the first claim
and \cref{lem:maxdeg_equivalence} shows the second one.

\begin{lemma}\label{lem:maxdeg_unweighting}
In time polynomial in $m+n$, one can construct an unweighted graph $G' = (V',E')$ from the double edge-weighted graph $G = (V, E; \wt_{1}, \wt_{2})$
with maximum degree at most~$8$
such that $\stc(G) \le k$ if and only if $\stc(G') \le k$.
\end{lemma}
\begin{proof}
We keep the edges with weight $(1,1)$.
For each edge $\{u,v\}$ with weight $(3,3)$,
we use \cref{prop:weighted-edge} and replace the edge $\{u,v\}$ with three internally vertex disjoint $u$--$v$ paths of length~$2$.
For each edge with weight $(4, k-3)$ or $(1, k-2)$, we replace the edge with the gadget in \cref{lem:double-weighted_edge}.
We call the obtained unweighted graph $G' = (V', E')$.
Since $k = 2m + 3$, the construction can be done in time polynomial in $m+n$.
By \cref{prop:weighted-edge,lem:double-weighted_edge}, $\stc(G) \le k$ if and only if $\stc(G') \le k$.

Now we show that $G'$ has maximum degree at most~$8$.
Observe that $V' \supseteq V$ and that the maximum degree of the vertices in $V' \setminus V$ is at most~$4$.
For each $i \in [n]$,
$\deg_{G'}(z_{i}) \le \deg_{G}(z_{i}) + 4 \le 8$,
$\deg_{G'}(y_{i}) = \deg_{G}(y_{i}) + 6 = 8$,
$\deg_{G'}(x_{i}) = \deg_{G}(x_{i}) + 3 = 7$, and
$\deg_{G'}(\bar{x}_{i}) = \deg_{G}(\bar{x}_{i}) + 3 = 7$.
For each $j \in [m]$,
$\deg_{G'}(c_{j}) = \deg_{G}(c_{j}) = 3$.
\end{proof}

\begin{lemma}\label{lem:maxdeg_equivalence}
$\stc(G) \le k$ if and only if $(X,C)$ is a yes-instance of {\BSAT}.
\end{lemma}

\begin{proof}
To show the if direction,
assume that $(X,C)$ is a yes-instance of {\BSAT}
and $\alpha \colon X \to \{0,1\}$ is a satisfying assignment of $(X,C)$.
From $\alpha$, we construct a spanning tree $T$ of $G$ as follows (see \cref{fig:bsat_assignment}).
\begin{itemize}
  \item Take all edges in the path $(z_{1}, \dots, z_{n})$.
  \item For $i \in [n]$, take both edges incident to $y_{i}$.
  \item For $i \in [n]$, if $\alpha(x_{i}) = 1$, then take the edge $\{x_{i}, z_{i}\}$;
  otherwise, take the edge $\{\bar{x}_{i}, z_{i}\}$;

  \item For $j \in [m]$, fix one literal $\ell_{i} \in c_{j}$ evaluated to~$1$ under $\alpha$
  (that is, if $\ell_{i} = x_{i}$, then $\alpha(x_{i}) = 1$; otherwise $\alpha(x_{i}) = 0$)
  and take the edge $\{c_{j}, \ell_{i}\}$.
\end{itemize}

We first show that $\cng_{G,T}(\{z_{i}, z_{i+1}\}) \le k$ for $i \in [n-1]$.
Observe that an edge $e \in E(G) \setminus E(T)$ uses $\{z_{i}, z_{i+1}\}$ in its detour only if $e$ is incident to some $c_{j}$.
In total, there are exactly $3m$ edges incident to some $c_{j}$ and $m$ of them are included in $T$.
Therefore, $\cng_{G,T}(\{z_{i}, z_{i+1}\}) \le 2m + \wt_{2}(\{z_{i}, z_{i+1}\}) = 2m+3 = k$.

We next show that, for $i \in [n]$, each edge incident to $y_{i}$ has congestion~$k$.
By symmetry, assume that $\{x_{i}, z_{i}\} \in E(T)$.
Let $c_{j_{1}}, c_{j_{2}}$ be the clauses that $\bar{x}_{i}$ appears in.
Then we have
\begin{math}
\cng_{G,T}(\{\bar{x}_{i}, y_{i}\})
= \wt_{1}(\{\{\bar{x}_{i}, z_{i}\}, \{\bar{x}_{i}, c_{j_{1}}\}, \{\bar{x}_{i}, c_{j_{2}}\}\}) + \wt_{2}(\{\bar{x}_{i}, y_{i}\})
= 3 + (k-3) = k
\end{math}.
Similarly,
\begin{math}
\cng_{G,T}(\{x_{i}, y_{i}\})
= \wt_{1}(\{\{\bar{x}_{i}, z_{i}\}, \{\bar{x}_{i}, c_{j_{1}}\}, \{\bar{x}_{i}, c_{j_{2}}\}\}) + \wt_{2}(\{x_{i}, y_{i}\})
= 3 + (k-3) = k
\end{math}.

Now we show that, for $j \in [m]$, the edge incident to $c_{j}$ has congestion~$k$.
Let $c_{j} = \{\ell_{i_{1}}, \ell_{i_{2}}, \ell_{i_{3}}\}$, where $\{c_{j}, \ell_{i_{1}}\} \in E(T)$.
Since $c_{j}$ is a leaf of $T$, it holds that
\begin{math}
\cng_{G,T}(\{c_{j}, \ell_{i_{1}}\})
= \wt_{1}(\{c_{j}, \ell_{i_{2}}\}, \{c_{j}, \ell_{i_{3}}\}) + \wt_{2}(\{c_{j}, \ell_{i_{1}}\})
= 2 + (k-2) = k
\end{math}.

Finally, we show that, for $i \in [n]$, the edge in $T$ between $z_{i}$ and $x_{i}$ or $\bar{x}_{i}$ has congestion at most~$8$ ($\le k$).
Assume by symmetry that $\{z_{i}, x_{i}\} \in E(T)$.
Let $T_{x_{i}}$ be the connected component of $T - \{z_{i}, x_{i}\}$ including $x_{i}$.
The vertex set $V(T_{x_{i}})$ includes $x_{i}, y_{i}, \bar{x}_{i}$ and zero, one, or two vertices in $\{c_{1}, \dots, c_{m}\}$ adjacent to $x_{i}$.
Thus, we have $|E(V(T_{x_{i}}), V(G) \setminus V(T_{x_{i}}))| \le 8$.
Since every edge $e$ in $E(V(T_{x_{i}}), V(G) \setminus V(T_{x_{i}}))$ satisfies $\wt_{1}(e) = 1$, it holds that
$\cng_{G,T}(\{z_{i}, x_{i}\}) = \wt_{1}(E(V(T_{x_{i}}), V(G) \setminus V(T_{x_{i}})) \setminus \{\{z_{i}, x_{i}\}\}) + \wt_{2}(\{z_{i}, x_{i}\}) \le 8$.
\begin{figure}[tbh]
\centering
\includegraphics[scale=1]{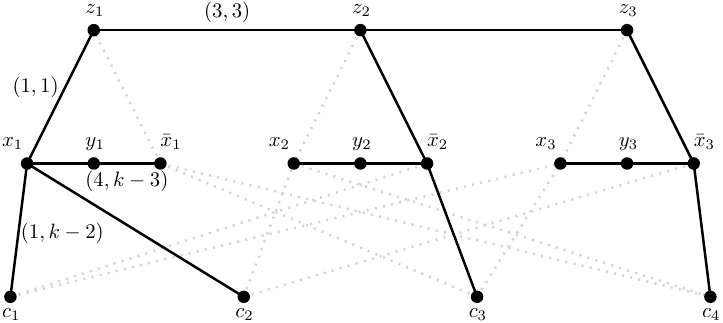}
\caption{The construction of $T$, where $\alpha = \{(x_{1}, 1), (x_{2}, 0), (x_{3}, 0)\}$.}
\label{fig:bsat_assignment}
\end{figure}

To show the only-if direction, assume that $\stc(G) \le k$
and $T$ is a spanning tree of $G$ with congestion at most~$k$.
We first show the following series of claims.

\begin{claim}\label{clm:maxdeg_y-edges}
For $i \in [n]$, $\{x_{i}, y_{i}\}, \{\bar{x}_{i}, y_{i}\} \in E(T)$.
\end{claim}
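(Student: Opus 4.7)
The plan is to exploit the fact that $y_i$ has degree exactly $2$ in $G$, with its only two incident edges being $\{x_i,y_i\}$ and $\{\bar x_i, y_i\}$, both carrying double weight $(4,k-3)$. Since $T$ is a spanning tree, $y_i$ must be connected to the rest of $T$, so at least one of these two edges belongs to $E(T)$.

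The heart of the argument is to rule out the case where \emph{exactly} one of them is in $T$. Suppose for contradiction that, say, $\{x_i,y_i\}\in E(T)$ but $\{\bar x_i, y_i\}\notin E(T)$ (the other case is symmetric). Then $y_i$ is a leaf of $T$, so if we denote by $T_{e,1},T_{e,2}$ the two components of $T-\{x_i,y_i\}$, we have $V(T_{e,1})=\{y_i\}$. The only edge of $G$ with exactly one endpoint in $\{y_i\}$ aside from $\{x_i,y_i\}$ itself is $\{\bar x_i, y_i\}$. Therefore
\[
  \cng_{G,T}(\{x_i,y_i\}) \;=\; \wt_1(\{\bar x_i, y_i\}) + \wt_2(\{x_i, y_i\}) \;=\; 4 + (k-3) \;=\; k+1,
\]
contradicting the assumption that $T$ has congestion at most $k$. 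Hence both edges incident to $y_i$ must lie in $E(T)$, proving the claim.

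There is essentially no real obstacle here: the claim follows purely from the choice of the weights on the $y_i$-edges, which were precisely tailored so that a single incident $T$-edge would be overloaded by the detour of its partner. The only thing to be careful about is to invoke the double-weighted congestion formula $\cng_{G,T}(e)=\wt_1(E(V(T_{e,1}),V(T_{e,2}))\setminus\{e\})+\wt_2(e)$ defined in \cref{ssec:maxdeg}, rather than the single-weight variant of \cref{ssec:disjoint-union}.
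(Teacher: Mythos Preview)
Your argument is correct and essentially identical to the paper's own proof: both use that $y_i$ has degree $2$, that at least one incident edge must lie in $T$, and that if exactly one does then its congestion is $\wt_1+\wt_2=4+(k-3)=k+1>k$. The only cosmetic difference is that the paper assumes $\{x_i,y_i\}\notin E(T)$ while you assume $\{\bar x_i,y_i\}\notin E(T)$, which is symmetric.
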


\begin{claimproof}
Since $\deg_{G}(y_{i}) = 2$, at least one of $\{x_{i}, y_{i}\}$ and $\{\bar{x}_{i}, y_{i}\}$ belongs to $T$.
Suppose to the contrary that one of them does not belong to $T$.
By symmetry, we may assume that $\{x_{i}, y_{i}\} \notin E(T)$, and thus $\{\bar{x}_{i}, y_{i}\} \in E(T)$.
Then, we have $\cng_{G,T}(\{\bar{x}_{i}, y_{i}\}) = \wt_{1}(\{x_{i}, y_{i}\}) + \wt_{2}(\{\bar{x}_{i}, y_{i}\}) = 4 + k - 3 > k$, a contradiction.
\end{claimproof}

\begin{claim}\label{clm:maxdeg_z-edges}
For $i \in [n-1]$, $\{z_{i}, z_{i+1}\} \in E(T)$.
\end{claim}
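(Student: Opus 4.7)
The plan is to argue by contradiction: assume $\{z_i, z_{i+1}\} \notin E(T)$ for some $i \in [n-1]$ and exhibit an edge of $T$ whose congestion strictly exceeds $k = 2m+3$. Consider the detour $P$ of $\{z_i, z_{i+1}\}$ in $T$, which is a $z_i$-$z_{i+1}$ path of length at least two, and recall that $\wt_1(\{z_i, z_{i+1}\}) = 3$ contributes to the congestion of every edge on $P$.

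The key step is to identify a convenient cut. Set $V_L = \{x_a, \bar{x}_a, y_a, z_a : a \le i\}$. Inspecting the construction, the only edges of $G$ with exactly one endpoint in $V_L$ are (i)~the edge $\{z_i, z_{i+1}\}$ itself, and (ii)~clause-literal edges $\{x_a, c_j\}$ or $\{\bar{x}_a, c_j\}$ with $a \le i$. Indeed, each $y_a$ has neighbors only among $\{x_a, \bar{x}_a\} \subseteq V_L$; each $z_a$ with $a < i$ has neighbors only in $V_L$, while $z_i$'s unique external neighbor is $z_{i+1}$; and every clause vertex lies in $V \setminus V_L$.

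Since $P$ is a path in $T$ from $z_i \in V_L$ to $z_{i+1} \in V \setminus V_L$ that avoids the edge $\{z_i, z_{i+1}\}$, at least one edge $e$ of $P$ must be of type~(ii), say $e = \{\ell, c_j\}$ with $\ell \in \{x_a, \bar{x}_a\}$ and $a \le i$. This edge belongs to $T$ and carries weight $\wt(e) = (1, k-2)$. Because $e$ lies on the $z_i$-$z_{i+1}$ path in $T$, removing $e$ from $T$ separates $z_i$ from $z_{i+1}$, so $\{z_i, z_{i+1}\}$ belongs to the fundamental cut of $e$ in $T$. Hence
\[
\cng_{G,T}(e) \ge \wt_2(e) + \wt_1(\{z_i, z_{i+1}\}) = (k-2) + 3 = k+1,
\]
contradicting $\cng_G(T) \le k$.

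The only nontrivial step is the cut description above; once it is in place the heavy $\wt_2 = k-2$ on clause-literal edges does all the work, and no case analysis on where $P$ wanders within the gadgets is required. The argument crucially exploits that we already know (by \cref{clm:maxdeg_y-edges}) the $y$-edges are forced into $T$, but only implicitly, through the fact that the construction places all clause vertices outside $V_L$.
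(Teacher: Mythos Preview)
Your proof is correct and follows essentially the same approach as the paper: both argue that the detour of $\{z_i, z_{i+1}\}$ in $T$ must pass through an edge incident to a clause vertex (you via the cut $(V_L, V \setminus V_L)$, the paper via the observation that $\{z_i, z_{i+1}\}$ is a bridge in $G - \{c_1, \ldots, c_m\}$), and then use $\wt_2(e) + \wt_1(\{z_i, z_{i+1}\}) = (k-2)+3 > k$. Your closing remark that the argument relies on \cref{clm:maxdeg_y-edges} is mistaken, however---the cut description is purely about the structure of $G$, and nothing in your proof uses any property of $T$ beyond the assumption $\{z_i, z_{i+1}\} \notin E(T)$; the paper's proof likewise makes no appeal to that earlier claim.
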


\begin{claimproof}
Suppose to the contrary that $\{z_{i}, z_{i+1}\} \notin E(T)$ for some $i \in [n-1]$.
Let $P$ be the $z_{i}$--$z_{i+1}$ path in $T$.
Observe that $G - \{c_{1},\dots,c_{m}\}$ is a cactus graph that consists of the path $(z_{1},\dots,z_{n})$ and $4$-cycles attached to the vertices on the path.
Therefore, the edge $\{z_{i}, z_{i+1}\}$ is the unique $z_{i}$--$z_{i+1}$ path in $G - \{c_{1},\dots,c_{m}\}$.
This implies that $P$ contains $c_{j}$ for some $j \in [m]$.
Let $e \in E(P)$ be an edge incident to $c_j$.
Then, we have $\cng_{G,T}(e) \ge \wt_{1}(\{z_{i}, z_{i+1}\}) + \wt_{2}(e) = 3 + k-2 > k$, a contradiction.
\end{claimproof}

\begin{claim}\label{clm:maxdeg_x-edges}
For $i \in [n]$, exactly one of $\{x_{i}, z_{i}\}$ and $\{\bar{x}_{i}, z_{i}\}$ belongs to $T$.
\end{claim}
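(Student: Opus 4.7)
The claim splits into two directions. For the direction ``at most one'', assume for contradiction that both $\{x_i, z_i\}$ and $\{\bar{x}_i, z_i\}$ lie in $E(T)$; combined with the two $y_i$-incident tree edges guaranteed by \cref{clm:maxdeg_y-edges}, these form the $4$-cycle $(x_i, y_i, \bar{x}_i, z_i)$ inside $T$, contradicting the acyclicity of $T$.

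For the direction ``at least one'', suppose for contradiction that neither $\{x_i, z_i\}$ nor $\{\bar{x}_i, z_i\}$ lies in $E(T)$, and set $S := \{x_i, y_i, \bar{x}_i\}$. By \cref{clm:maxdeg_y-edges}, $T[S]$ is the path $x_i$--$y_i$--$\bar{x}_i$; in particular $S$ is connected in $T$. Since $y_i$ has no neighbors outside $S$ in $G$, the only edges of $G$ with exactly one endpoint in $S$ are $\{x_i, z_i\}$, $\{\bar{x}_i, z_i\}$, and the (up to four) clause-literal edges $\{x_i, c_{j_1}\}$, $\{x_i, c_{j_2}\}$, $\{\bar{x}_i, c_{l_1}\}$, $\{\bar{x}_i, c_{l_2}\}$, each of weight $(1, k-2)$. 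Because $T$ is spanning and $S \neq V$, at least one clause-literal edge, say $e^{*}$, belongs to $E(T)$, and the plan is to derive $\cng_{G,T}(e^{*}) > k$.

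The cleanest subcase is when $e^{*}$ is the \emph{unique} tree edge leaving $S$, so that the component of $T - e^{*}$ on the $S$-side has vertex set exactly $S$. Then the cut $E(S, V(G) \setminus S)$ consists of $e^{*}$ itself (contributing $\wt_{2}(e^{*}) = k-2$) together with the five remaining non-tree boundary edges ($\{x_i, z_i\}$, $\{\bar{x}_i, z_i\}$, and the three other clause-literal edges), each with $\wt_{1} = 1$, giving
\[
  \cng_{G,T}(e^{*}) \;=\; \wt_{2}(e^{*}) + 5 \;=\; (k-2) + 5 \;=\; k+3 \;>\; k,
\]
a contradiction.

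The main obstacle is extending this argument to the general case where several tree edges $e_1, \dots, e_p$ with $2 \le p \le 4$ leave $S$. Here I plan a double-counting argument on $\sum_{r=1}^{p} \cng_{G,T}(e_r)$: this sum equals $p(k-2)$ plus the $\wt_{1}$-contributions of non-tree edges, which amount to $(6-p) \cdot 1$ from non-tree boundary edges of $S$ plus $2 \wt_{1}(f)$ for every non-tree edge $f$ whose $T$-detour both enters and exits $S$. The goal is to argue that whenever $p \ge 2$, enough such ``through'' detours exist to push the sum strictly above $pk$, forcing some $e_r$ to have congestion exceeding $k$; the key ingredients will be that $T$ must remain connected (which rules out degenerate configurations in which some clause vertex $c_{j_r}$ or the vertex $\bar{x}_i$ would be isolated from the $z$-backbone) and the observation that every non-tree edge whose detour passes through $y_i$ must traverse both $y_i$-incident tree edges, so that in the residual corner cases I can alternatively bound $\cng_{G,T}(\{x_i, y_i\})$ directly to obtain the required contradiction.
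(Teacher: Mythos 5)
Your ``at most one'' direction and the $p=1$ subcase of the ``at least one'' direction are correct (with both $z_i$-edges present, the two $y_i$-edges from \cref{clm:maxdeg_y-edges} close the $4$-cycle $(x_i,y_i,\bar{x}_i,z_i)$ in $T$; and with a unique tree edge leaving $S$ the cut computation $(k-2)+5>k$ is right). However, the argument for $2\le p\le 4$ tree edges leaving $S$ is only announced, not proved: the inequality you need from the double count is $ (6-p)+2W>2p$, i.e.\ $W\ge 1$, $W\ge 2$, $W\ge 4$ for $p=2,3,4$ respectively, where $W$ is the total $\wt_1$-weight of non-tree edges whose detour passes through $S$. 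Establishing these lower bounds requires analyzing the components of $T-S$ that do not contain the $z$-backbone and counting the $G$-edges leaving them, and nothing in your sketch supplies that count (note you cannot invoke the claim itself for other variable indices $i'$ to control those components). As written, this is a genuine gap, not a routine verification.

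The paper closes all cases at once with a single, more local observation that you may want to adopt. Since $T[S]$ is the path $x_i$--$y_i$--$\bar{x}_i$ and neither $z_i$-edge is in $T$, the $y_i$--$z_i$ path in $T$ begins $y_i,\ell_i,c_j,\dots$ for some $\ell_i\in\{x_i,\bar{x}_i\}$; say $\ell_i=x_i$. The fundamental cutset of the tree edge $\{x_i,c_j\}$ then contains $\{x_i,z_i\}$, $\{\bar{x}_i,z_i\}$ and $\{x_i,c_j\}$ itself, and it must contain a fourth edge because these three edges do not form an edge cut of $G$ (e.g.\ $x_i$ has a second clause neighbor $c_{j'}$ with $j'\ne j$, and $G-\{x_i,y_i,\bar{x}_i\}$ is connected). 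Hence
\[
  \cng_{G,T}(\{x_i,c_j\}) > \wt_1(\{x_i,z_i\})+\wt_1(\{\bar{x}_i,z_i\})+\wt_2(\{x_i,c_j\}) = 2+(k-2)=k,
\]
with no case distinction on how many tree edges leave $S$. I recommend replacing your $p\ge 2$ plan with this argument (which also subsumes your $p=1$ case).
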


\begin{claimproof}
By \cref{clm:maxdeg_y-edges}, at most one of $\{x_{i}, z_{i}\}$ and $\{\bar{x}_{i}, z_{i}\}$ may belong to $T$.
Suppose to the contrary that none of $\{x_{i}, z_{i}\}$ and $\{\bar{x}_{i}, z_{i}\}$ belong to $T$.
Let $\{y_{i}, \ell_{i}\}$ and $\{\ell_{i}, c_{j}\}$ be the first and second edges, respectively,
in the $y_{i}$--$z_{i}$ path in $T$, where $\ell_{i} \in \{x_{i}, \bar{x}_{i}\}$.
Observe that the edge set $\{\{x_{i}, z_{i}\}, \{\bar{x}_{i}, z_{i}\}, \{\ell_{i}, c_{j}\}\}$ is not an edge-cut in $G$:
the graph $G - \{x_{i}, y_{i}, \bar{x}_{i}\}$ is connected; and there is an edge not included in $\{\{x_{i}, z_{i}\}, \{\bar{x}_{i}, z_{i}\}, \{\ell_{i}, c_{j}\}\}$ that connects $G - \{x_{i}, y_{i}, \bar{x}_{i}\}$ and $G[\{x_{i}, y_{i}, \bar{x}_{i}\}]$
(e.g., there is an edge between $\ell_{i}$ and $c_{j'}$ for some $j' \ne j$).
Therefore, it holds that the edge set $E(V(T_{\{\ell_{i}, c_{j}\},1}), V(T_{\{\ell_{i}, c_{j}\},2}))$ is
a proper super set of $\{\{x_{i}, z_{i}\}, \{\bar{x}_{i}, z_{i}\}, \{\ell_{i}, c_{j}\}\}$.
In that case,
\begin{align*}
  \cng_{G,T}(\{\ell_{i}, c_{j}\})
  &>
  \wt_{1}(\{\{x_{i}, z_{i}\}, \{\bar{x}_{i}, z_{i}\}\})
  +
  \wt_{2}(\{\ell_{i}, c_{j}\})
  = 2 + (k-2) = k,
\end{align*}
which is a contradiction.
\end{claimproof}

By the claims above, $T - \{c_{1}, \dots, c_{m}\}$ is connected.
Since $\{c_{1}, \dots, c_{m}\}$ is an independent set, we can conclude that each $c_{j}$ is a leaf of $T$.

\begin{claim}\label{clm:maxdeg_c-leaves}
For $j \in [m]$, $c_{j}$ is a leaf of $T$.
\end{claim}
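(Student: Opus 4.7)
The plan is to observe that Claims~\ref{clm:maxdeg_y-edges}--\ref{clm:maxdeg_x-edges} have already pinned down a large portion of $T$, and in fact have forced a spanning tree of $G - \{c_1, \ldots, c_m\}$, from which leafness of each $c_j$ follows by a simple counting and independence argument.

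More concretely, I would first list the edges of $T$ that the previous claims force to be present: the $n-1$ edges of the path $(z_1, \ldots, z_n)$ from Claim~\ref{clm:maxdeg_z-edges}; the $2n$ edges $\{x_i, y_i\}$ and $\{\bar{x}_i, y_i\}$ from Claim~\ref{clm:maxdeg_y-edges}; and, for each $i \in [n]$, exactly one of $\{x_i, z_i\}$ and $\{\bar{x}_i, z_i\}$ from Claim~\ref{clm:maxdeg_x-edges}. This yields $(n-1) + 2n + n = 4n - 1$ edges, all lying inside $V(G) \setminus \{c_1, \ldots, c_m\}$, which has exactly $4n$ vertices. A direct verification shows that these edges connect all of $\{x_i, \bar{x}_i, y_i, z_i\}_{i \in [n]}$, so they form a spanning tree of the subgraph of $T$ induced on $V(G) \setminus \{c_1, \ldots, c_m\}$.

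Next I would use the fact that $T$ is a spanning tree of $G$ on $4n + m$ vertices and hence has exactly $4n + m - 1$ edges. Subtracting the $4n - 1$ edges already accounted for leaves exactly $m$ remaining edges, each of which must be incident to some $c_j$ (since otherwise it would create a cycle with the edges already identified). Because $\{c_1, \ldots, c_m\}$ is an independent set in $G$, every remaining edge joins some $c_j$ to a vertex outside $\{c_1, \ldots, c_m\}$, and since each $c_j$ must be connected to $T$ we conclude that exactly one such edge is incident to each $c_j$. Thus each $c_j$ has degree $1$ in $T$ and is a leaf.

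There is no real obstacle here: the heavy lifting was done in the preceding three claims, and this claim is a bookkeeping consequence of the edge count together with the independence of $\{c_1, \ldots, c_m\}$. The only thing to be careful about is to verify that the forced edges indeed already span $V(G) \setminus \{c_1, \ldots, c_m\}$, which is immediate from the structure (the path of $z_i$'s together with the $4$-cycles through each $y_i$ hanging off it).
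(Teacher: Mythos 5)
Your proposal is correct and follows essentially the same route as the paper, which simply observes that the preceding claims force $T - \{c_1,\dots,c_m\}$ to be connected (hence a spanning tree of the complement of the clause vertices) and then invokes the independence of $\{c_1,\dots,c_m\}$; your version just makes the edge count $4n-1$ versus $4n+m-1$ explicit. No gaps.
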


\begin{claim}\label{clm:maxdeg_assignment}
Let $i \in [n]$ and $\ell_{i} \in \{x_{i}, \bar{x}_{i}\}$.
If $\{\ell_{i}, c_{j}\} \in E(T)$ for some $j \in [m]$, then $\{\ell_{i}, z_{i}\} \in E(T)$.
\end{claim}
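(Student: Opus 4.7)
The plan is to argue by contradiction: assume $\{\ell_i, c_j\} \in E(T)$ while $\{\ell_i, z_i\} \notin E(T)$, and then exhibit an edge of $T$ whose congestion exceeds $k$. The natural candidate is the edge $\{\ell_i, y_i\}$, which belongs to $T$ by \cref{clm:maxdeg_y-edges} and carries $\wt_2 = k-3$; it will thus suffice to find at least four units of $\wt_1$-weight on non-tree edges crossing the cut induced by removing $\{\ell_i, y_i\}$ from $T$.

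First I would precisely identify the component $A$ of $T - \{\ell_i, y_i\}$ that contains $\ell_i$. Since $\{\ell_i, z_i\} \notin E(T)$ by assumption, the only potential tree-neighbors of $\ell_i$ other than $y_i$ are clause vertices in which the literal $\ell_i$ appears. Let these two clauses (which exist and are distinct by the {\BSAT} structure, as every literal appears in exactly two clauses) be $c_j$ and $c_{j'}$. By \cref{clm:maxdeg_c-leaves}, both $c_j$ and $c_{j'}$ are leaves of $T$, and hence cannot extend $A$ any further. Therefore $A \subseteq \{\ell_i, c_j, c_{j'}\}$, and since $\{\ell_i, c_j\} \in E(T)$ by assumption we have $c_j \in A$; the two subcases to treat are $A = \{\ell_i, c_j\}$ and $A = \{\ell_i, c_j, c_{j'}\}$.

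Next I would enumerate the $G$-edges leaving $A$ and sum their $\wt_1$-contributions. In both subcases, $\{\ell_i, z_i\}$ is a non-tree edge leaving $A$ (since $z_i \notin A$), contributing $1$, and each clause vertex in $A$ contributes the two non-tree edges joining it to the other two literals of its clause (each of weight $1$). If moreover $c_{j'} \notin A$, then the non-tree edge $\{\ell_i, c_{j'}\}$ also leaves $A$ and contributes $1$. A direct count gives at least four $\wt_1$-units when $A = \{\ell_i, c_j\}$ and at least five when $A = \{\ell_i, c_j, c_{j'}\}$. Combining with $\wt_2(\{\ell_i, y_i\}) = k-3$, we obtain $\cng_{G,T}(\{\ell_i, y_i\}) \ge (k-3) + 4 = k+1 > k$, contradicting $\cng_{G}(T) \le k$.

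The only subtlety is ensuring that the edges listed above really are \emph{non-tree} edges and that we are not double-counting; this relies essentially on \cref{clm:maxdeg_c-leaves} (which pins down each clause vertex's unique tree-neighbor) together with the standing assumption $\{\ell_i, z_i\} \notin E(T)$ (which keeps $z_i$ outside $A$ and makes $\{\ell_i, z_i\}$ a non-tree edge). No additional case analysis on the position of $\bar{\ell}_i$ or on the other literals is needed, so the argument remains short.
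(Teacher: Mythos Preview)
Your proposal is correct and takes essentially the same approach as the paper: argue by contradiction and show that $\{\ell_i, y_i\}$ (which lies in $T$ by \cref{clm:maxdeg_y-edges}) carries congestion at least $(k-3)+4 > k$, by exhibiting at least four units of $\wt_1$-weight on non-tree edges crossing the cut it induces --- namely $\{\ell_i, z_i\}$, the two other clause-edges out of $c_j$, and either $\{\ell_i, c_{j'}\}$ or the two other clause-edges out of $c_{j'}$, depending on whether $c_{j'} \in A$. The paper phrases the same count in terms of detours passing through $\{x_i, y_i\}$ rather than explicitly naming the component $A$, but the two arguments are identical in substance.
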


\begin{claimproof}
By symmetry, it suffices to show that
if $\{x_{i}, c_{j}\} \in E(T)$ for some $j \in [m]$, then $\{x_{i}, z_{i}\} \in E(T)$.
Suppose to the contrary that
$\{x_{i}, c_{j}\} \in E(T)$ for some $j \in [m]$ but $\{x_{i}, z_{i}\} \notin E(T)$.
By \cref{clm:maxdeg_y-edges}, $\{x_{i}, y_{i}\}, \{\bar{x}_{i}, y_{i}\} \in E(T)$.
By \cref{clm:maxdeg_x-edges}, $\{\bar{x}_{i}, z_{i}\} \in E(T)$.
Observe that the edge $\{x_{i}, z_{i}\}$ and the edges incident to $c_{j}$ except for $\{x_{i}, c_{j}\}$ use $\{x_{i}, y_{i}\}$ in their detours in $T$.
Let $j' \in [m]$ be the index such that $j' \ne j$ and $\{x_{i}, c_{j'}\} \in E(G)$.
If $\{x_{i}, c_{j'}\} \notin E(T)$, then $\{x_{i}, c_{j'}\}$ uses $\{x_{i}, y_{i}\}$ in its detour in $T$.
Otherwise, the edges incident to $c_{j'}$ except for $\{x_{i}, c_{j'}\}$ use $\{x_{i}, y_{i}\}$ in their detours in $T$.
In total, at least four edges not including $\{x_{i}, y_{i}\}$ itself use $\{x_{i}, y_{i}\}$ in their detours in $T$.
Therefore, we have
$\cng_{G,T}(\{x_{i}, y_{i}\}) \ge 4 + \wt_{2}(\{x_{i}, y_{i}\}) = 4 + (k-3) > k$, a contradiction.
\end{claimproof}

Now we construct an assignment $\alpha \colon X \to \{0,1\}$ from $T$
by setting $\alpha(x_{i}) = 1$ if $\{x_{i}, z_{i}\} \in E(T)$
and $\alpha(x_{i}) = 0$ if $\{\bar{x}_{i}, z_{i}\} \in E(T)$.
Note that, by \cref{clm:maxdeg_x-edges}, exactly one of these cases happens for each $i \in [n]$.
We show that $\alpha$ is a satisfying assignment of $(X,C)$.
Let $j \in [m]$.
There is an index $i \in [n]$ such that $x_{i}$ or $\bar{x}_{i}$ is adjacent to $c_{j}$ in $T$.
By symmetry, assume that $\{x_{i}, c_{j}\} \in E(T)$.
By \cref{clm:maxdeg_assignment}, $\{x_{i}, z_{i}\} \in E(T)$ holds.
Therefore, we have $\alpha(x_{i}) = 1$, and thus the clause $c_{j}$ is satisfied by $\alpha$.
\end{proof}




\section{Algorithms for Bounded Treewidth}\label{sec:treewidth}

In this section, we take a second look at the complexity of \STC\ parameterized by treewidth, a problem shown to be W[1]-hard by combining \cref{cor:star-forest} with the fact that graphs of bounded tree-depth also have bounded treewidth.
One way to deal with this hardness is to consider additional parameters, so we
begin by presenting in \cref{sec:fpt-tw-k} an FPT algorithm parameterized by
treewidth plus the desired congestion $k$. Our algorithm follows the standard
technique of performing dynamic programming over a tree decomposition, though
with a few necessary tweaks (informally, we have to guess the general structure
of the spanning tree, including parts of the graph that will appear ``in the
future'').

We note here that the fact that \STC\ is FPT for this parameterization was already shown by~\cite{algorithmica/BodlaenderFGOL12}, who proved that if $k$ is a parameter, then the problem is MSO$_2$-expressible, hence solvable via Courcelle's theorem.
Nevertheless, we are still motivated to provide an
explicit algorithm for the parameter ``treewidth plus $k$'' for several
reasons.

First, using Courcelle's theorem does not provide any usable upper
bound on the running time, while we show our algorithm to run in
${(k+w)}^{\bO(w)}n^{\bO(1)}$, where $w$ is the treewidth; this implies, for instance,
that \STC\ is in XP parameterized by treewidth alone (as $\stc(G)\le n^2$ for
all $G$), a fact that cannot be inferred using Courcelle's theorem.

Second, and more importantly, having an explicit algorithm at hand, we are able
to obtain an answer to the following natural question: given that solving \STC\
is hard parameterized by treewidth, is there an FPT algorithm that closely
approximates the optimal congestion? By applying a technique introduced by~\cite{icalp/Lampis14}
which modifies exact DP algorithms to obtain approximate
ones, we get an FPT approximation scheme, which runs in time
${(\frac{w}{\varepsilon})}^{\bO(w)}n^{\bO(1)}$ (that is, FPT in
$w+\frac{1}{\varepsilon}$) and returns a $(1+\varepsilon)$-approximate
solution, for any desired $\varepsilon>0$. This result naturally complements
the problem's hardness for treewidth and is presented in \cref{sec:tw-approx}.
We complete that section by presenting a simple Win/Win argument which extends
our algorithm to an algorithm that is FPT parameterized by clique-width plus
$k$; this is based on a result of~\cite{wg/GurskiW00} stating
that graphs of bounded clique-width with no large complete bipartite subgraphs
actually have bounded treewidth.

\subsection{FPT Algorithm Parameterized by Treewidth and
Congestion}\label{sec:fpt-tw-k}

In this section, we prove the following theorem:

\begin{theorem}\label{thm:tw-alg}
Let $G$ be a graph with treewidth $w$, and let $k > 0$.
There is an algorithm that finds a spanning tree of $G$ with congestion $k$, if it exists, and runs in time ${(w+k)}^{\bO(w)} n^{\bO(1)}$.
\end{theorem}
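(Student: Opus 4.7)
The plan is to perform bottom-up dynamic programming on a nice tree decomposition of $G$ of width $w$, using the four standard node types (introduce vertex, introduce edge, forget, join). At each node $t$ with bag $B_t$, I maintain a table of achievable \emph{signatures}, each one describing a possible partial spanning forest $F$ on the already-processed vertex set $V_t^+$. A signature is composed of: (i) the ``shape'' of $F$ on $B_t$, i.e., a partition of $B_t$ into classes corresponding to the components of $F$ touching the bag, together with the abstract tree connecting each class's bag vertices inside its component (obtained after contracting paths through forgotten vertices); (ii) for each component $C$ touching the bag, a count $d(C)\in\{0,\ldots,k\}$ of $G$-edges from $C$ to the unprocessed set $V\setminus V_t^+$; and (iii) for each ``active'' tree edge $e$ of $F$---one whose cut still has bag vertices on both sides---the current partial congestion $c(e)\in\{0,\ldots,k\}$. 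A structural argument shows that there are at most $|B_t|-1=\bO(w)$ active tree edges per bag, so the total number of signatures per bag is bounded by $w^{\bO(w)}\cdot(k+1)^{\bO(w)}=(w+k)^{\bO(w)}$.

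For the transitions, at an introduce-vertex node for $v$ I add a new singleton class. At an introduce-edge node for $\{u,v\}$ I branch on whether $\{u,v\}\in E(T)$: if yes (only allowed when $u$ and $v$ lie in different classes), I merge their classes, register $\{u,v\}$ as a new active tree edge, and initialise its congestion by counting the $G$-edges between the two merged classes that have already been processed; if no, then $\{u,v\}$ either lies on a unique $u$--$v$ path in $F$, and contributes $1$ to $c(e)$ for every active $e$ on that path, or bridges two different classes, in which case its contribution is deferred and tracked via the per-component $d$-counts for later accounting. At a forget-vertex node for $v$, I drop $v$ from the partition; whenever this causes a cut's bag side to become empty, I finalise the corresponding congestion by adding the contributions from the $d$-counts of components now entirely on the vanishing side, rejecting the signature if the total exceeds $k$. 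At a join node I combine two children's signatures by requiring their shapes on the shared bag to coincide and summing the $c$- and $d$-values, carefully avoiding double-counting of $G$-edges entirely inside the shared bag.

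The main technical hurdle will be to show that the signature indeed carries exactly the information needed to correctly attribute each $G$-edge's contribution to the final congestion of every tree edge. The key observation is that whether a non-tree $G$-edge $\{x,y\}$ contributes to a tree edge $e$'s congestion depends only on which side of $e$'s cut the endpoints $x$ and $y$ end up on, and for a future endpoint this side is determined solely by which bag class the endpoint eventually attaches to. This locality, combined with the fact that only $\bO(w)$ cuts are active at any time, is what makes the signature's information sufficient to propagate the DP correctly, and what keeps the total running time at $(w+k)^{\bO(w)}n^{\bO(1)}$, yielding \cref{thm:tw-alg}.
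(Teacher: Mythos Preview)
Your dynamic-programming framework is on the right track and superficially resembles the paper's, but there is a genuine gap in how you account for non-tree $G$-edges that bridge two different classes of the partition. You propose to defer such an edge and track it via the per-component counts $d(C)$, but a single aggregate count per component is not enough to determine the congestion of the \emph{internal} active tree edges of $C$. Concretely, suppose $C$ has bag vertices $a,b$ separated by an active tree edge $e$, and you introduce a non-tree edge $\{a,c\}$ with $c$ in a different component $C'$. Whether $\{a,c\}$ eventually contributes to $\cng(e)$ depends on whether $C'$ later attaches to $C$ on the $a$-side or the $b$-side of $e$; this is future information that a single number $d(C)$ (or even $d(C')$) cannot encode. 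Different placements of the future attachment point yield different congestion values for $e$, yet your signature is identical in all cases. (Your own ``key observation'' in the last paragraph is correct, but your signature does not actually record which side of each cut the pending endpoint lies on.)

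The paper closes exactly this gap by a different device: its signature is not a forest of per-component abstract trees but a \emph{single} tree (the ``skeleton'') in which the components are already joined by \emph{guessed future edges}. Thus at every bag the skeleton is a tree on all bag vertices (plus $\bO(w)$ branching vertices), with each edge labelled past/present/future; a congestion value is maintained on every skeleton edge, future ones included. When an edge $uv\in E(G)$ is accounted for (the paper does this at forget nodes), the $u$--$v$ path in the skeleton is well defined and one simply increments $c$ along it, even across future edges. This guessing of the future topology is the missing idea in your proposal; without it, one seems forced into tracking $\bO(w^2)$ many counts (e.g., pending edges per pair of cut-sides), which blows the state space beyond $(w+k)^{\bO(w)}$.
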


Before we proceed with the formal proof of the theorem, we will give an intuition of the algorithm. %
Following the usual structure for bounded-treewidth graphs, we will use dynamic programming to compute solutions to the subgraphs corresponding to subtrees of the tree decomposition $\mathcal{T}$. %
For each such subgraph, we consider different possibilities for the solution to manifest on the root bag of the subtree; these different possibilities, which we call \emph{states}, represent equivalent classes of solutions that are ``compatible'' with respect to the rest of the graph, and thus it is sufficient to compute a feasible solution for each such class. %

To obtain the algorithm, we simply need to specify the states, as well as how to recursively obtain feasible solutions for each. %
Our states are composed of a tree, which we call the \emph{skeleton}, as well as values of congestion for each of its edges. %
The skeleton of a solution $T$ for a given bag $X_t$ intuitively represents how $T$ connects the vertices of $X_t$, %
and it can be obtained by eliminating vertices in $V \setminus X_t$ with degree 1 or 2 in $T$ (by contracting one of the incident edges to each of these vertices). %
Thus, the skeleton is a tree containing the vertices in $X_t$, plus at most $|X_t|$ vertices with degree at least $3$. %
The congestion values on an edge of the skeleton correspond to the congestion induced on that edge by all of the edges in $\mathcal T_t$, the subtree of $\mathcal T$ rooted at $t$.

There is one further particularity that we must consider: when constructing a skeleton from a solution, some of the resulting edges may represent paths in the subtree rooted at $t$, while others represent paths outside of this subtree. %
Thus, we label each edge of the skeleton with one of three types: a \emph{present} edge is simply an edge between two vertices in $X_t$; a \emph{past} edge represents a path using edges not in $E(X_t)$ but contained in $\mathcal T_t$; and a \emph{future} edge represents a path outside $\mathcal T_t$, that is, one that is not (yet) in the solution, but that must be added to make it compatible with the state. %
We similarly label the vertices with the type \emph{present} if they are in $X_t$, \emph{past} if they are not in $X_t$ but in $\mathcal T_t$, and \emph{future} otherwise.

Throughout the section, we assume familiarity with the definition and usual notation for treewidth (see e.g.~\citep[Chapter 7]{books/CyganFKLMPPS15}), including the notions of tree decomposition and nice tree decomposition. %
We assume that all tree decompositions are rooted: in the case of nice tree decompositions, the root is implicit in the definition; otherwise, any vertex can be chosen as the root.
When referring to subgraphs of $G$, we often refer to subgraphs induced by subtrees of $\mathcal T$: %
for $t\in V(\mathcal T)$, we denote by $\mathcal T_t$ the subtree of $\mathcal T$ rooted at $t$, %
and for $U \subseteq V(\mathcal T)$, we denote by $G[U]$ the subgraph of $G$ induced by the vertices contained in bags of $U$, i.e.\ the subgraph $G\bracks[\big]{\bigcup_{t \in U} X_t}$;
for convenience of notation, we write $G[\mathcal T_t]$ instead of $G[V(\mathcal T_t)]$. %


\paragraph*{Skeletons.}
In the course of the algorithm, we use a structure which we call \emph{skeleton}, which represents the relevant information of a spanning tree at a bag.

\begin{definition}\label{def:tw-alg:skeleton}
Given a graph $X$, a \emph{skeleton} $(S, \ell)$ for $X$ is a tree $S$ together with a labeling $\ell \colon E(S) \cup V(S) \to \{-1,0,1\}$, such that:
\begin{enumerate}
	\item $V(X) \subseteq V(S)$;
	\label{def:tw-alg:skeleton:vertices}
	\item for every $v \in V(S) \setminus V(X)$, $\deg_S(v) \geq 3$;
	\label{def:tw-alg:skeleton:deg-3}
	\item for every $v \in V(S)$, $\ell(v) = 0$ if and only if $v \in V(X)$;
	\label{def:tw-alg:skeleton:lbl-vtx-0}
	\item for every $uv \in E(S)$, $\ell(u), \ell(v) \in \braces{0,\ell(uv)}$.
	\label{def:tw-alg:skeleton:edge-labels}
	\item for every $uv \in E(S)$, if $\ell(uv) = 0$ then $uv \in E(X)$;
	\label{def:tw-alg:skeleton:lbl-edges-0}
\end{enumerate}

If $(S, \ell)$ satisfies every property except Property~\ref{def:tw-alg:skeleton:deg-3}, we call it a \emph{quasi-skeleton}.

We denote by $\ell^{-1}(i)$, $i \in \braces{-1,0,1}$ the graph obtained from the edges with label $i$ and their endpoints (which can have label $i$ or $0$).
\end{definition}

We use the labeling to represent the type of edges and vertices: $-1$ for past, $0$ for present and $1$ for future. %
The constraints in the definition ensure that present edges and vertices are contained in the bag, and that past edges are not incident on future vertices (and vice-versa).


Given a spanning tree $T$ and a node $t \in \mathcal{T}$, we can get a quasi-skeleton $(T,\ell)$ for $G[X_t]$ by appropriately labeling the vertices and edges of the spanning tree as follows:
\begin{itemize}
	\item in $G[X_t]$: for $x \in X_t$ and $x \in E(X_t)$ we set the label $\ell(x) = 0$;
	\item in $G[\mathcal T_t] - E(X_t)$: for $x \in V(G[\mathcal T_t]) \setminus X_t$ and $x \in E(G[\mathcal T_t]) \setminus E(X_t)$ we set $\ell(x)=-1$;
	\item in $G - G[\mathcal T_t]$: for $x \in V(G) \setminus V(G[\mathcal T_t]))$ and $x \in E(G) \setminus E(G[\mathcal T_t]))$ we set $\ell(x) = 1$.
\end{itemize}
The properties of quasi-skeletons follow for $(T,\ell)$ because $X_t$ is a separator.


\begin{figure}[tbh]
\centering
\includegraphics[scale=1,page=1]{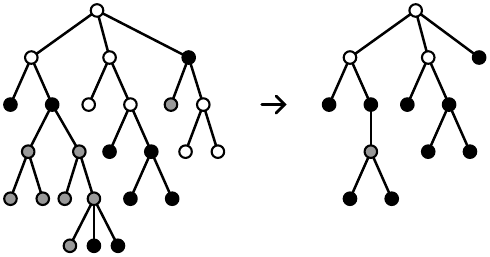}
\caption{
An example quasi-skeleton (left) and skeleton resulting from simplification (right).
Vertex labels are represented as colors: grey for $-1$ (past), black for $0$ (present) and white for $1$ (future);
edge labels are omitted and can be assumed to be $0$ when connecting two black vertices, and must be $-1$ (resp.\ $1$) when incident to a grey (resp.\ white) vertex.}
\label{fig:tw-alg:skeleton}
\end{figure}

To get a skeleton of $T$, we further need to transform the quasi-skeleton above into a skeleton by enforcing Property~\ref{def:tw-alg:skeleton:deg-3}.
For this purpose, we introduce the operation of \emph{simplification}, which eliminates vertices of degree 1 and 2, and thus converts a quasi-skeleton into a skeleton.
This operation is also used in the dynamic program, in order to modify a skeleton for a bag $X_t$ into a skeleton for its child bag.
\Cref{fig:tw-alg:skeleton} presents an example quasi-skeleton (tree with labels assigned) and the skeleton that results from applying simplification.

As we often use skeletons together with a function assigning a congestion value to each of its edges, the operation shows how to transform this function concurrently.

\begin{definition}[Simplification]
\label{def:tw-alg:simplification}
Let $X$ be a graph, $(S, \ell)$ be a quasi-skeleton for that graph, and $c\colon E(S) \to [k]$.

The \emph{simplification} of $(S, \ell)$ with function $c$ defines a new quasi-skeleton $(S',\ell')$ and function $c'\colon E(S') \to [k]$ as follows:
start with $(S', \ell') = (S, \ell)$, $c' = c$ and for each vertex $v \in V(S')\setminus V(X)$ with degree less than $3$, apply the following rules: %
\begin{itemize}
\item if $v$ has degree 1, we remove $v$ and its incident edge from $S'$, $\ell'$, $c$; %
\item if $v$ has degree 2 and label $\lambda = \ell(v)$, we replace $v$ and its incident edges by an edge $uw$ between its neighbors $u$ and $w$, set $\ell'(uw) = \lambda$, and adjust $c'(uw) = \max\{c(uv), c(vw)\}$. %
\end{itemize}
\end{definition}

\begin{lemma}\label{lem:tw-alg:quasi-skel}
The simplification of a quasi-skeleton is a skeleton.
\end{lemma}

\begin{proof}
When the process completes, $(S', \ell')$ must satisfy Property~\ref{def:tw-alg:skeleton:deg-3}, as we iteratively remove vertices that do not satisfy it. %

For the remaining properties, we will show that an application of the lemma for a single step preserves them, and thus by repeated application the lemma holds. %
Properties~\ref{def:tw-alg:skeleton:vertices} and~\ref{def:tw-alg:skeleton:lbl-vtx-0} are not affected by removing vertices, and if an edge $uw$ is created, it has label $\ell(uw) = \ell(v) \in \braces{-1,1}$, thus Property~\ref{def:tw-alg:skeleton:lbl-edges-0} is satisfied. %
Finally, since $\ell(v) \neq 0$, $\ell(uv) = \ell(vw) = \ell(v)$, and thus Property~\ref{def:tw-alg:skeleton:edge-labels} is satisfied by setting $\ell(uw) = \ell(v)$.
\end{proof}

\paragraph*{Algorithm.}
We define a dynamic programming table $D$ with entries for every node $t \in \mathcal T$ and every triple $(S, \ell, c)$, where $(S, \ell)$ is a skeleton for $G[X_t]$ and $c\colon E(S) \to [0,k]$ is a function representing an upper bound on the congestion of each edge in $S$. %

To compute a solution, the algorithm simply uses dynamic programming:
it starts by computing a nice tree decomposition $(\mathcal T, \mathcal X)$ for $G$ with width at most $2w+1$ and at most $\bO(n w)$ bags, which can be computed in time $2^{\bO(w)} \cdot n$~\citep{Korhonen21}; %
then it computes every entry of $D$, and then outputs the single entry for the root $r$ of $\mathcal T$, the tree $T = D[r, (\varnothing, \varnothing, \varnothing)]$.

Hence, we only need to define the entries of $D$, as well as recursion rules to compute each entry.
Informally, $D[t, (S, \ell, c)]$ represents a forest $F_t \subseteq G[\mathcal T_t]$ such that the congestion induced by the edges of $G[\mathcal T_t]$ is at most $k$ for the edges of $F_t$, and connecting the trees of $F_t$ according to the skeleton leads to congestion bounded by the function $c$ on each edge of the skeleton.
In particular, the edges of $\ell^{-1}(1)$ (future edges) represent how the different trees of $F_t$ are connected in the final solution, and for $e \in \ell^{-1}(1)$, $c(e)$ represents the congestion induced by edges of $G[\mathcal T_t]$ on a path that is represented by skeleton edge $e$ and which must be present in the final solution.

Using \Cref{fig:tw-alg:skeleton} as an example, let $T$ be the tree on the left, $X_t$ be the set of black vertices (label $0$), and $(S,\ell)$ be the skeleton on the right, $F_t$ would be the forest obtained by removing the white vertices (label $1$) from $T$.
The different trees of $F_t$ are connected according to the edges of $\ell^{-1}(1)$, thus if we add to $F_t$ the white vertices in the skeleton, as well as their incident edges, we get a tree that contains all of the past and present edges and vertices of the solution, but a simplification of the future subgraph, that is, we get a tree $T'$ obtained from $T$ by applying simplification only to the white vertices.

We formalize the desired properties of entries of $D$ by the definition below:

\begin{definition}\label{def:tw-alg:props}
Let $t \in \mathcal T$, $(S, \ell)$ be a skeleton for $G[X_t]$ and $c\colon E(S) \to [0,k]$, and assume $V(S) \cap V(G) = X_t$.

We say that a forest $F_t\subseteq G[\mathcal{T}_t]$ is a \emph{consistent solution} for $(t, (S, \ell, c))$ if:
\begin{enumerate}
	\item $F_t$ is a forest of $G[\mathcal T_t]$ on the same vertex set, that is, $V(F_t) = V(G[\mathcal{T}_t])$;
	\label{def:tw-alg:props:subgraph}
	\item For $uv \in E(S)$, if $\ell(uv) = 0$, then $uv \in F_t$;
	\label{def:tw-alg:props:label0}
	\item For $uv \in E(S)$, if $\ell(uv) = -1$, then $F_t$ contains a $u$-$v$-path $F_{uv}$ with edges from  ${G[\mathcal T_t] - E(X_t)}$;
	\label{def:tw-alg:props:path-1}
	\item $T' := F_t \cup \ell^{-1}(1)$ is a tree;
	\label{def:tw-alg:props:tree}
	\item For every edge $e \in E(S)$ with $\ell(e) \in \braces{0,1}$, the
	congestion in $T'$ induced by $G[\mathcal T_t] - E(X_t)$ on $e$ is $c(e)$;
	\label{def:tw-alg:props:cong-fut}
	\item For every edge $e \in E(S)$ with $\ell(e) = -1$, the
	congestion in $T'$ induced by $G[\mathcal T_t] - E(X_t)$ on every edge of $F_e$ is at most $c(e)$;
	\label{def:tw-alg:props:cong-past}
	\item For every $e \in E(T')$, the congestion in $T'$ induced by $G[\mathcal T_t] - E(X_t)$ on $e$ is at most $k$.
	\label{def:tw-alg:props:cong-ext}
\end{enumerate}
\end{definition}

We will now see how to construct the entries $D[t, (S, \ell, c)]$ recursively.
An entry can be set as \emph{invalid}, in which case it does not have a value; these entries either do not correspond to solutions or have congestion above $k$.
For each rule, we present a figure showing how the skeletons in the child bag(s) relate to the skeletons for $X_t$.
\begin{itemize}
	\item \textbf{Leaf $t$:} $D[t, (S, \ell, c)] = (\varnothing, \varnothing)$ for the only possible skeleton $(S, \ell)$ and function $c$, where $S=(\varnothing, \varnothing)$ and $\ell, c$ are empty functions.
	\item \textbf{Forget node $t$:} let $t'$ be the child of $t$ and let $\braces{v} = X_{t'} \setminus X_t$.
	For any solution $D[t', (S', \ell', c')]$, if $v$ has an incident edge with label $1$, no solution for $t$ exists for $(S', \ell', c')$; otherwise, we set $D[t, (S, \ell, c)]=D[t', (S', \ell', c')]$ for $(S, \ell, c)$ obtained as follows:
	\begin{enumerate}
		\item we start with $(S, \ell, c) = (S', \ell', c')$;
		\item we add the congestion corresponding to the forgotten vertex $v$: for each edge $uv \in E(G)$, $u \in X_t$, increment $c(e)$ on each edge $e$ on the $u$-$v$-path in $S$; if $c(e) > k$ for any edge, the process stops and we disregard $(S', \ell', c')$;
		\item we mark $v$ as a past vertex ($\ell(v) = -1$) and apply simplification of $(S,\ell)$ with function $c$, but for the new graph $G[X_t]$.
	\end{enumerate}
	An example for $(S', \ell')$ and resulting $(S, \ell)$ is presented in \Cref{fig:tw-alg:op-forget}

	\item \textbf{Introduce node $t$:} 
	let $t'$ be the child of $t$ and let $\braces{v} = X_{t} \setminus X_{t'}$. %
	To avoid case enumeration, we describe the reverse process, i.e.~how to uniquely get $(S', \ell')$ from $(S,\ell)$.
	\Cref{fig:tw-alg:op-introduce} presents the different structures of $(S,\ell)$ for a given example $(S', \ell')$; we describe the four cases in \cref{lem:tw-alg:correctn}.

	Given $(S, \ell, c)$, the single entry $(S', \ell', c')$ used to compute $D[t,(S, \ell, c)]$ is obtained as follows:
	start by setting $(S', \ell', c') = (S, \ell, c)$, then set $\ell'(v) = 1$, and $\ell'(uv)$ for every $uv \in E(X_t)$ and apply simplification. %
	If $D[t', (S', \ell', c')]$ exists, we construct $D[t, (S, \ell, c)]$ from $F_{t'}=D[t', (S', \ell', c')]$ by adding to $F_{t'}$ the vertex $v$ and its incident edges $e \in E(S)$ that have label $\ell(e)=0$.

	\item \textbf{Join node $t$:} given solutions $D[t_1, (S, \ell_1, c_1)]$, $D[t_2, (S, \ell_2, c_2)]$, a valid solution can be obtained if all of the following hold:
	\begin{enumerate}
		\item $c_1(e) + c_2(e) \leq k$, for $e \in E(S)$;
		\item $\ell_1(x) = 0$ if and only if $\ell_2(x) = 0$, for $x \in V(S) \cup E(S)$;
		\item $\ell_1(x)$ and $\ell_2(x)$ are not both $-1$, for $x \in V(S) \cup E(S)$.
	\end{enumerate}

	We set $D[t, (S, \ell, c)] = D[t_1, (S, \ell_1, c_1)] \cup D[t_2, (S, \ell_2, c_2)]$ for $\ell(x) = \min\{\ell_1(x), \ell_2(x)\}$, ${x \in V(S) \cup E(S)}$ and $c(e) = c_1(e) + c_2(e)$, $e \in E(S)$.
	An example is presented in \Cref{fig:tw-alg:op-join}
\end{itemize}

\begin{figure}[t]
\centering
\includegraphics[scale=1,page=1]{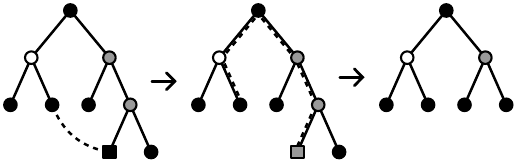}
\caption{
An example skeleton $(S', \ell')$ (left) with vertex $v$ to forget (square), resulting quasi-skeleton after modifying the label of $v$ (middle), and skeleton $(S,\ell)$ (right).
For each edge $uv \in E(X_t)$ (dashed edge, left), we add congestion along the $u$-$v$-path in the tree (dashed path, middle). \\
Vertex labels represented as grey for $-1$ (past), black for $0$ (present) and white for $1$ (future).}
\label{fig:tw-alg:op-forget}
\end{figure}

\begin{figure}[t]
\centering
\includegraphics[scale=1,page=2]{figures/skeleton-ops}
\caption{
An example skeleton $(S', \ell')$ (left), and different possibilities for skeleton $(S,\ell)$ (right) after inserting a vertex $v$ (square).
These four cases are described in the proof of \cref{lem:tw-alg:correctn}, and described in the dynamic program by the reverse, i.e.~how to get $(S', \ell')$ from $(S,\ell)$; every skeleton $(S,\ell)$ on the right generates the same $(S', \ell')$ on the left. \\
Vertex labels represented as grey for $-1$ (past), black for $0$ (present) and white for $1$ (future).}
\label{fig:tw-alg:op-introduce}
\end{figure}

\begin{figure}[t]
\centering
\includegraphics[scale=1,page=3]{figures/skeleton-ops}
\caption{
Example skeletons $(S, \ell_1)$ (left) and $(S, \ell_2)$ (middle), and the resulting skeleton $(S, \ell)$ (right).
Skeletons $(S, \ell_1)$ and $(S, \ell_2)$ must be the same, except for past vertices, which are disjoint. \\
Vertex labels represented as grey for $-1$ (past), black for $0$ (present) and white for $1$ (future).}
\label{fig:tw-alg:op-join}
\end{figure}


To obtain a solution to the problem, we simply compute $T = D[r, (\varnothing, \varnothing, \varnothing)]$. %
If $T$ is a consistent solution, then it is a forest containing the vertices $V(G)$ by Property~\ref{def:tw-alg:props:subgraph}, it is a tree by Property~\ref{def:tw-alg:props:tree}, and has congestion at most $k$ by Property~\ref{def:tw-alg:props:cong-ext} (since $X_r = \varnothing$). %
We therefore conclude that $T$ is a spanning tree of $G$ with congestion at most $k$, as desired. %

The following lemmas complete the proof:
\Cref{lem:tw-alg:correctn} shows that every $D[t, (S, \ell, c)]$ is a consistent solution if it is not invalid,
and \cref{lem:tw-alg:completen} shows that for any tree $T$ with congestion at most $k$, there is a corresponding entry for each node $t$ (that is not invalid), thus $T$ is a consistent solution; finally,
\cref{lem:tw-alg:runtime} shows that the algorithm runs in the stated running time.

\begin{lemma}\label{lem:tw-alg:correctn}
For any $t$ and $(S, \ell, c)$, if $D[t, (S, \ell, c)]$ exists, then it is a consistent solution.
\end{lemma}

\begin{proof}
	If $t \in V(\mathcal T)$ is a leaf, then the properties of
	Definition \ref{def:tw-alg:props} hold trivially, as $G[\mathcal T_t]$ is an empty graph.
	We will now show by induction (from the leaves up) that the claim holds for every node $t$. %
	Let $t \in V(\mathcal T)$ be any node, and $(S, \ell, c)$ such that $F_t := D[t, (S, \ell, c)]$ exists. %

	\begin{itemize}
	\item \textbf{If $t$ is an introduce node,}
	then $F_t = D[t', (S', \ell', c')] \cup \ell^{-1}(0)$ for some $(S', \ell', c')$. %
	In the dynamic program, $(S', \ell', c')$ is obtained from $(S, \ell, c)$ by setting $\ell'(v) = 1$ and applying simplification. %
	In this case, simplification means that if $v$ has degree $2$, it is eliminated by connecting its two neighbors; if $v$ has degree $1$, it is directly removed, and its parent is eliminated if it now has degree $2$. %

	If we consider the process in reverse, we see how to obtain $(S, \ell, c)$ from $(S', \ell', c')$,
	which corresponds to one of these cases:
	\begin{enumerate}[i.]
		\item take a vertex of $S'$ with label $1$ and identify it with $v$;
		\item take an edge of $S'$ with label $1$ and subdivide it, with $v$ being the new vertex;
		\item take a vertex of $S'$ with label $0$ or $1$ and add $v$ as its child;
		\item take an edge of $S'$ with label $1$, subdivide it, and add $v$ as the child of the new vertex; the new vertex is assigned label $1$.
	\end{enumerate}
	In either of those cases, we assign to the new edges the label $0$ if both of their endpoints have label $0$, and the label $1$ otherwise.
	In the case of subdivision, we assign the same congestion to the new edges as the subdivided edge; if $v$ is added as a child, its incident edge is assigned edge $0$.
	The labels and congestion are assigned so that the properties of Definitions \ref{def:tw-alg:props} are satisfied.

	Furthermore, $G[\mathcal T_t] - E(X_t)$ is equal to $G[\mathcal T_{t'}] - E(X_{t'})$ with an additional isolated vertex $v$, since all of the edges incident on $v$ in $G[\mathcal T_t]$ are also in $E(X_t)$. %
	Thus, as we only count congestion of edges in $G[\mathcal T_t] - E(X_t)$, congestion does not change.

	\item \textbf{If $t$ is a forget node,} we first show that increasing the congestion does not change the properties.
	For every new edge $uv \in E(X_{t'})$ incident on $v$, the construction increments $c(e)$ for every edge $e$ on the $u$-$v$-path in $S$;
	this precisely matches the edges of $T'$ whose congestion is increased by the edge $uv$, as edges with label $0$ and $1$ are present in $T'$, and edges with label $-1$ are present in the form of a path.
	Property~\ref{def:tw-alg:props:cong-ext} is preserved since we check that no edge exceeds the congestion value of $k$.

	The operations used in simplification do not change $F_t$ and do not add any edges with label $0$ or $1$, but affect Properties~\ref{def:tw-alg:props:path-1},~\ref{def:tw-alg:props:cong-past}, in particular when handling a vertex $v'$ with degree 2.
	However, the action of eliminating $v'$ simply concatenates the paths $P_{uv'}$ and $P_{v'w}$ in $F$ to obtain the path $P_{uw}$, and setting $c(uw) = \max\{c(uv'), c(v'w)\}$ ensures that each edge in $P_{uw}$ has capacity at most $c(uw)$.

	\item \textbf{If $t$ is a join node,} $F_t$ is obtained as the union of solutions $F_1 := D[t_1, (S, \ell_1, c_1)]$, $F_2 := D[t_2, (S, \ell_2, c_2)]$ for the children nodes $t_1$, $t_2$, and we assume that for every $e \in E(S)$, $c_1(e) + c_2(e) \leq k$ and for every $x \in V(S) \cup E(S)$, $\ell_1(x) = 0$ if and only if $\ell_2(x) = 0$, and $\ell_1(x)$ and $\ell_2(x)$ are not both $-1$. %

	Properties~\ref{def:tw-alg:props:subgraph},~\ref{def:tw-alg:props:label0},~\ref{def:tw-alg:props:path-1} follow by definition of $F_t$ and of $\ell$ as $\ell(x) = \min\{\ell_1(x), \ell_2(x)\}$. %
	In particular, the definition of $\ell$ implies that for every edge $uw \in E(S)$ such that $\ell(uw) = -1$, either $\ell_1(uw) = -1$ or $\ell_2(uw) = -1$, and thus Property~\ref{def:tw-alg:props:path-1} implies that a $u$-$w$-path exists in either $F_1$ or $F_2$, and thus in $F_t$. %
	We can similarly see that $T = F_t \cup \ell^{-1}(1)$ is a tree, as we are equivalently taking $S$ and replacing each $uw \in E(S)$ with label $-1$ by a subtree that connects $u$ and $w$; as we know that $F_1 \cup \ell_1^{-1}(1)$ and $F_2 \cup \ell_2^{-1}(1)$ are both trees by induction, and that $\ell_1^{-1}$ and $\ell_2^{-1}$ are edge-disjoint (as they use edges from disjoint subtrees of $\mathcal T$), then $T$ must be acyclic, and thus a tree. %
	Finally, we can see that each edge in $G[\mathcal T_t] - E(X_t)$ must belong to either $G[\mathcal T_{t_1}] - E(X_t)$ or $G[\mathcal T_{t_2}] - E(X_t)$, and thus the corresponding congestion is accounted for in $S_1$ or $S_2$, respectively;
	by summing these congestion values, we ensure that Properties~\ref{def:tw-alg:props:cong-fut},~\ref{def:tw-alg:props:cong-past},~\ref{def:tw-alg:props:cong-ext} are satisfied. %
	\end{itemize}
\end{proof}

\begin{lemma}\label{lem:tw-alg:completen}
Let $T$ be a tree with congestion at most $k$.
Then for any $t$, there is $(S_{t}, \ell_{t}, c_{t})$ such that $D[t, (S, \ell, c)]$ exists, i.e.\ is not invalid.
\end{lemma}

\begin{proof}
	We start by defining, for each $t \in V(\mathcal T)$, a solution $F_t$ and a state $(S_t, \ell_t, c_t)$, and then show that $F_t$ is consistent solution for $D[t, (S_t, \ell_t, c_t)]$, and thus $D[t, (S_t, \ell_t, c_t)]$ exists.

	The solution $F_t$ can be obtained by restricting $T$ to the corresponding subgraph, that is, $F_t := T \cap G[\mathcal T_t]$. %
	$S_t$ can be obtained by labeling the vertices of $T$ according to $\mathcal T$, and then applying simplification. %
	Concretely, we set $\ell(v) = 0$, $v \in X_t$ and $\ell(e) = 0$, $e \in T \cap E(X_t)$;
	then, we set $\ell(v) = -1$ for $v \in V(G[\mathcal T_t]) \setminus X_t$ and $\ell(v) = 1$ if $v \in V \setminus V(G[\mathcal T_t])$; finally, an edge $e \in T \setminus E(X_t)$ gets the non-zero label of one of its endpoints. %
	For every edge $e \in E(S_t)$, we take $c(e)$ to be the congestion induced by the edges in $G'_t := {G[\mathcal T_t] - E(X_t)}$ on $e$.
	We can now see that $(T, \ell_t)$ is a quasi-skeleton for $G[X_t]$, and thus, by \cref{lem:tw-alg:quasi-skel}, we obtain the skeleton $(S_t, \ell_t)$ by simplification.

	Next we show that $F_t$ is a consistent solution: %
	by definition, it is a forest containing every vertex in $G'_t$, every edge with label $0$, and a path connecting the endpoints of every edge with label $-1$ (as such edges are obtained by simplification of vertices with label $-1$); %
	similarly, we can see that $T'$ is a tree, as it can be constructed from $T$ by simplification of vertices with label $1$; %
	the final properties can be proved by seeing that $c_t(uw)$ corresponds to the maximum congestion in the $u$-$w$-path in $T$, and that $T$ has congestion at most $k$.

	As $(S_t, \ell_t, c_t)$ can be constructed from $(S_{t'}, \ell_{t'}, c_{t'})$ (for introduce and forget nodes) or from $(S_{t_1}, \ell_{t_1}, c_{t_1})$ and $(S_{t_2}, \ell_{t_2}, c_{t_2})$ (for join nodes) using the recursion rules, then $D[t, (S_t, \ell_t, c_t)]$ can be obtained from the solutions of the children nodes, and by induction on $t$, $D[t, (S_t, \ell_t, c_t)]$ exists.
\end{proof}

\begin{lemma}\label{lem:tw-alg:runtime}
The running time of the algorithm is ${(w+k)}^{\bO(w)} n^{\bO(1)}$.
\end{lemma}

\begin{proof}
Let $t \in V(\mathcal T)$. %
We start by showing that the number of possible states $(S, \ell, c)$ is ${(w+k)}^{\bO(w)}$. %
If $(S, \ell)$ is a skeleton, $S$ can have at most $w+1$ vertices with degree at most $2$, since they are in $X_t$, and thus it can have at most $w$ vertices with degree at least $3$, for a total of $2w+1$. %
As a consequence, we can specify $(S, \ell, c)$ by specifying, for each of the at most $2w+1$ vertices in $S$, which vertex is its parent in $S$ ($2w+1$ choices) and its label, and for each edge $e$, its label and value $c(e) \in [k]$. %
In total, the number of possibilities is upper-bounded by ${(w+k)}^{\bO(w)}$.

The running time of the algorithm is determined by the complexity of recursively calculating each of the ${(w+k)}^{\bO(w)}$ values for the $\bO(n w)$ bags in $(\mathcal T, \mathcal X)$. %
In the case of forget nodes, we can compute $(S, \ell, c)$ from $(S', \ell', c')$ in polynomial time, by updating the congestion values and then applying simplification to the skeleton. %
Similarly, for introduce nodes, we can compute $(S', \ell', c')$ from $(S, \ell, c)$ in polynomial time, by applying simplification.
For join nodes, given states for $t_1$, $t_2$, we simply have to compute the new labeling and congestion values; we can enumerate the states for $t_1$ and $t_2$ and compute  $(S, \ell, c)$ in time ${(w+k)}^{\bO(w)}$. %

Overall, we can compute the solutions for a node $t$ in time ${(w+k)}^{\bO(w)}$, and thus the algorithm runs in time ${(w+k)}^{\bO(w)} \cdot n^{\bO(1)}$.
\end{proof}

\subsection{FPT Approximation and Clique-width}\label{sec:tw-approx}

In this section we leverage the algorithm of \cref{thm:tw-alg} to obtain two
further results. On the one hand, we observe that the FPT algorithm of
\cref{thm:tw-alg} relies on two parameters (treewidth and the optimal
congestion), but it is likely impossible to improve this to an exact algorithm
parameterized by treewidth alone (indeed, we saw that \cref{thm:disjoint-union}
implies that the problem is hard even for parameters much more restricted than
treewidth). We are therefore motivated to consider the question of
approximation and present an FPT approximation scheme parameterized by
treewidth alone. Our algorithm is based on a combination of the algorithm of
\cref{thm:tw-alg} together with a technique introduced by~\cite{icalp/Lampis14}
(later also used among others in~\citep{dmtcs/AngelBEL18,BelmonteLM20,ChuL23,dmtcs/Lampis26})
which allows us to perform dynamic programming while maintaining approximate values for the
congestion. The end result, presented in \cref{thm:tw-apx}, is an FPT approximation scheme which,
for any $\varepsilon>0$, returns a $(1+\varepsilon)$-approximate solution in
time ${(w/\varepsilon)}^{\bO(w)}n^{\bO(1)}$, that is, FPT in
$w+\frac{1}{\varepsilon}$.

Having established this, we obtain a second extension of \cref{thm:tw-alg}, to
the more general parameter clique-width. Here, we rely on a Win/Win argument:
suppose we are given an input graph $G$ of clique-width $w$ and are asked if a
tree of congestion $k$ can be found; if $G$ contains a large bi-clique (in
terms of $k$), then we show that this can be found and we can immediately say
No; otherwise, by a well-known result of~\cite{wg/GurskiW00},
we infer that the graph actually has low treewidth, so we can apply
\cref{thm:tw-alg}.
The result is formalized in \cref{thm:cw}.

\begin{theorem}\label{thm:tw-apx} There is an algorithm which, for all
$\varepsilon>0$, when given as input a graph $G$ of treewidth $w$, returns a
spanning tree of congestion at most $(1+\varepsilon) \stc(G)$ in time
${(\frac{w}{\varepsilon})}^{\bO(w)}n^{\bO(1)}$.
\end{theorem}

\begin{proof}
We will modify the algorithm of \cref{thm:tw-alg}. As before, we assume that we
start with a nice tree decomposition of width $\bO(w)$, but now we further apply a
result of~\cite{siamcomp/BodlaenderH98} which allows us
to edit the nice tree decomposition so that it now has height (maximum distance from
the root to any leaf) $\bO(w^2\log n)$, while the width remains
$\bO(w)$.  It will allow us to bound the accumulated errors of the dynamic programming algorithm. Let $h$ be the height of the resulting nice tree
decomposition.

We now modify the dynamic programming table of the algorithm of
\cref{thm:tw-alg} as follows:  instead of triples $(S,\ell,c)$, we will store
triples $(S,\ell,\hat{c})$, where $\hat{c}$ is now a function on $E(S)$ with
range $\{0\} \cup \setdef{{(1+\delta)}^i}{i \in \mathbb{N}, \, {(1+\delta)}^i \le
(1+\varepsilon)k}$. In other words, instead of storing for each edge of $S$
its current congestion (which is an integer in $[0,k]$), we store an integer
power of $(1+\delta)$, where $\delta$ is a parameter we define as $\delta =
\frac{\varepsilon}{2h}$.

In order to modify the algorithm we revisit all the places where a value $c(e)$
is calculated, based on the graph or previously calculated values. This happens
in:

\begin{itemize}

\item Forget nodes: here we are constructing a triple $(S,\ell,\hat{c})$ from a
triple $(S',\ell',\hat{c}')$ as we are forgetting vertex $v$. Previously, for
each $uv\in E(G)$, with $u\in X_t$, we were incrementing $c(e)$ on each edge
$e$ used in the $u$-$v$-path in $S$. Suppose that for $e\in E(S)$ this process
sets $c'(e)=c(e)+r_e$. We set $\hat{c}'(e)$ to be equal to the smallest integer
power of $(1+\delta)$ that is at least as large as $\hat{c}(e)+r_e$ (in other
words, we add $r_e$ to $\hat{c}(e)$ and round up to the nearest integer power
of $(1+\delta)$). If for any edge $\hat{c}(e)>(1+\varepsilon)k$ we discard the
solution as invalid. Other operations are unchanged, except we use $\hat{c}$ in
the place of $c$.

\item Introduce nodes: Operations are unchanged, except we use $\hat{c}$ in the
place of $c$.

\item Join nodes: previously we were calculating a triple $(S,\ell,c)$ from two
triples $(S,\ell_1,c_1)$ and $(S,\ell_2,c_2)$, rejecting solutions for which for
some edge $e$ we had $c_1(e)+c_2(e)>k$. Now we reject solutions if
$\hat{c}_1(e)+\hat{c}_2(e)>(1+\varepsilon)k$. If a combination is not rejected,
we were previously calculating $c(e)=c_1(e)+c_2(e)$. Now we set $\hat{c}(e)$ to
be the smallest integer power of $(1+\delta)$ that is at least as large as
$\hat{c}_1(e)+\hat{c}_2(e)$, that is, we perform the same addition as before
and round-up to the closest integer power of $(1+\delta)$. Other operations are
unchanged.

\end{itemize}

Let us now argue that the algorithm indeed produces a
$(1+\varepsilon)$-approximate solution. We initially guess $\stc(G)$ and assume
we execute the (modified) algorithm of \cref{thm:tw-alg} for $k=\stc(G)$. We
define the height of a bag of the decomposition as the largest distance from the
bag to a leaf rooted in its sub-tree. Leaves are at height $0$ and the root is
at height $h$. Before we proceed, we note that we have set $\delta$
to a value such that ${(1+\delta)}^h \le e^{\delta h} = e^{\frac{\varepsilon}{2}}
\le (1+\varepsilon)$ for small enough $\varepsilon$.

We want to maintain the following invariant:

\begin{itemize}

\item For each $t$ and triple $(S,\ell,c)$ such that the exact algorithm stores
a forest in $D[t, (S,\ell,c)]$, there exists $\hat{c}$ such that the new
algorithm stores a forest in $D[t, (S,\ell,\hat{c})]$ and for all $e\in
E(S)$ we have that $\hat{c}(e) \le {(1+\delta)}^{h_t} c(e)$, where $h_t$ is the height of
$t$.

\item For each $t$ and triple $(S,\ell,\hat{c})$ such that the new algorithm
stores a forest in $D[t, (S,\ell,\hat{c})]$, there exists $c$ such that the
exact algorithm stores a forest in $D[t, (S,\ell,c)]$ and for all $e\in E(S)$
we have that $c(e) \le \lceil\hat{c}(e)\rceil$.

\end{itemize}

Informally, we are claiming here that the approximation algorithm is slightly
over-estimating the congestion of every edge: on the one hand, if the
approximate algorithm computes some solution, an exact solution with at most
the same congestion exists; while on the other hand, if an exact solution
exists, the algorithm will compute a congestion upper-bounded by the correct
value multiplied by ${(1+\delta)}^{h_t}$.

It is now not hard to establish this invariant by induction:

\begin{itemize}

\item For a Forget node $t$ with child $t'$: we have $h_{t'} = h_t-1$. By
induction, for an exact triple $(S',\ell',c')$ we have calculated an
approximate triple $(S',\ell',\hat{c}')$ with $\hat{c}'(e)\le
{(1+\delta)}^{h_t-1} c'(e)$. For each edge $e$ the exact triple $(S,\ell,c)$
would contain a value $c(e)=c'(e)+r_e$, while we set $\hat{c}(e)\le (1+\delta)(
\hat{c}'(e)+r_e) \le {(1+\delta)}^{h_t}c(e)$, as desired. For the other
direction, we observe that we set $\hat{c}(e) \ge \hat{c}'(e)+r_e$, so using
that by induction $\hat{c'}(e)\ge c'(e)$ we get the second statement of the
invariant. We observe that discarding a solution is correct, because for all
valid triples calculated by the exact algorithm, for each value $c(e)\le k$, an
approximate triple will have value at most $\hat{c}(e)\le {(1+\delta)}^h c(e) \le
(1+\varepsilon)k$, hence solutions with $\hat{c}(e)>(1+\varepsilon)k$ are safe
to discard.

\item For Introduce nodes, congestion values are not modified, but the height
increases, so the invariant remains true.

\item For Join nodes, where a node $t$ has two children $t_1,t_2$, we have
$h_{t_1}, h_{t_2}\le h_t-1$. We use similar reasoning as in the Forget case, in
particular, we compute $\hat{c}(e) \le (1+\delta)(\hat{c}_1(e)+\hat{c}_2(e))$,
but by inductive hypothesis $\hat{c}_1(e)\le {(1+\delta)}^{h_t-1}c_1(e)$ (and
similarly for $\hat{c}_2$), so $\hat{c}(e)$ over-estimates the correct value by
at most a factor ${(1+\delta)}^{h_t}$. Similarly, $\hat{c}(e) \ge
\hat{c}_1(e)+\hat{c}_2(e)$, so if $\hat{c}_1(e)\ge c_1(e)$ (and similarly for
$\hat{c}_2$) we get the second statement. Again, discarding a solution with
congestion strictly greater than $(1+\varepsilon)k$ is justified as in the
Forget case.

\end{itemize}

It is not hard to see that the invariant guarantees that if a solution exists,
the approximation algorithm will find it, and that if the approximation
algorithm finds a solution of congestion estimated at $\hat{k}\le
(1+\varepsilon)k$, a solution of this congestion can be constructed. Let us
then bound the running time, which is dominated by the size of the DP tables.
As before, there are $w^{\bO(w)}$ trees $S$, $2^{\bO(w)}$ functions $\ell$, while
the number of possible values stored for $\hat{c}(e)$ is at most
$\bO(\log_{(1+\delta)}k) = \bO(\frac{\log k}{\log(1+\delta)}) = \bO(\frac{\log
n}{\delta}) = \bO(\frac{w^2\log^2 n}{\varepsilon})$, where we used that $h= \bO(w^2\log
n)$ and approximated $\log(1+\delta)$ by $\delta$. The total number of possible
DP entries is then ${(\frac{w\log n}{\varepsilon})}^{\bO(w)}$. We now use a standard
Win/Win argument: if $w\le \sqrt{\log n}$, then the running time bound we
calculated is $n^{o(1)}$ and the whole algorithm runs in polynomial time;
otherwise, $\log n< w^2$, so the running time of the algorithm is
${(\frac{w}{\varepsilon})}^{\bO(w)} n^{\bO(1)}$.
\end{proof}

\begin{theorem}\label{thm:cw} There is an algorithm which, when given as input
a graph $G$, an integer $k$, and a clique-width expression for $G$ with $w$
labels, correctly decides if $\stc(G)\le k$ in time ${(wk)}^{\bO(wk)}n^{\bO(1)}$.
\end{theorem}

\begin{proof}
We use the following claim:

\begin{claim} If, for some $t$, a graph $G$ has $K_{t,t}$ as a subgraph,
then $\stc(G)\ge t$.
\end{claim}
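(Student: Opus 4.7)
\medskip

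\noindent\emph{Proof proposal for the claim that $\stc(G) \ge t$ whenever $K_{t,t} \subseteq G$.}

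The plan is to fix an arbitrary spanning tree $T$ of $G$ and locate a single edge $e \in E(T)$ whose fundamental cut separates both sides of the $K_{t,t}$ in a balanced way, forcing at least $t$ edges of the bipartite subgraph to cross it. Concretely, let $A,B \subseteq V(G)$ be the two color classes of the embedded $K_{t,t}$, with $|A|=|B|=t$. The case $t=1$ is immediate since any edge trivially has congestion $\ge 1$, so assume $t \ge 2$.

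First, I will exhibit an edge $e \in E(T)$ such that both components of $T-e$ contain a vertex of $A$. For this I consider the \emph{Steiner subtree of $A$ in $T$}, that is, the (unique) minimal subtree of $T$ that spans the vertices of $A$; equivalently, the union of the $T$-paths between all pairs of vertices of $A$. Since $|A|=t\ge 2$, this subtree contains at least $t-1 \ge 1$ edges. Any edge $e$ of this subtree has the property that removing it from $T$ leaves each side with at least one vertex of $A$, for otherwise $e$ would be a pendant edge of the Steiner subtree, contradicting its minimality.

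Next, I estimate the congestion of such an edge. Write $V(T)-e = V_1 \sqcup V_2$ and set $a_i := |V_i \cap A|$ and $b_i := |V_i \cap B|$; by the previous step $a_1,a_2 \ge 1$, while $b_1+b_2 = t$. The number of edges of the embedded $K_{t,t}$ crossing the cut $(V_1,V_2)$ equals $a_1 b_2 + a_2 b_1 \ge 1\cdot b_2 + 1 \cdot b_1 = t$. Because $\cng_{G,T}(e)$ counts \emph{all} edges of $G$ whose detour uses $e$, which is at least the number of $K_{t,t}$-edges crossing this cut, we conclude $\cng_{G,T}(e) \ge t$. Since $T$ was arbitrary, $\stc(G) \ge t$.

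I do not expect any serious obstacle here; the only mildly delicate point is justifying the existence of the desired edge, which is handled cleanly by invoking the Steiner subtree of $A$ in $T$ (alternatively, one may pick any edge on the $T$-path between two distinct vertices of $A$, which is automatically a Steiner-subtree edge).
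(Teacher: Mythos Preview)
Your proof is correct. The paper proves the claim differently: it invokes a result of Ostrovskii stating that $\stc(G)\ge m_G$, where $m_G$ is the maximum number of pairwise edge-disjoint paths between two vertices of $G$, and then observes that $m_{K_{t,t}}=t$, so any supergraph of $K_{t,t}$ inherits $m_G\ge t$. Your argument is a direct, self-contained counting proof: you locate an edge of $T$ whose fundamental cut separates $A$ nontrivially and then lower-bound the crossing $K_{t,t}$-edges by $a_1b_2+a_2b_1\ge b_1+b_2=t$. The paper's route is shorter by citation, while yours avoids the external lemma entirely; in fact your alternative phrasing (pick any edge on the $T$-path between two distinct vertices of $A$) already suffices and makes the Steiner-subtree discussion unnecessary. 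Both approaches ultimately exploit the same cut phenomenon, but yours is arguably more transparent for this specific statement.
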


\begin{claimproof}
Theorem~1(a) of~\cite{dm/Ostrovskii04} states that ``$\stc(G) \ge m_{G}$'',
where $m_{G}$ is the maximum number of edge-disjoint paths between two vertices in $G$.
We can see that $m_{K_{t,t}} = t$. Hence, if a graph $G$ contains $K_{t,t}$ as a subgraph, $\stc(G)\ge t$.
\end{claimproof}

The algorithm now relies on a result of~\cite{wg/GurskiW00} which states that
if a graph has clique-width $w$ and does not contain $K_{t,t}$ as a subgraph,
then $G$ has treewidth at most $3tw$. Given $G$, $k$, and a clique-width
expression with $w$ labels, our algorithm first checks if $G$ has treewidth at
most $3(k+1)w$ by running the 2-approximation by~\cite{Korhonen21}. If this
algorithm does not return a tree decomposition of width at most $6(k+1)w+1$,
we can immediately conclude that the treewidth of $G$ is more than $3(k+1)w$.
However, if $G$ did not contain any bicliques $K_{k+1,k+1}$, this could not
happen, so $G$ must contain such a biclique, which by the claim allows us to
immediately return No. We are therefore left with the case where we have a tree
decomposition of width $\bO(kw)$, on which we execute the algorithm of \cref{thm:tw-alg}.
\end{proof}

\section{FPT Algorithms}\label{sec:fpt_algorithms}

In \cref{sec:treewidth}, we complemented the hardness results of
\cref{sec:hardness} with algorithms for treewidth plus congestion, an
FPT approximation scheme for treewidth, and an FPT algorithm for clique-width
plus congestion. We now turn to the question of which structural parameters by
themselves render {\STC} fixed-parameter tractable. The hardness results of
\cref{thm:disjoint-union,thm:hardness:modularwidth} rule out this possibility
for many natural candidates, so we focus on parameters that evade these
reductions. We present such results for distance to clique in
\cref{ssec:dist_to_clique}, vertex integrity in \cref{ssec:vi}, and feedback
edge number in \cref{ssec:fes}.

\subsection{Distance to Clique}\label{ssec:dist_to_clique}

\cref{cor:twin-cover} implies that {\STC} remains W[1]-hard even on very structured dense instances.
In this subsection we search for parameters that render the problem tractable on dense instances,
and present an FPT algorithm when parameterized by the distance to clique of the input graph,
arguably one of the most restrictive such parameters.
Interestingly, the running time of our algorithm is dictated by the ``easy'' case,
where the clique modulator is large with respect to the size of the graph.
We remark that a modulator to clique of a graph $G$ is a vertex cover in the complement $\bar{G}$ of $G$,
and thus the minimum modulator can be computed by employing any FPT algorithm for \textsc{Vertex Cover} (e.g.~\citep{tcs/ChenKX10,stacs/HarrisN24}) on $\bar G$.

\begin{theorem}\label{thm:distance_to_clique}
    Given $G = (V,E)$ and $S \subseteq V$ with $G-S$ being a clique,
    there is an algorithm that returns a spanning tree of $G$ of congestion $\stc(G)$ in time $2^{\bO(k^3)} n^{\bO(1)}$,
    where $n = |V|$ and $k = |S|$.
\end{theorem}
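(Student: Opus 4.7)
Let $K = V \setminus S$ be the clique of size $n - q$. The plan is to case-split according to the size of $|K|$.

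\textbf{Small-clique regime.} If $|K| \le q^2$, then $n = \bO(q^2)$ and I would simply brute-force over all spanning trees of $G$ (for instance by iterating over every size-$(n-1)$ edge subset and checking acyclicity), evaluating the congestion of each in polynomial time. This costs $n^{\bO(n)} = 2^{\bO(q^2 \log q)} \subseteq 2^{\bO(q^3)}$. This is the ``easy'' case alluded to in the statement, and it is this brute-force bound that dictates the final running time.

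\textbf{Large-clique regime.} If $|K| > q^2$, I would exploit the symmetry of $K$. Partition $K$ into types $K_\alpha = \setdef{v \in K}{N_G(v) \cap S = \alpha}$ for $\alpha \subseteq S$, yielding at most $2^q$ classes of true twins in $G$. The key congestion inequality is that for every edge $e \in T$ with both endpoints in $K$, the congestion of $e$ is at least $a (|K|-a)$, where $a$ and $|K|-a$ are the sizes of the two $K$-parts of the cut induced by $e$; since any star-like spanning tree centred at a vertex of $K$ gives an upper bound of order $|K|$ on $\stc(G)$, this forces the smaller side of every such edge to have size $\bO(1)$. Combined with the fact that $T[S]$ is a forest on $q$ vertices, this yields a structural lemma of the following form: an optimal spanning tree admits a canonical ``small skeleton plus leaves'' decomposition, in which there exists $V^\star \supseteq S$ with $|V^\star| = \bO(q)$ such that the portion of $T^\star$ on $V^\star$ is a tree of bounded combinatorial size and every vertex of $K \setminus V^\star$ is a leaf of $T^\star$ attached to a unique vertex of $V^\star$.

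Given such a structural lemma, I would enumerate skeletons: each skeleton is a tree on $\bO(q)$ vertices together with a type label (a subset of $S$) for each $K$-vertex it contains, yielding at most $2^{\bO(q \log q)} \cdot 2^{\bO(q^2)} = 2^{\bO(q^2)}$ candidates. For each skeleton, the remaining task is a transportation-type assignment: distribute the remaining vertices of each $K_\alpha$ as leaves among the skeleton vertices compatible with $\alpha$ so as to minimise the maximum induced congestion on skeleton edges subject to the per-type count constraints $n_\alpha$; this reduces to a polynomially-solvable flow problem per skeleton. The hardest part of the proof will be the structural lemma itself: turning the edge-congestion inequality into a bound of $\bO(q)$ on the skeleton size will require a careful exchange argument that uses true-twin symmetry within each $K_\alpha$ to merge multiple non-leaf representatives of the same type into a canonical one, together with the large-$|K|$ bound to verify that these exchanges never increase congestion.
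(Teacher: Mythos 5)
Your overall strategy matches the paper's: split on the clique size, brute-force the small case, and in the large case argue that an optimal tree is ``a small skeleton plus leaves'' and enumerate canonical candidates up to twin equivalence. However, the step you dispose of in one clause --- ``any star-like spanning tree centred at a vertex of $K$ gives an upper bound of order $|K|$ on $\stc(G)$'' --- is precisely the technical heart of the proof, and as stated it is not true for a naively chosen star. A vertex $r \in K$ need not be adjacent to all of $S$, so some vertices of $S$ must be routed into the tree through other vertices; if several vertices of $S$, each with $\Theta(|K|)$ neighbours in $K$, end up hanging below the same child of $r$, the edge from $r$ to that child has congestion $\Theta(q\,|K|)$, not $\bO(|K|)$. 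The paper's Claim on the upper bound ($\stc(G) < 2N - N/q + 2q^2$ where $N=|K|$) requires choosing $r$ adjacent to every $s\in S$ with more than $N - N/q$ clique-neighbours (such an $r$ exists by a counting argument), hanging the remaining $S$-vertices below \emph{distinct} clique vertices via a maximum matching, and verifying that the leftover $S$-vertices have degree less than $2q$. Moreover, the constant in the upper bound matters: the paper tunes it so that, for $N > 2q^3+4q$, the bound is strictly below $2(N-2)$, which immediately forces \emph{at most one} clique vertex on the smaller side of every tree edge and makes the skeleton structure fall out with no further work. Your weaker ``$\bO(1)$ clique vertices on the smaller side'' does not by itself give that every vertex of $K\setminus V^\star$ is a leaf attached to a set $V^\star$ of size $\bO(q)$ (small components of $T-r$ could be paths of a few clique vertices, contributing $\Omega(N)$ distinct attachment points at depth one), so your entire structural lemma rests on the deferred ``careful exchange argument,'' which is exactly the part that needs to be done. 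The paper's sharper quantitative bound is what lets it avoid any exchange argument at all.

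Two smaller points. First, your threshold $|K|\le q^2$ with brute-force enumeration of spanning trees gives $2^{\bO(q^2\log q)}$, which is fine; but note that if you were to raise the threshold to $\Theta(q^3)$ (as the paper does, in order to get the clean ``at most one clique vertex per side'' statement), enumeration of all spanning trees costs $n^{\bO(n)} = 2^{\bO(q^3\log q)}$, exceeding the claimed bound --- the paper instead invokes the $2^{n}n^{\bO(1)}$ exact algorithm of Okamoto et al.\ for the small case. Second, your final assignment step is not obviously ``a polynomially-solvable flow problem'': the congestion of a skeleton edge contains a term of the form $a(N-a)$ counting clique--clique edges across the cut, which depends quadratically on how many leaves are assigned to each side, so minimising the maximum over skeleton edges is not a linear transportation problem as written. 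The paper sidesteps this entirely because, with at most one clique vertex per component of $T_{\mathrm{OPT}}-r$, all clique vertices outside the guessed set $Q$ attach directly to $r$ and there is nothing left to optimise.
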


\begin{proof}
    Let $C = V \setminus S$ denote the vertex set of the clique $G-S$, with $c=|C|$.
    First consider the case where $c \le 2k^3 + 4k = \bO(k^3)$.
    Then, it holds that $n = c + k = \bO(k^3)$,
    and one can find a spanning tree of $G$ of congestion $\stc(G)$ in time $2^{\bO(k^3)} n^{\bO(1)}$ by using the $2^{n} n^{\bO(1)}$
    algorithm by~\cite{jgaa/OkamotoOUU11}.

    For the remainder of the proof assume that $c > 2k^3 + 4k$.
    Given a spanning tree $T$ of $G$ and an edge $e = \{u_1,u_2\} \in E(T)$,
    let $V_{u_1} \uplus V_{u_2} = V$ denote the subsets of vertices of $V$
    that belong to the connected component of $T - e$ containing vertices
    $u_1$ and $u_2$ respectively;
    notice that $V_{u_1}, V_{u_2}$ define a partition on $V$.
    The crux of our algorithm is the following claim,
    which allows us to upper-bound the value of $\stc(G)$.

    \begin{claim}\label{claim:distance_to_clique:upper_bound}
        There exists a spanning tree $T$ of $G$ with
        $\cng_G(T) < 2c - \frac{c}{k} + 2k^2$.
    \end{claim}

    \begin{claimproof}
        Let $S' = \setdef{s \in S}{|N_G(s) \cap C| > c - c/k} \subseteq S$ denote the set of vertices of $S$
        with more than $c - c/k$ neighbors in $C$.
        It holds that there exists $r \in C$ such that $S' \subseteq N_G(r)$.
        This follows by the pigeonhole principle, as the number of vertices in $C$ having at least one non-neighbor
        in $S'$ is at most $\sum_{s \in S'} |C \setminus N_G(s)| < \sum_{s \in S'}  c/k \le c$.
        Let $S_0 = N_G(r) \cap S$ denote the neighbors of $r$ in $S$, where $S_0 \supseteq S'$.

        We define $G' = (V,E')$ to be the subgraph of $G$ on edge set $E' = E(C \cup S_0, S \setminus S_0)$,
        and let $M \subseteq E'$ denote a maximum matching in $G'$,
        with $S_M \subseteq S \setminus S_0$ denoting the set of vertices of $S \setminus S_0$ that take part in $M$.
        Notice that for any $s \in S_M$, it holds that $|N_G(s) \cap C| \le c - \frac{c}{k}$,
        as $S_M \cap S' = \varnothing$.

        To construct the spanning tree $T$, we first include all edges incident to $r$ in $E(T)$
        and then add all edges present in $M$.
        Let $Z = S \setminus (S_0 \cup S_M)$.
        To conclude the construction of $T$,
        arbitrarily include some edges incident to the vertices of $Z$ in order to obtain a spanning tree.

        It remains to show that $\cng_{G,T}(e) < 2c - \frac{c}{k} + 2k^2$
        for all $e = \{u_1,u_2\} \in E(T)$.
        We consider different cases, depending on whether $r \in e$ or not.
        Before doing so, notice that for any such edge, it holds that
        \[
            \cng_{G,T}(e) \le \min\braces*{\sum_{u \in V_{u_1}} \deg_G(u) ,
            \sum_{u \in V_{u_2}} \deg_G(u)}.
        \]
        Additionally, we remark that $\forall z \in Z$, $|N_G(z) \cap C| \le k$, which implies that $\deg_G(z) < 2k$.
        To see this, notice that since $M$ is a maximum matching,
        no vertex in $Z$ is adjacent to a vertex of $C \cup S_0$ that does not take part in $M$.
        Thus, $|N_G(z) \cap C| \le |M| \le |S| \le k$.

        First consider the case where $r \notin e$.
        Then, it holds that either $C \subseteq V_{u_1}$ or $C \subseteq V_{u_2}$.
        Assume without loss of generality that $C \subseteq V_{u_1}$.
        In that case, notice that $V_{u_2} \cap S_0 = \varnothing$, while $|V_{u_2} \cap S_M| \le 1$
        and any other vertex of $V_{u_2}$ belongs to $Z$.
        Consequently, it follows that
        \[
            \cng_{G,T}(e) \le
            \sum_{u \in V_{u_2}} \deg_G(u)
            \le (c - \frac{c}{k} + k) + (k-1) \cdot 2k
            < c - \frac{c}{k} + 2k^2,
        \]
        with the first term accounting for the at most one vertex of $V_{u_2} \cap S_M$ and
        the second for the vertices of $V_{u_2} \cap Z$.

        Alternatively, $e = \{r, w\}$, where $w \in S_0 \cup (C \setminus \{r\})$.
        Notice that in this case, it holds that $|V_w \cap S_M| \le 1$,
        while any other vertex of $V_w$ belongs to $Z$.
        Consequently,
        \[
            \cng_{G,T}(e) \le
            \sum_{u \in V_{w}} \deg_G(u)
            < (c + k) + (c - \frac{c}{k} + k) + (k-2) \cdot 2k
            < 2c - \frac{c}{k} + 2k^2,
        \]
        with the first term accounting for $\deg_G(w)$, the second for the at most one vertex of $V_{w} \cap S_M$, and
        the third for the vertices of $V_{w} \cap Z$.
    \end{claimproof}

    The previous claim now allows us to determine the structure of a spanning tree $T_{\OPT}$ of $G$ of minimum congestion.

    \begin{claim}\label{claim:distance_to_clique:clique_vtcs_per_cc}
        For any $e = \{u_1,u_2\} \in E(T_{\OPT})$
        it holds that either $|V_{u_1} \cap C| \le 1$ or $|V_{u_2} \cap C| \le 1$.
    \end{claim}

    \begin{claimproof}
        Let $e = \{u_1,u_2\} \in E(T_{\OPT})$.
        By \cref{claim:distance_to_clique:upper_bound} and the fact that $c > 2k^3 + 4k$
        it follows that
        $\cng_{G, T_{\OPT}}(e) < 2c - \frac{c}{k} + 2k^2 < 2c - 2k^2 - 4 + 2k^2 = 2(c-2)$.
        Assume that $|V_{u_1} \cap C| = x$ and $|V_{u_2} \cap C| = c-x$, where $x \in [2, c-2]$.
        In that case,
        it holds that $\cng_{G, T_{\OPT}}(e) \ge x \cdot (c-x)$ due to the vertices of $C$.
        Since $x \cdot (c-x) \ge 2 \cdot (c-2)$ for all $x \in [2,c-2]$,
        it follows that $\cng_{G, T_{\OPT}}(e) \ge 2 (c-2)$, a contradiction.
    \end{claimproof}

    Consequently, due to \cref{claim:distance_to_clique:clique_vtcs_per_cc}
    it follows that there exists a vertex $r \in V$ such that in $T_{\OPT}-r$,
    every connected component contains at most one vertex of $C$.
    Notice that the vertices of $S$ are present in at most $k$ such connected components.
    Therefore, in order to compute $T_{\OPT}$, we initially guess vertex $r$;
    due to the indistinguishability of twin vertices, there are at most $2^k + k$ such choices -- $2^k$ in $C$ and $k$ in $S$.

    Next, we guess the set $Q \subseteq C$, which contains the vertices of $C$ present in connected components
    of $T_{\OPT}-r$ that contain vertices of $S$.
    Since $|Q| \le k$ and vertices belonging to the same twin classes are indistinguishable,
    there are at most $k \cdot 2^{k^2}$ choices regarding the set $Q$;
    $k$ choices for its cardinality, and then at most $(2^k)^k$ choices for the specific vertices.

    For the rest of the vertices $c \in C \setminus Q$, we add the edge $\{r,c\}$ in $E(T_{\OPT})$
    (if such an edge does not exist, our guess is wrong and we discard it).
    Finally, since $T_{\OPT}$ is a tree, it suffices to guess the parent for each vertex in $S \cup Q$,
    among the set $S \cup Q \cup \{r\}$, which gives a total of at most ${(2k+1)}^{2k}$ possibilities.
    Given one such spanning tree, one can in polynomial time compute its congestion.
    The final running time is $2^{\bO(k^2)} n^{\bO(1)}$.
\end{proof}

\subsection{Vertex Integrity}\label{ssec:vi}

Here we prove that the parameterization by vertex integrity renders {\STC} FPT.

\begin{theorem}\label{thm:vi}
{\STC} is fixed-parameter tractable parameterized by vertex integrity.
\end{theorem}

\begin{proof}
We say that two spanning trees $T$ and $T'$ of $G = (V,E)$ are \emph{equivalent}
if there is an automorphism $\eta$ of $G$ that is an isomorphism from $T$ to $T'$ as well.
Observe that the detour in $T$ for $\{u,v\} \in E(G)$ contains $\{x,y\} \in E(T)$
if and only if the detour in $T'$ for $\{\eta(u),\eta(v)\} \in E(G)$ contains $\{\eta(x), \eta(y)\} \in E(T')$.
This implies that equivalent spanning trees have the same congestion.

\begin{observation}\label{obs:equivalent-trees}
If $T$ and $T'$ are equivalent spanning trees of $G$,
then $\cng_{G}(T) = \cng_{G}(T')$.
\end{observation}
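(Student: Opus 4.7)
The plan is to unpack the definition of congestion via detours and transport everything through the automorphism $\eta$. Since $\eta$ is an automorphism of $G$, it induces a bijection on $E(G)$; since it is simultaneously an isomorphism from $T$ to $T'$, it restricts to a bijection from $E(T)$ onto $E(T')$. My strategy is to establish the pointwise equality $\cng_{G,T}(e) = \cng_{G,T'}(\eta(e))$ for every $e \in E(T)$, from which $\cng_G(T) = \cng_G(T')$ follows by maximising over $E(T)$ and noting that $\eta$ traverses all of $E(T')$ as $e$ ranges over $E(T)$.

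First I would fix $e = \{x,y\} \in E(T)$ and an arbitrary $\{u,v\} \in E(G)$. The detour for $\{u,v\}$ in $T$ is the unique $u$--$v$ path $P$ in $T$. Applying $\eta$ vertex-by-vertex along $P$ yields a walk in $T'$ from $\eta(u)$ to $\eta(v)$ whose consecutive vertices are adjacent in $T'$ (because $\eta$ is an isomorphism $T \to T'$), and this walk is a path since $\eta$ is injective. As $T'$ is a tree, this is the unique $\eta(u)$--$\eta(v)$ path in $T'$, that is, the detour in $T'$ for $\{\eta(u),\eta(v)\}$. The upshot is the parenthetical claim the authors state: the detour in $T$ for $\{u,v\}$ contains $\{x,y\}$ if and only if the detour in $T'$ for $\{\eta(u),\eta(v)\}$ contains $\{\eta(x),\eta(y)\}$.

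Next I would invoke the fact that $\eta$ induces a bijection on $E(G)$ to conclude, by summing (i.e., counting) over this bijection, that the number of edges of $G$ whose detour in $T$ contains $e$ equals the number of edges of $G$ whose detour in $T'$ contains $\eta(e)$. This is exactly $\cng_{G,T}(e) = \cng_{G,T'}(\eta(e))$. Taking the maximum of both sides over $e \in E(T)$, and using the bijection $\eta \colon E(T) \to E(T')$ on the right, yields $\cng_G(T) = \cng_G(T')$, as required.

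There is no real obstacle here: the only point requiring care is verifying that the $\eta$-image of the unique $u$--$v$ path in $T$ is the unique $\eta(u)$--$\eta(v)$ path in $T'$, which relies jointly on $\eta$ being a tree isomorphism (so images of adjacent vertices are adjacent) and on $T'$ being a tree (so paths are unique). Everything else is a straightforward rewriting through the bijections $\eta|_{V(G)}$, $\eta|_{E(G)}$, and $\eta|_{E(T)}$.
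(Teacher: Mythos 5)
Your proposal is correct and follows exactly the paper's (very brief) justification: the paper likewise notes that the detour in $T$ for $\{u,v\}$ contains $\{x,y\}$ if and only if the detour in $T'$ for $\{\eta(u),\eta(v)\}$ contains $\{\eta(x),\eta(y)\}$, and concludes from this that the congestions coincide. You have simply filled in the routine details (the image of the unique path is the unique path, and the counting/maximisation over the bijections induced by $\eta$), which is entirely consistent with the paper's argument.
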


\proofsubparagraph*{Signatures of spanning trees.}
Let $k = \vi(G)$ and
$S \subseteq V(G)$ be a vertex set such that, for all components $C \in \cc(G-S)$,
it holds that $|S| + |V(C)| \le k$.
We say that $C, C' \in \cc(G-S)$ have the same \emph{type}
if there is an isomorphism $\eta$ from $C$ to $C'$
such that for every $v \in V(C)$, $N(v) \cap S = N(\eta(v)) \cap S$.
We call such an isomorphism a \emph{type isomorphism}.
Note that the relation of having the same type is an equivalence relation on $\cc(G-S)$.
Let $\mathcal{C}_{1}, \dots, \mathcal{C}_{p}$ be the equivalence classes defined by this relation.
That is, each $\mathcal{C}_{i}$ is a maximal subset of $\cc(G-S)$ with the same type.
For the number $p$ of different types it holds that $p \le 2^{k^{2}}$,
as there are at most $2^{k^{2}}$ non-isomorphic labeled graphs with at most $k$ vertices.

Let $C, C' \in \mathcal{C}_{i}$ for some $i \in [p]$.
Let $F$ and $F'$ be spanning forests of $C_{S} \coloneqq G[V(C) \cup S]$ and $C'_{S} \coloneqq G[V(C') \cup S]$, respectively.
We say that $F$ and $F'$ have the same \emph{forest type}
if there is an isomorphism $\eta$ from $C_{S}$ to $C'_{S}$ satisfying the following conditions:
\begin{itemize}
  \item $\eta(v) = v$ for every $v \in S$;
  \item $\eta$ is an isomorphism from $F$ to $F'$;
  \item $\eta$ restricted to $V(C)$ is a type isomorphism from $C$ to $C'$.
\end{itemize}
For $i \in [p]$,
let $q_{i}$ be the number of different forest types for $\mathcal{C}_{i}$.
This number can be upper-bounded by the number of possible ways of selecting at most $k-1$ edges from at most $\binom{k}{2}$ edges,
and thus $q_{i} \le k^{2k}$ holds.
We fix an ordering on the different forest types of such spanning forests in an arbitrary way.
This allows us to specify a forest type as ``the $j$-th forest type of $\mathcal{C}_{i}$'' for example.

Let $T$ be a spanning tree of $G$.
A component $C$ in $\cc(G - S)$ has \emph{type $(i,j)$} in $T$
if $C$ belongs to $\mathcal{C}_{i}$ and
$T[V(C) \cup S]$ has the $j$-th forest type of $\mathcal{C}_{i}$.
The \emph{signature} of $T$ is the pair $(T[S],\sigma_{T})$,
where $\sigma_{T}$ is a mapping such that
$\sigma_{T}(i,j)$ is the number of components $C \in \mathcal{C}_i$ of type $(i,j)$ in $T$.
We show that having the same signature implies the equivalence of two spanning trees.

\begin{claim}\label{lem:vi-tree-signature}
If spanning trees $T$ and $T'$ of $G$ have the same signature,
then $\cng_{G}(T) = \cng_{G}(T')$.
\end{claim}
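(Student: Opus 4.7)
The plan is to construct a graph automorphism $\eta$ of $G$ that sends $T$ to $T'$; then \cref{obs:equivalent-trees} gives $\cng_{G}(T) = \cng_{G}(T')$ immediately. The key idea is to build $\eta$ so that it is the identity on $S$ and, on each component of $G-S$, realizes one of the forest-type isomorphisms guaranteed by the assumption that $T$ and $T'$ have the same signature.

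Concretely, I would first note that since the signatures agree we have $T[S]=T'[S]$ and $\sigma_{T}=\sigma_{T'}$. The latter means that for every pair $(i,j)$ the number of components of type $(i,j)$ in $T$ equals the corresponding number in $T'$, so we can fix a bijection $\pi$ from $\cc(G-S)$ to itself such that whenever $C$ has type $(i,j)$ in $T$, the component $C':=\pi(C)$ has type $(i,j)$ in $T'$. By the definition of forest type, for each such pair $(C,C')$ there exists an isomorphism $\eta_{C}\colon C_{S}\to C'_{S}$ that is the identity on $S$, maps $T[V(C)\cup S]$ to $T'[V(C')\cup S]$, and whose restriction to $V(C)$ is a type isomorphism from $C$ to $C'$. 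I would then define $\eta\colon V(G)\to V(G)$ by $\eta(v)=v$ for $v\in S$ and $\eta(v)=\eta_{C}(v)$ for $v\in V(C)$; this is well-defined because the sets $V(C)$ partition $V(G)\setminus S$, and it is a bijection because each $\eta_{C}$ is a bijection from $V(C)$ to $V(C')$ and $\pi$ permutes the components.

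The next step is to verify that $\eta$ is an automorphism of $G$ and, simultaneously, an isomorphism from $T$ to $T'$. Every edge of $G$ falls into one of three types: (a) edges inside $S$, (b) edges between $V(C)$ and $S$ for some $C$, or (c) edges inside a single $V(C)$ (there are no edges between distinct components of $G-S$). For (a) there is nothing to show for $G$ since $\eta$ fixes $S$, and for $T$ the claim follows from $T[S]=T'[S]$. For (b), the fact that $\eta_{C}$ is a type isomorphism gives $N_{G}(\eta_{C}(v))\cap S = N_{G}(v)\cap S$ for every $v\in V(C)$, which preserves $G$-adjacencies to $S$, and the fact that $\eta_{C}$ maps $T[V(C)\cup S]$ to $T'[V(C')\cup S]$ takes care of the $T$-edges. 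For (c) both properties follow because $\eta_{C}$ is an isomorphism $C_{S}\to C'_{S}$ restricting to an isomorphism of the spanning forests. Together these show $\eta$ is an automorphism of $G$ that is simultaneously an isomorphism $T\to T'$, so $T$ and $T'$ are equivalent and \cref{obs:equivalent-trees} finishes the argument.

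The only delicate point I expect is the interaction between the different $\eta_{C}$'s on the shared set $S$. Because all $\eta_{C}$ are required to be the identity on $S$, their common restriction to $S$ is unambiguous, and the extra condition that each $\eta_{C}$ is a type isomorphism on $V(C)$ ensures that gluing them along the fixed set $S$ does not create or destroy any $G$-edge across the boundary; this is precisely the feature that motivates the definitions of type and forest type in the paragraphs above the claim, so the verification should be immediate once the bijection $\pi$ is fixed.
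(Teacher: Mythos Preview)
Your proposal is correct and follows essentially the same approach as the paper: build an automorphism $\eta$ of $G$ that fixes $S$ pointwise and, on each component of $G-S$, uses the forest-type isomorphism guaranteed by the equality of signatures to match up components of the same type $(i,j)$; then appeal to \cref{obs:equivalent-trees}. Your write-up is in fact more detailed than the paper's, which handles the verification that $\eta$ is an automorphism of $G$ and an isomorphism $T\to T'$ in a single sentence, whereas you carefully split into the three edge cases (a)--(c).
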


\begin{claimproof}
Due to \cref{obs:equivalent-trees}, it suffices to show that $T$ and $T'$ are equivalent.
We show that there is an automorphism $\eta$ of $G$ that is an isomorphism from $T$ to $T'$.
Let $(F,\sigma)$ be the common signature of $T$ and $T'$.
For all $v \in S$, set $\eta(v) = v$.
Let $i \in [p]$ and $j \in [q_{i}]$.
We order the components of type $(i,j)$ in $T$ in an arbitrary way as $C_{1}, \dots, C_{\sigma(i,j)}$.
Similarly, we order the ones in $T'$ as $C'_{1}, \dots, C'_{\sigma(i,j)}$.
For each $h \in [\sigma(i,j)]$,
we extend $\eta$ by mapping $C_{h}$ to $C'_{h}$ with the type isomorphism certifying that they are of type $(i,j)$.
The definition of being type $(i,j)$ ensures that
the resulting mapping $\eta$ is an automorphism of $G$ and an isomorphism from $T$ to $T'$.
\end{claimproof}

\proofsubparagraph{Leaf components and non-leaf components in a spanning tree.}
Let $T$ be a spanning tree of $G$.
We call $C \in \cc(G-S)$ a \emph{non-leaf component} in $T$
if there exists $C' \in \cc(T[V(C)])$ such that $|N_{T}(V(C')) \cap S| \ge 2$;
otherwise, we call $C$ a \emph{leaf component} in $T$.
Intuitively, a non-leaf component connects some vertices in $S$, while a leaf component does not.
Observe that from the type $(i,j)$ in $T$ of a component, we can easily check if the component is a leaf or a non-leaf.

We can see that the number of non-leaf components is small as $T$ contains no cycle.

\begin{observation}\label{obs:vi-non-leaves}
There are at most $|S|-1$ non-leaf components in $T$.
\end{observation}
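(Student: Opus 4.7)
The plan is to construct an auxiliary forest on vertex set $S$ which contains at least one edge for every non-leaf component, from which the bound $|S|-1$ follows immediately since any forest on $|S|$ vertices has at most $|S|-1$ edges.

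First, for each non-leaf component $C$ of $G-S$, I would fix an arbitrary sub-component $C'_C \in \cc(T[V(C)])$ witnessing the non-leaf property, i.e., satisfying $|N_T(V(C'_C)) \cap S| \geq 2$, and pick two distinct vertices $s_C^1, s_C^2 \in N_T(V(C'_C)) \cap S$. I would then form the auxiliary graph $H$ on vertex set $S$ whose edge set is $\{\{s_C^1, s_C^2\} : C \text{ is a non-leaf component}\}$, where a priori the assignment $C \mapsto \{s_C^1, s_C^2\}$ must still be shown to be injective in order for $H$ to actually contain at least one edge per such $C$.

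The main step is to argue that $H$ is a forest and that distinct non-leaf components contribute distinct edges. Both claims follow from essentially the same observation: the chosen sub-components $\{C'_C\}_C$ are pairwise vertex-disjoint subtrees of $T$ (since each lies inside a distinct connected component of $T - S$), and the unique $s_C^1$--$s_C^2$ path in $T$ must pass through $V(C'_C)$ because $V(C'_C)$ induces a connected subgraph of $T$ and is adjacent to both endpoints. Hence a hypothetical cycle in $H$ through components $C_1, \ldots, C_k$ would translate, by concatenation, into a closed walk in $T$ whose internal vertices come from the disjoint sets $V(C'_{C_1}), \ldots, V(C'_{C_k})$, and whose boundary vertices $s_1,\ldots,s_k \in S$ are all distinct; this would yield an actual cycle in $T$, contradicting that $T$ is a tree.

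The primary obstacle in the plan is handling the edge case of two non-leaf components sharing the same chosen pair $(s^1,s^2)$ (which would collapse two contributions to a single edge of $H$), but this is ruled out by the same reasoning: such a situation would produce two internally disjoint $s^1$--$s^2$ paths in $T$, again contradicting the uniqueness of paths in a tree. Once $H$ is established as a simple forest with one edge per non-leaf component, the bound $|H| \leq |S| - 1$ concludes the argument.
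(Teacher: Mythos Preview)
Your argument is correct and matches the paper's reasoning: the paper records this as an observation justified only by the one-line remark that ``$T$ contains no cycle'', without giving a detailed proof. Your auxiliary-forest construction on $S$ (equivalently, exhibiting $H$ as a topological minor of $T$ via the internally vertex-disjoint paths through the $C'_C$) is exactly the natural way to make that remark rigorous.
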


Let $H$ be the subgraph of $G$ induced by $S$ and the vertices in non-leaf components;
note that $H$ has at most $k^2$ vertices.
Observe that adding leaf components to $H$ never decreases the number of connected components in $T[V(H)]$.
Hence, the following holds.

\begin{observation}\label{obs:vi-local-spanning-tree}
$T[V(H)]$ is a spanning tree of $H$.
Furthermore, for each leaf component $C$ in $T$, $T[V(H) \cup V(C)]$ is a spanning tree of $G[V(H) \cup V(C)]$.
\end{observation}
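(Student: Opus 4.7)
The plan is to deduce both assertions from a single inductive fact: adjoining the vertex set of one leaf component $C$ to a set already containing $V(H)$ does not change the number of connected components of the induced $T$-subgraph. Iterating this fact over all leaf components of $T$ transforms $T[V(H)]$ into $T[V(G)] = T$ without changing the component count; since $T$ is connected, this forces $T[V(H)]$ to be connected, establishing the first assertion. Likewise, iterating over all leaf components other than a designated $C$ gives that $T[V(H) \cup V(C)]$ is connected, establishing the second.

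For the inductive fact itself, I would decompose $T[V(C)]$ into its connected components $C_1, \dots, C_m$. By definition of a leaf component, each $C_\ell$ satisfies $|N_T(V(C_\ell)) \cap S| \le 1$; and since $T$ is a spanning tree of $G$ and the only neighbors of $V(C)$ outside $V(C)$ in $G$ lie in $S$, each $C_\ell$ must in fact be attached to $S \subseteq V(H)$ by \emph{exactly} one $T$-edge --- at least one by connectivity of $T$, at most one by acyclicity combined with the leaf property. Consequently $T[V(H) \cup V(C)]$ arises from $T[V(H)]$ by grafting each $C_\ell$ onto some vertex of $S$ via a single edge, an operation that merges $C_\ell$ into an already existing connected component rather than creating a new one.

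Acyclicity of both $T[V(H)]$ and $T[V(H) \cup V(C)]$ is automatic as they are subgraphs of the tree $T$, and their edges lie in $E(H)$ and $E(G[V(H) \cup V(C)])$ respectively because these are induced subgraphs of $T \subseteq G$. Combined with the connectedness established above, each of these subgraphs is therefore a spanning tree of the corresponding induced subgraph, as claimed. The main (though minor) obstacle is the accounting step asserting that every component $C_\ell$ of $T[V(C)]$ is joined to $S$ by exactly one $T$-edge; once this is pinned down, the rest of the argument is essentially bookkeeping.
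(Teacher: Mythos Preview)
Your argument is correct and follows essentially the same approach as the paper: the paper's justification for this observation is the single sentence ``adding leaf components to $H$ never decreases the number of connected components in $T[V(H)]$,'' and you have simply fleshed out why this is so (the grafting-by-one-edge analysis of each $C_\ell$) and drawn the conclusion in the same way, by comparing the component count of $T[V(H)]$ to that of $T$.
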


By \cref{obs:vi-local-spanning-tree},
an edge in $H$ is detoured in $T[V(H)]$ and an edge in a leaf component $C$ is detoured in $T[V(H) \cup V(C)]$.
This implies the following locality of leaf components.

\begin{observation}\label{obs:vi-local-congestion-computation}
Let $C$ be a leaf component in $T$ and $e$ be an edge incident to a vertex in $C$.
If $e \in E(T)$, then every edge using $e$ in its detour in $T$ is incident to a vertex in $C$.
If $e \notin E(T)$, then the detour for $e$ in $T$ contains an edge $f$
if and only if the detour for $e$ in $T[V(H) \cup V(C)]$ contains $f$.
\end{observation}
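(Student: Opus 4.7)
The plan is to handle both parts by analyzing the structure of $T$ near the leaf component $C$, using the defining property that every connected component of $T[V(C)]$ has at most one $T$-neighbor in $S$. Throughout I will denote by $C'_1,\dots,C'_m$ the connected components of $T[V(C)]$, and for each $C'_i$ that has a $T$-neighbor in $S$, by $s_i \in S$ the unique such vertex.

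For the first part, I take $e \in E(T)$ incident to a vertex of $C$ and let $T_1, T_2$ be the two subtrees of $T-e$. It suffices to show that $V(T_1) \subseteq V(C)$ or $V(T_2) \subseteq V(C)$, because then every $G$-edge whose detour traverses $e$ has an endpoint in $V(C)$. I would split on the two possible types of $e$. If $e = \{x,y\}$ with $x,y \in V(C)$, then $e$ lies inside some $C'_i$, and deleting $e$ splits $C'_i$ into two pieces $C'_{i,1}, C'_{i,2}$; since the only $T$-edges from $C'_i$ to outside $V(C)$ all go to $s_i$, exactly one of the two pieces retains the edge to $s_i$ (and hence to the rest of $T$), while the other piece has no $T$-edge leaving $V(C)$ and thus constitutes an entire side of $T-e$. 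If instead $e = \{x,s\}$ with $x \in V(C'_i)$ and $s \in S$, then $s = s_i$ and $V(C'_i)$ becomes disconnected from $S$ in $T-e$, giving $V(T_x) = V(C'_i) \subseteq V(C)$.

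For the second part, I would rely on \cref{obs:vi-local-spanning-tree}, which already tells us that $T[V(H)\cup V(C)]$ is a spanning tree of $G[V(H)\cup V(C)]$; so if the $u$-$v$ path in $T$ (for $e=\{u,v\}$ incident to $C$) happens to lie inside $V(H)\cup V(C)$, then by uniqueness of paths in a tree it coincides edge-for-edge with the $u$-$v$ path in $T[V(H)\cup V(C)]$. Since $e$ is an edge of $G$ incident to $C$ and $V(C)$ has no $G$-neighbors outside $S$, both endpoints of $e$ lie in $V(C) \cup S \subseteq V(H)\cup V(C)$. The only way the $T$-path could leave this region is by entering some other leaf component $C'' \neq C$. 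Suppose it does: then the path enters $V(C'')$ through some $T$-edge $\{s_1, x''\}$ and leaves through some $T$-edge $\{y'', s_2\}$ with $s_1, s_2 \in S$. Acyclicity of $T$ forces $s_1 \neq s_2$, and the interior segment of the path inside $V(C'')$ witnesses that $x''$ and $y''$ belong to the same connected component $C'''$ of $T[V(C'')]$; hence $\{s_1,s_2\} \subseteq N_T(V(C''')) \cap S$, contradicting the leaf-component property of $C''$.

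The main subtlety is ensuring that ``leaves'' and ``revisits'' are ruled out correctly in Part 2: the path might traverse the same $C$ multiple times or visit multiple non-leaf components, but these are all harmless since $V(C) \cup V(H)$ is closed under such visits; the only dangerous event is entering a foreign leaf component, which is exactly what the contradiction above rules out. Once confinement to $V(H)\cup V(C)$ is established, tree-path uniqueness closes the argument for Part 2, and the case analysis on $e$ closes Part 1.
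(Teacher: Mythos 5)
Your proof is correct. The paper states this as an observation without an explicit proof, justifying it only by the preceding appeal to \cref{obs:vi-local-spanning-tree}; your argument fills in the details along essentially the intended lines, with a clean direct cut-based case analysis for the first part.

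One remark on the second part: the contradiction argument about the detour entering a foreign leaf component, while valid, is not needed. Since \cref{obs:vi-local-spanning-tree} already guarantees that $T[V(H)\cup V(C)]$ is a spanning tree of $G[V(H)\cup V(C)]$, and both endpoints of $e$ lie in $V(C)\cup S \subseteq V(H)\cup V(C)$, that subtree contains a $u$--$v$ path, which is also a $u$--$v$ path in $T$; by uniqueness of paths in a tree it \emph{is} the detour of $e$ in $T$. So confinement to $V(H)\cup V(C)$ comes for free, and the two detours coincide edge-for-edge without any further case analysis.
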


By \cref{obs:vi-local-congestion-computation}, we can see that in any spanning tree $T$,
an edge incident to a vertex of a leaf component $C$ has congestion at most $|E(S \cup V(C))| < k^{2}$.
Therefore, if we know that $\stc(G) \ge k^{2}$, then we only have to bound the congestion of the edges in $T[V(H)]$.
This will be helpful for simplifying the description of the algorithm below.
In fact, we can assume that $\stc(G) \ge k^{2}$
since we can check whether $\stc(G) = s$ for every $s < k^{2}$
by using a fixed-parameter algorithm parameterized by $\tw(G)+\stc(G)$ (e.g., that of \cref{thm:tw-alg}).

\proofsubparagraph{The algorithm.}
We are ready to describe the algorithm, which can be outlined as follows.
\begin{enumerate}
  \item Guess the non-leaf components in an optimal spanning tree $T$.
  \begin{itemize}
    \item By \cref{obs:vi-non-leaves}, there are at most $|S|-1$ non-leaf components in $T$.
  \end{itemize}
  \item Let $H$ be the subgraph of $G$ induced by $S$ and the vertices in non-leaf components.
  Guess $T_{H} \coloneqq T[V(H)]$.
  \begin{itemize}
    \item By \cref{obs:vi-local-spanning-tree}, $T_{H}$ is a spanning tree of $H$.
  \end{itemize}
  \item Find the rest of the tree $T$.
  \begin{itemize}
    \item By \cref{lem:vi-tree-signature}, it suffices to find an optimal signature of $T$ consistent with the guesses.
    \item Since all remaining components are leaf components,
    \cref{obs:vi-local-congestion-computation} allows us to use integer linear programming (ILP) for this task.
  \end{itemize}
\end{enumerate}
As discussed above, we assume that $\stc(G) \ge k^{2}$,
and thus we only need to upper-bound the congestion of the edges in $T_{H}$.

For the non-leaf components, we have at most $p^{k} \le 2^{k^{3}}$ possibilities
for selecting at most $k-1$ components from $p$ classes.
For $i \in [p]$, let $\mathcal{C}'_{i}$ be the set of components obtained from $\mathcal{C}_{i}$
by removing the selected components (if no component is selected from $\mathcal{C}_{i}$, then $\mathcal{C}'_{i} = \mathcal{C}_{i}$).
Given the non-leaf components, there are at most $k^{2k^{2}}$ possibilities for $T_{H}$
as $H$ has at most $k^{2}$ vertices and $T_{H}$ is a spanning tree of $H$.
Thus, in total, there are at most $2^{k^{3}} \cdot k^{2k^{2}}$ candidates for $T_{H}$.
Therefore, by paying the multiplicative factor of $2^{k^{3}} \cdot k^{2k^{2}}$ in the running time,
we can assume that $T_{H}$ is correctly guessed.

Now we find the rest of the tree $T$ using ILP (the third step in the outline).
For each $i \in [p]$ and each $j \in [q_{i}]$,
let $x_{i,j}$ be the non-negative integer variable representing the value $\sigma_{T}(i,j)$;
that is, it determines the number of components of type $(i,j)$ in $T$.
If components of type $(i,j)$ are non-leaf in $T$, then we set $x_{i,j} = 0$.
For all $i \in [p]$, we set $\sum_{j \in [q_{i}]} x_{i,j} = |\mathcal{C}'_{i}|$
to make the number of used components in $\mathcal{C}'_{i}$ correct.

For each edge $e$ in $T_{H}$, take the positive integer variable $y_{e}$ that represents the congestion of $e$ in $T$.
Let $a_{e}$ be the number of edges in $H$ that contribute to the congestion of $e$ in $T$.
Let $b_{(i,j), e}$ be the number of edges in a component of type $(i,j)$ in $T$ that contribute to the congestion of $e$ in $T$.
By \cref{obs:vi-local-congestion-computation}, we can compute these constants in advance.
Now we can represent the congestion of $e$ in $T$ by setting $y_{e} = a_{e} + \sum_{i \in [p], \; j \in [q_{i}]} b_{(i,j), e} \cdot x_{i,j}$.

Finally, we set the objective of the formula to minimizing the maximum of $y_{e}$.
That is, we introduce a variable $z$ and put the constraint $y_{e} \le z$ for all $e \in E(H)$,
and then set $z$ as the objective function to be minimized.

The number of variables in the formula is at most $p \cdot \max_{i \in [p]} q_{i} + |E(H)| + 1$, which is $\bO(2^{k^{2}} k^{2{k}})$.
It is known that ILP parameterized by the number of variables is fixed-parameter tractable~\citep{mor/Lenstra83},
and thus we can conclude that our algorithmic task here is fixed-parameter tractable parameterized by $k$, the vertex integrity of $G$. This implies the theorem.
(Note that by a recent result of~\cite{focs/ReisR23},
ILP with $d$ variables can be solved in randomized ${(\log (2d))}^{\bO(d)}$ time.
Hence, this part can be done in randomized $k^{\bO(2^{k^{2}} k^{2{k}})}$ time.)
\end{proof}

\subsection{Feedback Edge Number}\label{ssec:fes}

Notice that the algorithm of \cref{thm:tw-alg} already implies that {\STC} is FPT
parameterized by the feedback edge number $\fes$ of the input graph:
any instance with $k \ge \fes+1$ is trivially yes, while if $k \le \fes$ we
can use the fact that $\tw(G) \le \fes(G)$ and apply \cref{thm:tw-alg}, obtaining
an $\fes^{\bO(\fes)} n^{\bO(1)}$ algorithm.
A natural question is whether the slightly super-exponential parametric dependence
can be overcome, and we answer this affirmatively by providing such an algorithm.
In fact, more strongly, we present a kernelization algorithm that results in a graph with only
$\bO(\fes)$ vertices and edges, thus allowing us to exhaustively guess the spanning tree of minimum congestion.
To do so, we notice that we can safely delete vertices of degree $1$,
resulting in a graph with only $\bO(\fes)$ vertices of degree larger than $2$.
Next, we can contract most of the remaining edges,
thereby leaving only $\bO(\fes)$ of them,
thus allowing us to guess exhaustively which edges belong to an optimal solution.

\begin{theorem}\label{thm:fes}
    {\STC} admits a kernel with $\bO(\fes)$ vertices and edges,
    where $\fes$ denotes the feedback edge number of the input graph.
\end{theorem}

\begin{proof}
    Let $G_0$ denote the input connected graph.
    We start by exhaustively deleting vertices of degree $1$ in $G_0$;
    it is easy to see that this is safe, as any such vertex is connected to
    any spanning tree via the single edge it is incident with,
    which is of congestion $1$.
    Let $G$ denote the resulting connected graph,
    where $\fes$ denotes its feedback edge number;
    notice that the deletion of vertices cannot increase
    the feedback edge number of a graph.
    Let $V^{=2} (G)$ and $V^{\ge 3} (G)$ denote the sets of vertices of $G$ of
    degree exactly $2$ and at least $3$ respectively;
    notice that they induce a partition on $V(G)$,
    since all vertices of $G$ are of degree at least~$2$.
    \cite[Lemma~2]{jgaa/BentertDKNN20} proved that
    $|V^{\ge 3} (G)| < 2\fes$ and
    $\sum_{v \in V^{\ge 3} (G)} \deg_G (v) =
    2(|V^{\ge 3} (G)| + \fes - 1) < 6 \fes$.%
    \footnote{We mention in passing a slight mistake in their proof,
    where they claim that $\sum_{v \in V^{\ge 3} (G)} \deg_G(v) = 3|V^{\ge 3} (G)|$
    instead.}
    We next define the following reduction rule.

    \proofsubparagraph*{Rule~$(\diamond)$.}
    Let $G$ be a graph with (not necessarily distinct) vertices $u,v \in V^{\ge 3}(G)$
    such that there is a $u$--$v$ path $P$ in $G$
    whose internal vertices all belong to $V^{=2}(G)$.
    Let $L_P \ge 1$ be equal to the number of internal vertices of $P$.
    Then,
    \begin{romanenumerate}
        \item\label{fes:red_rule} if $u=v$ and $L_P > 2$, contract edges in $P$ such that only $2$ internal vertices are left,
        \item if $u \neq v$, $\{u,v\} \in E(G)$, and $L_P > 1$, contract edges in $P$ such that only $1$ internal vertex is left,
        \item if $u \neq v$ and $\{u,v\} \notin E(G)$, delete all internal vertices of $P$ and add the edge $\{u,v\}$.
    \end{romanenumerate}

    Notice that Rule~$(\diamond)$ can be applied at most $n$ times, since each of its
    applications reduces the number of vertices of the graph.
    Let $G'$ denote the connected graph obtained after exhaustively doing so.
    Notice that $G$ can be obtained from $G'$ by only subdividing edges;
    edge subdivision does not change the spanning tree congestion
    of unweighted graphs~\citep[Lemma~7.10]{algorithmica/BodlaenderFGOL12},
    thus it holds that $\stc(G) = \stc(G')$.
    Furthermore, it is easy to see that $V^{\ge 3}(G') = V^{\ge 3}(G)$,
    and in fact any such vertex has the same degree in both $G$ and $G'$.
    Consequently, it follows that
    $\sum_{v \in V^{\ge 3} (G')} \deg_{G'}(v) = \sum_{v \in V^{\ge 3} (G)} \deg_{G}(v) < 6 \fes$.

    Let $E_1$ and $E_2$ define a partition on the edges of $G'$,
    where $E_1 = \setdef{e \in E(G')}{e \cap V^{\ge 3} (G') \neq \varnothing}$
    denotes the set of edges in $E(G')$ with at least one endpoint belonging to $V^{\ge 3} (G')$,
    and $E_2$ the rest, whose endpoints both belong to $V^{= 2} (G')$.
    Due to the previous discussion, it holds that $|E_1| < 6 \fes$.
    As for $E_2$, since Rule~$(\diamond)$ has been applied exhaustively,
    any such edge can only be due to Case~(\ref{fes:red_rule}) of Rule~$(\diamond)$,
    thus $|E_2| \le |E_1| / 2$ holds.
    In total, it follows that $|E(G')| < 9 \fes$.
    Furthermore, since $G'$ is connected, it follows that $|V(G')| \le |E(G')| + 1$.
\end{proof}

\section{Conclusion}\label{sec:conclusion}

In this paper we have presented a number of new results on the parameterized complexity of
\STC, painting an almost-complete picture regarding its tractability under the most standard parameterizations.
As a direction for future work, it would be interesting to consider the problem's tractability
parameterized by the neighborhood diversity,
the treewidth plus the maximum degree,
as well as whether an FPT-AS parameterized by clique-width exists.
Furthermore, although the problem is FPT by vertex cover due to \cref{thm:vi},
the algorithm is based on an ILP,
and a simpler (and faster) combinatorial algorithm might be possible under this
parameterization; along these lines, it would be interesting to determine whether the problem
admits a polynomial kernel in this case.
We additionally mention that it is unknown whether the problem remains NP-hard on cographs or
when $k=4$.

\bibliographystyle{abbrvnat}
\bibliography{refs}

@book{Diestel17,
  author    = {Reinhard Diestel},
  title     = {Graph Theory},
  publisher = {Springer},
  year      = {2017},
  series    = {Graduate texts in mathematics},
  volume    = {173},
  doi       = {10.1007/978-3-662-53622-3},
  isbn      = {978-3-662-53621-6}
}

@book{books/CyganFKLMPPS15,
  author    = {Marek Cygan and
               Fedor V. Fomin and
               Lukasz Kowalik and
               Daniel Lokshtanov and
               D{\'{a}}niel Marx and
               Marcin Pilipczuk and
               Michal Pilipczuk and
               Saket Saurabh},
  title     = {Parameterized Algorithms},
  publisher = {Springer},
  year      = {2015},
  doi       = {10.1007/978-3-319-21275-3},
  isbn      = {978-3-319-21274-6}
}

@inproceedings{icalp/Lampis14,
  author       = {Michael Lampis},
  title        = {Parameterized Approximation Schemes Using Graph Widths},
  booktitle    = {Automata, Languages, and Programming - 41st International Colloquium,
                  {ICALP} 2014},
  series       = {Lecture Notes in Computer Science},
  volume       = {8572},
  pages        = {775--786},
  publisher    = {Springer},
  year         = {2014},
  doi          = {10.1007/978-3-662-43948-7_64}
}

@article{jcss/JansenKMS13,
  author       = {Klaus Jansen and
                  Stefan Kratsch and
                  D{\'{a}}niel Marx and
                  Ildik{\'{o}} Schlotter},
  title        = {Bin packing with fixed number of bins revisited},
  journal      = {J. Comput. Syst. Sci.},
  volume       = {79},
  number       = {1},
  pages        = {39--49},
  year         = {2013},
  doi          = {10.1016/j.jcss.2012.04.004}
}

@article{algorithmica/LuuC24,
  author       = {Huong Luu and
                  Marek Chrobak},
  title        = {Better Hardness Results for the Minimum Spanning Tree Congestion Problem},
  journal      = {Algorithmica},
  volume       = {87},
  number       = {1},
  pages        = {148--165},
  year         = {2025},
  publisher    = {Springer},
  doi          = {10.1007/s00453-024-01278-5}
}

@inproceedings{birthday/Otachi20,
  author       = {Yota Otachi},
  title        = {A Survey on Spanning Tree Congestion},
  booktitle    = {Treewidth, Kernels, and Algorithms - Essays Dedicated to Hans L. Bodlaender
                  on the Occasion of His 60th Birthday},
  series       = {Lecture Notes in Computer Science},
  volume       = {12160},
  pages        = {165--172},
  publisher    = {Springer},
  year         = {2020},
  doi          = {10.1007/978-3-030-42071-0_12}
}

@article{algorithmica/BodlaenderFGOL12,
  author       = {Hans L. Bodlaender and
                  Fedor V. Fomin and
                  Petr A. Golovach and
                  Yota Otachi and
                  Erik Jan van Leeuwen},
  title        = {Parameterized Complexity of the Spanning Tree Congestion Problem},
  journal      = {Algorithmica},
  volume       = {64},
  number       = {1},
  pages        = {85--111},
  year         = {2012},
  doi          = {10.1007/S00453-011-9565-7}
}

@article{jgaa/OkamotoOUU11,
  author       = {Yoshio Okamoto and
                  Yota Otachi and
                  Ryuhei Uehara and
                  Takeaki Uno},
  title        = {Hardness Results and an Exact Exponential Algorithm for the Spanning
                  Tree Congestion Problem},
  journal      = {J. Graph Algorithms Appl.},
  volume       = {15},
  number       = {6},
  pages        = {727--751},
  year         = {2011},
  doi          = {10.7155/JGAA.00246}
}

@inproceedings{wg/GurskiW00,
  author       = {Frank Gurski and
                  Egon Wanke},
  title        = {The Tree-Width of Clique-Width Bounded Graphs Without {$K_{n,n}$}},
  booktitle    = {Graph-Theoretic Concepts in Computer Science, 26th International Workshop,
                  {WG} 2000},
  series       = {Lecture Notes in Computer Science},
  volume       = {1928},
  pages        = {196--205},
  publisher    = {Springer},
  year         = {2000},
  doi          = {10.1007/3-540-40064-8_19}
}

@book{books/fm/GareyJ79,
  author       = {M. R. Garey and
                  David S. Johnson},
  title        = {Computers and Intractability: {A} Guide to the Theory of NP-Completeness},
  publisher    = {W. H. Freeman},
  year         = {1979},
  isbn         = {0-7167-1044-7}
}

@article{dm/Hruska08,
  author       = {Stephen W. Hruska},
  title        = {On tree congestion of graphs},
  journal      = {Discret. Math.},
  volume       = {308},
  number       = {10},
  pages        = {1801--1809},
  year         = {2008},
  doi          = {10.1016/J.DISC.2007.04.030},
}

@article{dmgt/CastejonO09,
  author       = {Alberto Castej{\'{o}}n and
                  Mikhail I. Ostrovskii},
  title        = {Minimum congestion spanning trees of grids and discrete toruses},
  journal      = {Discuss. Math. Graph Theory},
  volume       = {29},
  number       = {3},
  pages        = {511--519},
  year         = {2009},
  doi          = {10.7151/DMGT.1461},
}

@article{eccc/ECCC-TR03-049,
  author       = {Piotr Berman and
                  Marek Karpinski and
                  Alex D. Scott},
  title        = {Approximation Hardness of Short Symmetric Instances of {MAX-3SAT}},
  journal      = {Electron. Colloquium Comput. Complex.},
  volume       = {{TR03-049}},
  year         = {2003},
  url          = {https://eccc.weizmann.ac.il/eccc-reports/2003/TR03-049/},
}

@inproceedings{iwoca/Kolman24,
  author       = {Petr Kolman},
  title        = {Approximating Spanning Tree Congestion on Graphs with Polylog Degree},
  booktitle    = {Combinatorial Algorithms - 35th International Workshop, {IWOCA} 2024},
  series       = {Lecture Notes in Computer Science},
  volume       = {14764},
  pages        = {497--508},
  publisher    = {Springer},
  year         = {2024},
  doi          = {10.1007/978-3-031-63021-7_38}
}

@article{mor/Lenstra83,
  author       = {Hendrik W. {Lenstra Jr.}},
  title        = {Integer Programming with a Fixed Number of Variables},
  journal      = {Math. Oper. Res.},
  volume       = {8},
  number       = {4},
  pages        = {538--548},
  year         = {1983},
  doi          = {10.1287/MOOR.8.4.538},
}

@inproceedings{Korhonen21,
  author       = {Tuukka Korhonen},
  title        = {A Single-Exponential Time 2-Approximation Algorithm for Treewidth},
  booktitle    = {62nd {IEEE} Annual Symposium on Foundations of Computer Science, {FOCS} 2021},
  pages        = {184--192},
  publisher    = {{IEEE}},
  year         = {2021},
  doi          = {10.1109/FOCS52979.2021.00026}
}

@article{siamcomp/BodlaenderH98,
  author       = {Hans L. Bodlaender and
                  Torben Hagerup},
  title        = {Parallel Algorithms with Optimal Speedup for Bounded Treewidth},
  journal      = {{SIAM} J. Comput.},
  volume       = {27},
  number       = {6},
  pages        = {1725--1746},
  year         = {1998},
  doi          = {10.1137/S0097539795289859}
}

@inproceedings{mfcs/GarganoR23,
  author       = {Luisa Gargano and
                  Adele A. Rescigno},
  title        = {An {FPT} Algorithm for Spanning Trees with Few Branch Vertices Parameterized
                  by Modular-Width},
  booktitle    = {48th International Symposium on Mathematical Foundations of Computer
                  Science, {MFCS} 2023},
  series       = {LIPIcs},
  volume       = {272},
  pages        = {50:1--50:15},
  publisher    = {Schloss Dagstuhl - Leibniz-Zentrum f{\"{u}}r Informatik},
  year         = {2023},
  doi          = {10.4230/LIPICS.MFCS.2023.50}
}

@article{siamdm/DeBiasioL19,
  author       = {Louis DeBiasio and
                  Allan Lo},
  title        = {Spanning Trees with Few Branch Vertices},
  journal      = {{SIAM} J. Discret. Math.},
  volume       = {33},
  number       = {3},
  pages        = {1503--1520},
  year         = {2019},
  doi          = {10.1137/17M1152759}
}

@inproceedings{icalp/GarganoHSV02,
  author       = {Luisa Gargano and
                  Pavol Hell and
                  Ladislav Stacho and
                  Ugo Vaccaro},
  title        = {Spanning Trees with Bounded Number of Branch Vertices},
  booktitle    = {Automata, Languages and Programming, 29th International Colloquium,
                  {ICALP} 2002},
  series       = {Lecture Notes in Computer Science},
  volume       = {2380},
  pages        = {355--365},
  publisher    = {Springer},
  year         = {2002},
  doi          = {10.1007/3-540-45465-9_31}
}

@inproceedings{swat/BorradaileCEMN20,
  author       = {Glencora Borradaile and
                  Erin Wolf Chambers and
                  David Eppstein and
                  William Maxwell and
                  Amir Nayyeri},
  title        = {Low-Stretch Spanning Trees of Graphs with Bounded Width},
  booktitle    = {17th Scandinavian Symposium and Workshops on Algorithm Theory, {SWAT} 2020},
  series       = {LIPIcs},
  volume       = {162},
  pages        = {15:1--15:19},
  publisher    = {Schloss Dagstuhl - Leibniz-Zentrum f{\"{u}}r Informatik},
  year         = {2020},
  doi          = {10.4230/LIPICS.SWAT.2020.15}
}

@article{siamcomp/Agarwal92,
  author       = {Pankaj K. Agarwal},
  title        = {Ray Shooting and Other Applications of Spanning Trees with Low Stabbing
                  Number},
  journal      = {{SIAM} J. Comput.},
  volume       = {21},
  number       = {3},
  pages        = {540--570},
  year         = {1992},
  doi          = {10.1137/0221035}
}

@article{mst/FellowsLMMRS09,
  author       = {Michael R. Fellows and
                  Daniel Lokshtanov and
                  Neeldhara Misra and
                  Matthias Mnich and
                  Frances A. Rosamond and
                  Saket Saurabh},
  title        = {The Complexity Ecology of Parameters: An Illustration Using Bounded
                  Max Leaf Number},
  journal      = {Theory Comput. Syst.},
  volume       = {45},
  number       = {4},
  pages        = {822--848},
  year         = {2009},
  doi          = {10.1007/S00224-009-9167-9}
}

@article{siamdm/KleitmanW91,
  author       = {Daniel J. Kleitman and
                  Douglas B. West},
  title        = {Spanning Trees with Many Leaves},
  journal      = {{SIAM} J. Discret. Math.},
  volume       = {4},
  number       = {1},
  pages        = {99--106},
  year         = {1991},
  doi          = {10.1137/0404010}
}

@article{algorithmica/Lampis12,
  author       = {Michael Lampis},
  title        = {Algorithmic Meta-theorems for Restrictions of Treewidth},
  journal      = {Algorithmica},
  volume       = {64},
  number       = {1},
  pages        = {19--37},
  year         = {2012},
  doi          = {10.1007/S00453-011-9554-X}
}

@article{siamdm/BonsmaZ11,
  author       = {Paul S. Bonsma and
                  Florian Zickfeld},
  title        = {A 3/2-Approximation Algorithm for Finding Spanning Trees with Many
                  Leaves in Cubic Graphs},
  journal      = {{SIAM} J. Discret. Math.},
  volume       = {25},
  number       = {4},
  pages        = {1652--1666},
  year         = {2011},
  doi          = {10.1137/100801251}
}

@article{algorithmica/HalldorssonKMT21,
  author       = {Magn{\'{u}}s M. Halld{\'{o}}rsson and
                  Guy Kortsarz and
                  Pradipta Mitra and
                  Tigran Tonoyan},
  title        = {Network Design under General Wireless Interference},
  journal      = {Algorithmica},
  volume       = {83},
  number       = {11},
  pages        = {3469--3490},
  year         = {2021},
  doi          = {10.1007/S00453-021-00866-Z}
}

@article{jacm/SinghL15,
  author       = {Mohit Singh and
                  Lap Chi Lau},
  title        = {Approximating Minimum Bounded Degree Spanning Trees to within One
                  of Optimal},
  journal      = {J. {ACM}},
  volume       = {62},
  number       = {1},
  pages        = {1:1--1:19},
  year         = {2015},
  doi          = {10.1145/2629366}
}

@article{NageleZ19,
  author       = {Martin N{\"{a}}gele and
                  Rico Zenklusen},
  title        = {A New Dynamic Programming Approach for Spanning Trees with Chain Constraints
                  and Beyond},
  journal      = {Mathematics of Operations Research},
  volume       = {49},
  number       = {4},
  pages        = {2078-2108},
  year         = {2024},
  doi          = {10.1287/moor.2023.0012}
}

@article{dam/BercziKKYY25,
  author       = {Krist{\'{o}}f B{\'{e}}rczi and
                  Tam{\'{a}}s Kir{\'{a}}ly and
                  Yusuke Kobayashi and
                  Yutaro Yamaguchi and
                  Yu Yokoi},
  title        = {Finding spanning trees with perfect matchings},
  journal      = {Discret. Appl. Math.},
  volume       = {371},
  pages        = {137--147},
  year         = {2025},
  doi          = {10.1016/j.dam.2025.04.001}
}

@article{siamcomp/EmekP08,
  author       = {Yuval Emek and
                  David Peleg},
  title        = {Approximating Minimum Max-Stretch Spanning Trees on Unweighted Graphs},
  journal      = {{SIAM} J. Comput.},
  volume       = {38},
  number       = {5},
  pages        = {1761--1781},
  year         = {2008},
  doi          = {10.1137/060666202}
}

@article{siamcomp/ElkinEST08,
  author       = {Michael Elkin and
                  Yuval Emek and
                  Daniel A. Spielman and
                  Shang{-}Hua Teng},
  title        = {Lower-Stretch Spanning Trees},
  journal      = {{SIAM} J. Comput.},
  volume       = {38},
  number       = {2},
  pages        = {608--628},
  year         = {2008},
  doi          = {10.1137/050641661}
}

@article{siamcomp/AbrahamN19,
  author       = {Ittai Abraham and
                  Ofer Neiman},
  title        = {Using Petal-Decompositions to Build a Low Stretch Spanning Tree},
  journal      = {{SIAM} J. Comput.},
  volume       = {48},
  number       = {2},
  pages        = {227--248},
  year         = {2019},
  doi          = {10.1137/17M1115575}
}

@article{dm/BodlaenderKMO11,
  author       = {Hans L. Bodlaender and
                  Kyohei Kozawa and
                  Takayoshi Matsushima and
                  Yota Otachi},
  title        = {Spanning tree congestion of $k$-outerplanar graphs},
  journal      = {Discret. Math.},
  volume       = {311},
  number       = {12},
  pages        = {1040--1045},
  year         = {2011},
  doi          = {10.1016/J.DISC.2011.03.002}
}

@article{iandc/Courcelle90,
  author       = {Bruno Courcelle},
  title        = {The Monadic Second-Order Logic of Graphs. {I}. {Recognizable} Sets of
                  Finite Graphs},
  journal      = {Inf. Comput.},
  volume       = {85},
  number       = {1},
  pages        = {12--75},
  year         = {1990},
  doi          = {10.1016/0890-5401(90)90043-H}
}

@article{dmtcs/Ganian15,
  author       = {Robert Ganian},
  title        = {Improving Vertex Cover as a Graph Parameter},
  journal      = {Discret. Math. Theor. Comput. Sci.},
  volume       = {17},
  number       = {2},
  pages        = {77--100},
  year         = {2015},
  doi          = {10.46298/DMTCS.2136}
}

@article{dm/Ostrovskii04,
  author       = {M. I. Ostrovskii},
  title        = {Minimal congestion trees},
  journal      = {Discret. Math.},
  volume       = {285},
  number       = {1-3},
  pages        = {219--226},
  year         = {2004},
  doi          = {10.1016/J.DISC.2004.02.009}
}

@article{siamdm/CaiC95,
  author       = {Leizhen Cai and
                  Derek G. Corneil},
  title        = {Tree Spanners},
  journal      = {{SIAM} J. Discret. Math.},
  volume       = {8},
  number       = {3},
  pages        = {359--387},
  year         = {1995},
  doi          = {10.1137/S0895480192237403}
}

@article{mst/Simonson87,
  author       = {Shai Simonson},
  title        = {A Variation on the Min Cut Linear Arrangement Problem},
  journal      = {Math. Syst. Theory},
  volume       = {20},
  number       = {4},
  pages        = {235--252},
  year         = {1987},
  doi          = {10.1007/BF01692067}
}

@article{ipl/FominGL11,
  author       = {Fedor V. Fomin and
                  Petr A. Golovach and
                  Erik Jan van Leeuwen},
  title        = {Spanners of bounded degree graphs},
  journal      = {Inf. Process. Lett.},
  volume       = {111},
  number       = {3},
  pages        = {142--144},
  year         = {2011},
  doi          = {10.1016/J.IPL.2010.10.021}
}

@article{jcss/DraganFG11,
  author       = {Feodor F. Dragan and
                  Fedor V. Fomin and
                  Petr A. Golovach},
  title        = {Spanners in sparse graphs},
  journal      = {J. Comput. Syst. Sci.},
  volume       = {77},
  number       = {6},
  pages        = {1108--1119},
  year         = {2011},
  doi          = {10.1016/J.JCSS.2010.10.002}
}

@article{dam/FeketeK01,
  author       = {S{\'{a}}ndor P. Fekete and
                  Jana Kremer},
  title        = {Tree spanners in planar graphs},
  journal      = {Discret. Appl. Math.},
  volume       = {108},
  number       = {1-2},
  pages        = {85--103},
  year         = {2001},
  doi          = {10.1016/S0166-218X(00)00226-2}
}

@article{dm/KozawaOY09,
  author       = {Kyohei Kozawa and
                  Yota Otachi and
                  Koichi Yamazaki},
  title        = {On spanning tree congestion of graphs},
  journal      = {Discret. Math.},
  volume       = {309},
  number       = {13},
  pages        = {4215--4224},
  year         = {2009},
  doi          = {10.1016/J.DISC.2008.12.021}
}

@article{dm/Law09,
  author       = {Hiu{-}Fai Law},
  title        = {Spanning tree congestion of the hypercube},
  journal      = {Discret. Math.},
  volume       = {309},
  number       = {23-24},
  pages        = {6644--6648},
  year         = {2009},
  doi          = {10.1016/J.DISC.2009.07.007}
}

@article{dm/LowensteinRR09,
  author       = {Christian L{\"{o}}wenstein and
                  Dieter Rautenbach and
                  Friedrich Regen},
  title        = {On spanning tree congestion},
  journal      = {Discret. Math.},
  volume       = {309},
  number       = {13},
  pages        = {4653--4655},
  year         = {2009},
  doi          = {10.1016/J.DISC.2009.01.012}
}

@article{dm/Ostrovskii10,
  author       = {Mikhail I. Ostrovskii},
  title        = {Minimum congestion spanning trees in planar graphs},
  journal      = {Discret. Math.},
  volume       = {310},
  number       = {6-7},
  pages        = {1204--1209},
  year         = {2010},
  doi          = {10.1016/J.DISC.2009.11.016}
}

@article{dmgt/KozawaO11,
  author       = {Kyohei Kozawa and
                  Yota Otachi},
  title        = {Spanning tree congestion of rook's graphs},
  journal      = {Discuss. Math. Graph Theory},
  volume       = {31},
  number       = {4},
  pages        = {753--761},
  year         = {2011},
  doi          = {10.7151/DMGT.1577}
}

@inproceedings{icalp/ChandranCI18,
  author       = {L. Sunil Chandran and
                  Yun Kuen Cheung and
                  Davis Issac},
  title        = {Spanning Tree Congestion and Computation of Generalized {Gy\"{o}ri}-{Lov\'{a}sz}
                  Partition},
  booktitle    = {45th International Colloquium on Automata, Languages, and Programming,
                  {ICALP} 2018},
  series       = {LIPIcs},
  volume       = {107},
  pages        = {32:1--32:14},
  publisher    = {Schloss Dagstuhl - Leibniz-Zentrum f{\"{u}}r Informatik},
  year         = {2018},
  doi          = {10.4230/LIPICS.ICALP.2018.32}
}

@PhdThesis{phd_thesis,
  author       = {L{\"o}wenstein, Christian},
  title        = {In the complement of a dominating set},
  year         = {2010},
  month        = {Aug},
  day          = {02},
  url          = {https://www.db-thueringen.de/receive/dbt_mods_00016280},
  school       = {Technische Universit{\"a}t Ilmenau, Germany}
}

@article{jgaa/BentertDKNN20,
  author       = {Matthias Bentert and
                  Alexander Dittmann and
                  Leon Kellerhals and
                  Andr{\'{e}} Nichterlein and
                  Rolf Niedermeier},
  title        = {An Adaptive Version of Brandes' Algorithm for Betweenness Centrality},
  journal      = {J. Graph Algorithms Appl.},
  volume       = {24},
  number       = {3},
  pages        = {483--522},
  year         = {2020},
  doi          = {10.7155/JGAA.00543}
}

@article{dam/CourcelleO00,
  author       = {Bruno Courcelle and
                  Stephan Olariu},
  title        = {Upper bounds to the clique width of graphs},
  journal      = {Discret. Appl. Math.},
  volume       = {101},
  number       = {1-3},
  pages        = {77--114},
  year         = {2000},
  doi          = {10.1016/S0166-218X(99)00184-5}
}

@article{siamdm/FellowsRRS09,
  author       = {Michael R. Fellows and
                  Frances A. Rosamond and
                  Udi Rotics and
                  Stefan Szeider},
  title        = {Clique-Width is {NP}-Complete},
  journal      = {{SIAM} J. Discret. Math.},
  volume       = {23},
  number       = {2},
  pages        = {909--939},
  year         = {2009},
  doi          = {10.1137/070687256}
}

@inproceedings{ChuL23,
  author       = {Huairui Chu and
                  Bingkai Lin},
  title        = {{FPT} Approximation Using Treewidth: Capacitated Vertex Cover, Target
                  Set Selection and Vector Dominating Set},
  booktitle    = {34th International Symposium on Algorithms and Computation, {ISAAC}
                  2023},
  series       = {LIPIcs},
  volume       = {283},
  pages        = {19:1--19:20},
  publisher    = {Schloss Dagstuhl - Leibniz-Zentrum f{\"{u}}r Informatik},
  year         = {2023},
  doi          = {10.4230/LIPICS.ISAAC.2023.19}
}

@article{dmtcs/Lampis26,
  author     = {Michael Lampis},
  title      = {Minimum Stable Cut and Treewidth},
  journal    = {Discret. Math. Theor. Comput. Sci.},
  volume     = {28},
  number     = {2},
  year       = {2026},
  doi        = {10.46298/dmtcs.10900}
}

@article{dmtcs/AngelBEL18,
  author       = {Eric Angel and
                  Evripidis Bampis and
                  Bruno Escoffier and
                  Michael Lampis},
  title        = {Parameterized Power Vertex Cover},
  journal      = {Discret. Math. Theor. Comput. Sci.},
  volume       = {20},
  number       = {2},
  year         = {2018},
  doi          = {10.23638/DMTCS-20-2-10}
}

@article{BelmonteLM20,
  author       = {R{\'{e}}my Belmonte and
                  Michael Lampis and
                  Valia Mitsou},
  title        = {Parameterized (Approximate) Defective Coloring},
  journal      = {{SIAM} J. Discret. Math.},
  volume       = {34},
  number       = {2},
  pages        = {1084--1106},
  year         = {2020},
  doi          = {10.1137/18M1223666}
}

@inproceedings{GajarskyLO13,
  author       = {Jakub Gajarsk{\'{y}} and
                  Michael Lampis and
                  Sebastian Ordyniak},
  title        = {Parameterized Algorithms for Modular-Width},
  booktitle    = {Parameterized and Exact Computation - 8th International Symposium,
                  {IPEC} 2013},
  series       = {Lecture Notes in Computer Science},
  volume       = {8246},
  pages        = {163--176},
  publisher    = {Springer},
  year         = {2013},
  doi          = {10.1007/978-3-319-03898-8_15}
}

@inproceedings{GajarskyLMMO15,
  author       = {Jakub Gajarsk{\'{y}} and
                  Michael Lampis and
                  Kazuhisa Makino and
                  Valia Mitsou and
                  Sebastian Ordyniak},
  title        = {Parameterized Algorithms for Parity Games},
  booktitle    = {Mathematical Foundations of Computer Science 2015 - 40th International
                  Symposium, {MFCS} 2015},
  series       = {Lecture Notes in Computer Science},
  volume       = {9235},
  pages        = {336--347},
  publisher    = {Springer},
  year         = {2015},
  doi          = {10.1007/978-3-662-48054-0_28}
}

@inproceedings{mfcs/DouchaK12,
  author       = {Martin Doucha and
                  Jan Kratochv{\'{\i}}l},
  title        = {Cluster Vertex Deletion: {A} Parameterization between Vertex Cover
                  and Clique-Width},
  booktitle    = {Mathematical Foundations of Computer Science 2012 - 37th International
                  Symposium, {MFCS} 2012},
  series       = {Lecture Notes in Computer Science},
  volume       = {7464},
  pages        = {348--359},
  publisher    = {Springer},
  year         = {2012},
  doi          = {10.1007/978-3-642-32589-2_32}
}

@inproceedings{nips/GanianK21,
  author       = {Robert Ganian and
                  Viktoriia Korchemna},
  title        = {The Complexity of Bayesian Network Learning: Revisiting the Superstructure},
  booktitle    = {Advances in Neural Information Processing Systems 34: Annual Conference
                  on Neural Information Processing Systems 2021, NeurIPS 2021},
  pages        = {430--442},
  year         = {2021},
  url          = {https://proceedings.neurips.cc/paper/2021/hash/040a99f23e8960763e680041c601acab-Abstract.html}
}

@inproceedings{wg/BrandCGHK22,
  author       = {Cornelius Brand and
                  Esra Ceylan and
                  Robert Ganian and
                  Christian Hatschka and
                  Viktoriia Korchemna},
  title        = {Edge-Cut Width: An Algorithmically Driven Analogue of Treewidth Based
                  on Edge Cuts},
  booktitle    = {Graph-Theoretic Concepts in Computer Science - 48th International
                  Workshop, {WG} 2022},
  series       = {Lecture Notes in Computer Science},
  volume       = {13453},
  pages        = {98--113},
  publisher    = {Springer},
  year         = {2022},
  doi          = {10.1007/978-3-031-15914-5_8}
}

@article{tcs/Bodlaender98,
  author       = {Hans L. Bodlaender},
  title        = {A Partial \emph{k}-Arboretum of Graphs with Bounded Treewidth},
  journal      = {Theor. Comput. Sci.},
  volume       = {209},
  number       = {1-2},
  pages        = {1--45},
  year         = {1998},
  doi          = {10.1016/S0304-3975(97)00228-4}
}

@inproceedings{stacs/Kolman25,
  author       = {Petr Kolman},
  title        = {Approximation of Spanning Tree Congestion Using Hereditary Bisection},
  booktitle    = {42nd International Symposium on Theoretical Aspects of Computer Science,
                  {STACS} 2025},
  series       = {LIPIcs},
  volume       = {327},
  pages        = {63:1--63:6},
  publisher    = {Schloss Dagstuhl - Leibniz-Zentrum f{\"{u}}r Informatik},
  year         = {2025},
  doi          = {10.4230/LIPICS.STACS.2025.63}
}

@inproceedings{mfcs/GanianHNOMR12,
  author       = {Robert Ganian and
                  Petr Hlinen{\'{y}} and
                  Jaroslav Nesetril and
                  Jan Obdrz{\'{a}}lek and
                  Patrice Ossona de Mendez and
                  Reshma Ramadurai},
  title        = {When Trees Grow Low: Shrubs and Fast {MSO$_1$}},
  booktitle    = {Mathematical Foundations of Computer Science 2012 - 37th International
                  Symposium, {MFCS} 2012.
                  Proceedings},
  series       = {Lecture Notes in Computer Science},
  volume       = {7464},
  pages        = {419--430},
  publisher    = {Springer},
  year         = {2012},
  doi          = {10.1007/978-3-642-32589-2_38}
}

@article{lmcs/GanianHNOM19,
  author       = {Robert Ganian and
                  Petr Hlinen{\'{y}} and
                  Jaroslav Nesetril and
                  Jan Obdrz{\'{a}}lek and
                  Patrice Ossona de Mendez},
  title        = {Shrub-depth: Capturing Height of Dense Graphs},
  journal      = {Log. Methods Comput. Sci.},
  volume       = {15},
  number       = {1},
  year         = {2019},
  doi          = {10.23638/LMCS-15(1:7)2019}
}

@article{dam/DarmannD21,
  author       = {Andreas Darmann and
                  Janosch D{\"{o}}cker},
  title        = {On simplified {NP}-complete variants of {Monotone 3-Sat}},
  journal      = {Discret. Appl. Math.},
  volume       = {292},
  pages        = {45--58},
  year         = {2021},
  doi          = {10.1016/J.DAM.2020.12.010}
}

@article{tcs/ChenKX10,
  author       = {Jianer Chen and
                  Iyad A. Kanj and
                  Ge Xia},
  title        = {Improved upper bounds for vertex cover},
  journal      = {Theor. Comput. Sci.},
  volume       = {411},
  number       = {40-42},
  pages        = {3736--3756},
  year         = {2010},
  doi          = {10.1016/J.TCS.2010.06.026}
}

@inproceedings{stacs/HarrisN24,
  author       = {David G. Harris and
                  N. S. Narayanaswamy},
  title        = {A Faster Algorithm for Vertex Cover Parameterized by Solution Size},
  booktitle    = {41st International Symposium on Theoretical Aspects of Computer Science,
                  {STACS} 2024},
  series       = {LIPIcs},
  volume       = {289},
  pages        = {40:1--40:18},
  publisher    = {Schloss Dagstuhl - Leibniz-Zentrum f{\"{u}}r Informatik},
  year         = {2024},
  doi          = {10.4230/LIPICS.STACS.2024.40}
}

@misc{arxiv/BojikianFGHS25,
  author       = {Narek Bojikian and
                  Alexander Firbas and
                  Robert Ganian and
                  Hung P. Hoang and
                  Krisztina Szil{\'{a}}gyi},
  title        = {Fine-Grained Complexity of Computing Degree-Constrained Spanning Trees},
  year         = {2025},
  eprint       = {2503.15226},
  archivePrefix= {arXiv},
  note         = {To appear in ICALP 2026}
}

@inproceedings{focs/ReisR23,
  author       = {Victor Reis and
                  Thomas Rothvoss},
  title        = {The Subspace Flatness Conjecture and Faster Integer Programming},
  booktitle    = {64th {IEEE} Annual Symposium on Foundations of Computer Science, {FOCS}
                  2023},
  pages        = {974--988},
  publisher    = {{IEEE}},
  year         = {2023},
  doi          = {10.1109/FOCS57990.2023.00060}
}

@inproceedings{mfcs/LampisMNOV025,
  author       = {Michael Lampis and
                  Valia Mitsou and
                  Edouard Nemery and
                  Yota Otachi and
                  Manolis Vasilakis and
                  Daniel Vaz},
  title        = {Parameterized Spanning Tree Congestion},
  booktitle    = {50th International Symposium on Mathematical Foundations of Computer
                  Science, {MFCS} 2025},
  series       = {LIPIcs},
  volume       = {345},
  pages        = {65:1--65:20},
  publisher    = {Schloss Dagstuhl - Leibniz-Zentrum f{\"{u}}r Informatik},
  year         = {2025},
  doi          = {10.4230/LIPICS.MFCS.2025.65}
}

@article{jcss/BergougnouxBFGRS25,
  author       = {Benjamin Bergougnoux and
                  Nello Blaser and
                  Michael R. Fellows and
                  Petr A. Golovach and
                  Frances A. Rosamond and
                  Emmanuel Sam},
  title        = {On the parameterized complexity of lineal topologies (depth-first
                  spanning trees) with many or few leaves},
  journal      = {J. Comput. Syst. Sci.},
  volume       = {154},
  pages        = {103680},
  year         = {2025},
  doi          = {10.1016/J.JCSS.2025.103680}
}

@article{dam/LinL25a,
  author       = {Lan Lin and
                  Yixun Lin},
  title        = {The spanning tree congestion problem on interval graphs},
  journal      = {Discret. Appl. Math.},
  volume       = {377},
  pages        = {147--153},
  year         = {2025},
  doi          = {10.1016/J.DAM.2025.06.054}
}

@misc{arxiv/Otachi26,
  author       = {Yota Otachi},
  title        = {Spanning tree congestion of proper interval graphs},
  journal      = {CoRR},
  volume       = {abs/2602.13756},
  year         = {2026},
  url          = {https://doi.org/10.48550/arXiv.2602.13756},
  doi          = {10.48550/ARXIV.2602.13756},
  eprinttype   = {arXiv},
  eprint       = {2602.13756},
  timestamp    = {Fri, 27 Mar 2026 13:16:59 +0100},
  biburl       = {https://dblp.org/rec/journals/corr/abs-2602-13756.bib},
  bibsource    = {dblp computer science bibliography, https://dblp.org}
}

@misc{arxiv/AtaligCDKLSZ26,
  author       = {Sunny Atalig and
                  Marek Chrobak and
                  Christoph D{\"{u}}rr and
                  Petr Kolman and
                  Huong Luu and
                  Jir{\'{\i}} Sgall and
                  Gregory Zhu},
  title        = {Two Complexity Results on Spanning-Tree Congestion Problems},
  year         = {2026},
  eprinttype   = {arXiv},
  eprint       = {2601.10881},
  note         = {To appear in LATIN 2026}
}

\end{document}